\title{A finite alternation result for reversible boolean circuits}
\author{Peter Selinger}
\date{Dalhousie University}
\begin{document}
\maketitle

\begin{abstract}
  We say that a reversible boolean function on $n$ bits has {\em
    alternation depth} $\dee$ if it can be written as the sequential
  composition of $\dee$ reversible boolean functions, each of which
  acts only on the top $n-1$ bits or on the bottom $n-1$
  bits. Moreover, if the functions on $n-1$ bits are even, we speak of
  {\em even alternation depth}. We show that every even reversible
  boolean function of $n\geq 4$ bits has alternation depth at most 9
  and even alternation depth at most 13.
\end{abstract}

\section{Introduction}

A reversible boolean function on $n$ bits is a permutation of
$\s{0,1}^n$.  It is well-known that the NOT, controlled NOT, and
Toffoli gates form a universal gate set for reversible boolean
functions {\cite{Toffoli1980,Musset1997,DVRS2002}}. More precisely,
these gates generate (via the operations of composition and cartesian
product, and together with the identity functions) all reversible
boolean functions on $n$ bits, when $n\leq 3$, and all even reversible
boolean functions on $n$ bits, when $n\geq 4$. A particular
representation of a reversible boolean function in terms of these
generators is called a {\em reversible circuit}. The problem of
finding a (preferably short) circuit to implement a given reversible
function is called the {\em synthesis problem} {\cite{SM2013}}.

When working with reversible boolean functions and circuits, it is not
typically possible to reason inductively; we cannot usually reduce a
problem about circuits on $n$ bits to a problem about circuits on
$n-1$ bits. In this paper, we prove a theorem that may, in some cases,
make such inductive reasoning possible: we prove that when $n\geq 4$,
every even reversible function on $n$ bits can be decomposed into at
most 9 reversible functions on $n-1$ bits:
\begin{equation}\label{eqn-circuit}
  \m{\begin{qcircuit}[scale=0.6]
      \grid{19}{0,1,2,3};
      \vdotslabel{0.5,1};
      \biggate{}{1.5,0}{1.5,2};
      \vdotslabel{2.5,1};
      \biggate{}{3.5,1}{3.5,3};
      \vdotslabel{4.5,1};
      \biggate{}{5.5,0}{5.5,2};
      \vdotslabel{6.5,1};
      \biggate{}{7.5,1}{7.5,3};
      \vdotslabel{8.5,1};
      \biggate{}{9.5,0}{9.5,2};
      \vdotslabel{10.5,1};
      \biggate{}{11.5,1}{11.5,3};
      \vdotslabel{12.5,1};
      \biggate{}{13.5,0}{13.5,2};
      \vdotslabel{14.5,1};
      \biggate{}{15.5,1}{15.5,3};
      \vdotslabel{16.5,1};
      \biggate{}{17.5,0}{17.5,2};
      \vdotslabel{18.5,1};
    \end{qcircuit}.}
\end{equation}
If, moreover, each of the functions on $n-1$ bits is also required to
be even, we prove that a decomposition into 13 such functions is
possible.  It is of course not remarkable that $n$-bit circuits can be
decomposed into $(n-1)$-bit circuits: after all, we already know that
they can be decomposed into $3$-bit circuits, namely gates.  What is
perhaps remarkable is that the bound 9 (respectively, 13) on the depth
is independent of $n$.

There are some potential applications of such a result --- although
admittedly, they may not be very practical. As a first application,
one may obtain an alternative proof of universality, by turning any
universal gate set on $n$ bits into a universal gate set on $n+1$
bits, provided that $n\geq 3$. This also yields a new method for
circuit synthesis: given a good procedure for synthesizing even
$n$-bit circuits, we obtain a procedure for synthesizing even
$(n+1)$-bit circuits that is at most $13$ times worse. By applying
this idea recursively, we obtain circuits of size $O(13^n)$ for any
reversible function on $n$ bits. This is worse than what can be
obtained by other methods. However, it may be possible to improve this
procedure further, for example by noting that the 13 subcircuits need
not be completely general; they can be chosen to be of particular
forms, which may be easier to synthesize recursively.

Another potential application is the presentation of (even) reversible
boolean functions by generators and relations. While the NOT, CNOT,
and Toffoli gates are a well-known set of generators, to the author's
knowledge, no complete set of relations for these generators is
known. For any given $n$, the group of $n$-bit reversible functions is
a finite group, so finding a complete set of relations for any fixed
$n$ is a finite (although very large) problem. However, it is not
trivial to find a set of relations that works for all $n$; at present,
it is not even known whether the theory is finitely axiomatizable. If
we had a procedure for rewriting every circuit into one of the form
{\eqref{eqn-circuit}}, then we could obtain a complete set of
relations for $n$-bit circuits by considering (a) a complete set of
relations for $(n-1)$-bit circuits, (b) the relations required to do
the rewriting, and (c) any relations required to prove equalities
between circuits of the form {\eqref{eqn-circuit}}. In particular, if
it could be shown that a finite set of relations is sufficient for
(b) and (c), a finite equational presentation of reversible boolean
functions could be derived.

Finally, the task of realizing a given permutation with low
alternation depth can also make for an entertaining puzzle. Such a
puzzle has been implemented and is available from {\cite{puzzle}}.

\section{Statement of the main result}

We write $\S(X)$ for the group of permutations of a finite set $X$.
For $f\in\S(X)$ and $g\in\S(Y)$, let $f\times g\in\S(X\times Y)$ be
the permutation defined componentwise by
$(f\times g)(x,y) = (f(x),g(y))$.  We also write $\id_X\in\S(X)$ for
the identity permutation on $X$.  Recall that a permutation is {\em
  even} if it can be written as a product of an even number of
2-cycles.

Let $2=\{0,1\}$ be the set of booleans, which we identify with the
binary digits $0$ and $1$. By abuse of notation, we also write
$2 = \id_2$ for the identity permutation on the set $2$.

\begin{definition}
  Let $A$ be a finite set, and let $\sigma\in\S(2\times A\times 2)$ be
  a permutation. We say that $\sigma$ has {\em alternation depth} $\dee$
  if it can be written as a product of $\dee$ factors
  $\sigma=\sigma_1\sigma_2\cdots\sigma_\dee$, where each factor
  $\sigma_i$ is either of the form $f\times 2$ for some
  $f\in\S(2\times A)$ or of the form $2\times g$ for some
  $g\in\S(A\times 2)$.
\end{definition}

The purpose of this paper is to prove the following theorem:

\begin{theorem}\label{thm-main}
  Let $A$ be a finite set of 3 or more elements. Then every even
  permutation $\sigma \in \S(2\times A\times 2)$ has alternation depth
  9.
\end{theorem}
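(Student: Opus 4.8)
The plan is to work throughout with the two subgroups $L=\s{f\times 2 : f\in\S(2\times A)}$ and $R=\s{2\times g : g\in\S(A\times 2)}$ of $\S(2\times A\times 2)$, so that exhibiting alternation depth $9$ amounts to writing $\sigma$ as a length-$9$ word that alternates between $L$ and $R$, say $\sigma=l_1 r_2 l_3 r_4 l_5 r_6 l_7 r_8 l_9$. First I would record the basic repertoire of depth-$1$ moves. Because $L\cong\S(2\times A)$ and $R\cong\S(A\times 2)$ are full symmetric groups, a single $L$-factor can apply an arbitrary permutation of $A$ uniformly in the outer bits, can flip the first bit, and in particular can apply any permutation of $A$ that is \emph{controlled by the first bit}; symmetrically, a single $R$-factor does the same things relative to the last bit. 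I would also observe that every element of $L$ and of $R$ is an even permutation of $2\times A\times 2$ (a transposition of $\S(2\times A)$ becomes a product of two disjoint transpositions after $\times\,2$), so $\langle L,R\rangle$ is contained in the alternating group; evenness of $\sigma$ is thus necessary, and the theorem asserts it is sufficient using only nine alternations.

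The key technical device I would develop is a way to defeat the ``uniformity'' built into $L$ and $R$, producing permutations localized to a single one of the four cells $C_{a,c}=\s{(a,b,c):b\in A}$ indexed by the outer bits. For this I would use a commutator: if $P\in L$ applies $v\in\S(A)$ exactly when the first bit is $1$ and $Q\in R$ applies $u\in\S(A)$ exactly when the last bit is $1$, then a direct check shows that $PQP^{-1}Q^{-1}$ fixes three cells and applies $[v,u]$ to the cell $C_{1,1}$. This word has alternation depth $4$, and here I expect the hypothesis $\card A\geq 3$ to enter: by choosing $u,v$ appropriately $[v,u]$ can be made any $3$-cycle, and more generally any even permutation of $A$ (every even permutation being a commutator in $\S(A)$), so the construction realizes an arbitrary even permutation of a single cell's copy of $A$. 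Conjugating by the depth-$1$ bit flips then moves this localized action to any chosen cell.

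With localization in hand I would next realize an arbitrary \emph{cell-diagonal} permutation $(a,b,c)\mapsto(a,h_{a,c}(b),c)$, one permutation $h_{a,c}\in\S(A)$ per cell, in bounded depth. Rather than multiplying four localized factors (which wastes depth), I would solve directly for the factors of a short alternating word of first-bit- and last-bit-controlled $A$-permutations: writing the accumulated action on cell $C_{a,c}$ as an alternating product such as $r_a\,q_c\,p_a$ and lengthening the word turns the problem into a system of conjugacy equations in $\S(A)$, solvable thanks to the room afforded by $\card A\geq 3$ and subject only to the parity constraint $\prod_{a,c}\operatorname{sgn} h_{a,c}=1$ forced by evenness. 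The main obstacle, which I would attack last, is the \emph{reduction} of a completely general even $\sigma$ to cell-diagonal form: a priori $\sigma$ scrambles the four cells arbitrarily, so I must prepend a bounded-depth word restoring the outer bits, first arranging that $\sigma$ preserve the last bit and then the first. I expect to handle this by a routing/Hall's-theorem argument—pairing up the elements $\sigma$ carries across a slice and correcting them with a single controlled ``swap'' layer—combined with the localization lemma. Keeping this correction inside the depth budget, so that the reduction together with the cell-diagonal realization totals exactly $9$ (and explaining why the first/last asymmetry makes $9$, rather than $8$, the natural count), is the crux of the whole argument.
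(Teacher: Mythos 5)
Your opening moves are sound: evenness is indeed necessary (and for the reason you give), and the commutator gadget is correct --- if $P\in L$ applies $v$ controlled on the first bit and $Q\in R$ applies $u$ controlled on the last bit, then $PQP\inv Q\inv$ is a depth-4 word acting as $[v,u]$ on the cell $C_{1,1}$ and trivially elsewhere, and since every even permutation of $A$ is a commutator in $\S(A)$, this localizes an arbitrary even permutation of $A$ to one cell. But the two steps that would actually prove the theorem are both deferred rather than carried out, and I do not think the plan as stated can close. First, the realization of an arbitrary cell-diagonal permutation $(h_{0,0},h_{0,1},h_{1,0},h_{1,1})$ with $\prod\operatorname{sgn}h_{a,c}=1$ is asserted to follow from ``a system of conjugacy equations\ldots solvable''; no such solution is given, and at the depths your budget would allow (at most 5 or so) it is genuinely unclear that the four simultaneous word equations $h_{a,c}=\cdots v^{(i)}_a u^{(j)}_c\cdots$ can be solved. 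Second, and more seriously, your chosen normal form is too strong. Reducing a general even $\sigma$ to cell-diagonal form means making it preserve \emph{both} outer bits, i.e.\ standardizing the partition of $2\times A\times 2$ into four cells. The paper only standardizes the partition by the \emph{first} bit (a fair 2-coloring), and even that costs alternation depth 4 via a five-stage pipeline (fair $\to$ nearly symmetric $\to$ nearly standard $\to$ regular $\to$ symmetric $\to$ standard). Fixing the second bit as well would require at least a second round of the same kind, and together with any nontrivial realization step this overshoots 9. You flag this yourself as ``the crux,'' but the crux is exactly what is missing.

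The paper escapes this bind by choosing a much weaker normal form and a stronger realization lemma, neither of which appears in your proposal. After the depth-4 coloring step, the residual permutation only preserves the first bit, i.e.\ it is of the form $g+h$ with $g,h\in\S(A\times 2)$ \emph{arbitrary} --- it may still mix the two values of the last bit freely. Such a permutation is handled in depth 5 by writing $g+h=(\id+hg\inv)(2\times g)$ and then using two ideas: (i) every even permutation of a set with at least 5 elements is a product of two \emph{balanced} permutations (each cycle length occurring an even number of times), and (ii) a balanced $\tau\in\S(A\times 2)$ is conjugate to a permutation of the form $h\times 2$, whence $\id+\tau=(2\times g\inv)(f\times 2)(2\times g)$ exactly, a depth-3 identity. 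Your commutator gadget produces single-cell permutations one at a time, which is far less depth-efficient than this conjugacy trick; if you want to salvage your approach, the thing to look for is a replacement for (i) and (ii), not a sharper routing argument for the four-cell partition.
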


In circuit notation, Theorem~\ref{thm-main} can be understood as
stating that every reversible boolean function on the set
$2\times A\times 2$ can be expressed as a circuit in the following
form:
\begin{equation}\label{eqn-circuit2}
  \m{\begin{qcircuit}[scale=0.6]
      \grid{19}{0,1,2};
      \leftlabel{$x$}{0,2};
      \leftlabel{$a$}{0,1};
      \leftlabel{$y$}{0,0};
      \multiwire{0.5,1};
      \biggate{}{1.5,0}{1.5,1};
      \multiwire{2.5,1};
      \biggate{}{3.5,1}{3.5,2};
      \multiwire{4.5,1};
      \biggate{}{5.5,0}{5.5,1};
      \multiwire{6.5,1};
      \biggate{}{7.5,1}{7.5,2};
      \multiwire{8.5,1};
      \biggate{}{9.5,0}{9.5,1};
      \multiwire{10.5,1};
      \biggate{}{11.5,1}{11.5,2};
      \multiwire{12.5,1};
      \biggate{}{13.5,0}{13.5,1};
      \multiwire{14.5,1};
      \biggate{}{15.5,1}{15.5,2};
      \multiwire{16.5,1};
      \biggate{}{17.5,0}{17.5,1};
      \multiwire{18.5,1};
    \end{qcircuit}.
  }
\end{equation}
Here, the lines labelled $x$ and $y$ each represent a bit, and the
line labelled $a$ represents an element of the set $A$. The case of
boolean circuits arises as the special case where the cardinality of
$A$ is a power of $2$.

\begin{remark}\label{rem-even}
  The evenness of $\sigma$ is a necessary condition for
  Theorem~\ref{thm-main}, because all permutations of the forms
  $f\times 2$ and $2\times g$ are even, and therefore only even
  permutations can have an alternation depth.
\end{remark}

\begin{remark}\label{rem-even-main}
  Our definition of alternation depth does not require that the
  permutations $f\in\S(2\times A)$ and $g\in\S(A\times 2)$ are
  themselves even. However, if Theorem~\ref{thm-main} is to be applied
  recursively (as required, for example, by the potential applications
  mentioned in the introduction), we need each $f$ and $g$ to be
  even. Since the proof of Theorem~\ref{thm-main} is already
  complicated enough without this restriction, we do not consider the
  case of even $f$ and $g$ until Section~\ref{sec-even}, where we
  prove that the alternation depth is at most 13 in that case.
\end{remark}

Our proof of Theorem~\ref{thm-main} is in two parts. In
Section~\ref{sec-first}, we will show that every even permutation of a
certain form $g+h$ has alternation depth 5.  In
Section~\ref{sec-second}, we will show that every even permutation can
be decomposed into a permutation of alternation depth 4 and a
permutation of the form $g+h$. Together, these results imply
Theorem~\ref{thm-main}.

\section{First construction: balanced permutations}\label{sec-first}

\subsection{Preliminaries}\label{ssec-prelim}

We fix some terminology. The {\em support} of a permutation
$\sigma\in\S(X)$ is the set
$\supp\sigma = \s{x\in X\mid \sigma(x)\neq x}$.  Two permutations
$\sigma,\tau\in\S(X)$ are {\em disjoint} if
$\supp\sigma\cap\supp\tau = \emptyset$. In this case, $\sigma$ and
$\tau$ commute: $\sigma\tau = \tau\sigma$. We also call $\sigma\tau$ a
{\em disjoint product} in this case.  Recall the cycle notation for
permutations: for $\kay>1$, we write $(a_1\;a_2\;\ldots\;a_\kay)$, or
sometimes $(a_1,a_2,\ldots,a_\kay)$, for the permutation with support
$\s{a_1,\ldots,a_\kay}$ defined by $a_1\mapsto a_2$, $a_2\mapsto a_3$,
\ldots, $a_{\kay-1}\mapsto a_\kay$ and $a_\kay\mapsto a_1$. Such a
permutation is also called a {\em $\kay$-cycle}.  Every permutation
can be uniquely decomposed (up to the order of the factors) into a
product of disjoint cycles.  A $\kay$-cycle is even if and only if
$\kay$ is odd.

Two permutations $\sigma,\sigma'\in\S(X)$ are {\em similar}, in symbols
$\sigma\sim\sigma'$, if there exists $\tau$ such that
$\sigma'=\tau\inv\sigma\tau$. It is easy to see that $\sigma,\sigma'$
are similar if and only if their cycle decompositions contain an equal
number of $\kay$-cycles for every $\kay$.

If $g,h\in\S(X)$ are permutations on some finite set $X$, we define
their {\em disjoint sum} $g+h\in\S(2\times X)$ by
\[ (g+h)(0,x) = (0,g(x)) \quad\mbox{and}\quad 
(g+h)(1,x) = (1,h(x)).
\]
We note the following properties:
\begin{eqnarray}
  g+g &=& 2\times g, \label{eqn-1} \\
  (g+h)\times 2 &=& g\times 2 + h\times 2, \label{eqn-2} \\
  (g+h)(g'+h') &=& gg' + hh'. \label{eqn-3}
\end{eqnarray}
Property {\eqref{eqn-1}} also helps explain our choice of
writing ``$2$'' for the identity permutation in $\S(2)$.

\begin{figure}
  \[
  \begin{tikzpicture}
    \begin{scope}
      \path (-0.35, 2.2) node {(a)};
      \begin{scope}[rounded corners=2pt,fill=blue!10,draw=blue!80]
        \filldraw (0,0) rectangle +(1.3,1);
        \path (0.65,0.5) node[color=blue!80] {$A$};
        \filldraw (1.5,0) rectangle +(1.3,1);
        \path (2.15,0.5) node[color=blue!80] {$A$};
        \filldraw (0,1.2) rectangle +(1.3,1);
        \path (0.65,1.7) node[color=blue!80] {$A$};
        \filldraw (1.5,1.2) rectangle +(1.3,1);
        \path (2.15,1.7) node[color=blue!80] {$A$};
      \end{scope}
      \begin{scope}[shift={(0.65,1.1)}]
        \draw[->,xscale=0.4] (-65:1cm) arc (-65:245:1cm);
        \path[xscale=0.4] (140:1cm) node[left=-1pt] {$f$};
      \end{scope}
      \begin{scope}[shift={(2.15,1.1)}]
        \draw[->,xscale=0.4] (-65:1cm) arc (-65:245:1cm);
        \path[xscale=0.4] (140:1cm) node[left=-1pt] {$f$};
      \end{scope}
      \path (1.4,-0.4) node {$f\times 2$};
    \end{scope}
    \begin{scope}[shift={(4,0)}]
      \path (-0.35, 2.2) node {(b)};
      \begin{scope}[rounded corners=2pt,fill=blue!10,draw=blue!80]
        \filldraw (0,0) rectangle +(1.3,1);
        \path (0.65,0.5) node[color=blue!80] {$A$};
        \filldraw (1.5,0) rectangle +(1.3,1);
        \path (2.15,0.5) node[color=blue!80] {$A$};
        \filldraw (0,1.2) rectangle +(1.3,1);
        \path (0.65,1.7) node[color=blue!80] {$A$};
        \filldraw (1.5,1.2) rectangle +(1.3,1);
        \path (2.15,1.7) node[color=blue!80] {$A$};
      \end{scope}
      \begin{scope}[shift={(1.4,0.5)}]
        \draw[->,yscale=0.4,xscale=1.1] (-65:1cm) arc (-65:245:1cm);
        \path[yscale=0.4,xscale=1.1] (160:1cm) node[left=1pt] {$g$};
      \end{scope}
      \begin{scope}[shift={(1.4,1.7)}]
        \draw[->,yscale=0.4,xscale=1.1] (-65:1cm) arc (-65:245:1cm);
        \path[yscale=0.4,xscale=1.1] (160:1cm) node[left=1pt] {$g$};
      \end{scope}
      \path (1.4,-0.4) node {$2\times g$};
    \end{scope}
    \begin{scope}[shift={(8,0)}]
      \path (-0.35, 2.2) node {(c)};
      \begin{scope}[rounded corners=2pt,fill=blue!10,draw=blue!80]
        \filldraw (0,0) rectangle +(1.3,1);
        \path (0.65,0.5) node[color=blue!80] {$A$};
        \filldraw (1.5,0) rectangle +(1.3,1);
        \path (2.15,0.5) node[color=blue!80] {$A$};
        \filldraw (0,1.2) rectangle +(1.3,1);
        \path (0.65,1.7) node[color=blue!80] {$A$};
        \filldraw (1.5,1.2) rectangle +(1.3,1);
        \path (2.15,1.7) node[color=blue!80] {$A$};
      \end{scope}
      \begin{scope}[shift={(1.4,0.5)}]
        \draw[->,yscale=0.4,xscale=1.1] (-65:1cm) arc (-65:245:1cm);
        \path[yscale=0.4,xscale=1.1] (160:1cm) node[left=1pt] {$g$};
      \end{scope}
      \begin{scope}[shift={(1.4,1.7)}]
        \draw[<-,yscale=0.4,xscale=1.1] (25:1cm) arc (25:335:1cm);
        \path[yscale=0.4,xscale=1.1] (160:1cm) node[left=1pt] {$h$};
      \end{scope}
      \path (1.4,-0.4) node {$g+h$};
    \end{scope}
  \end{tikzpicture}
  \]
  \caption{Visualizing permutations of $2\times A\times 2$}\label{fig-square}
\end{figure}
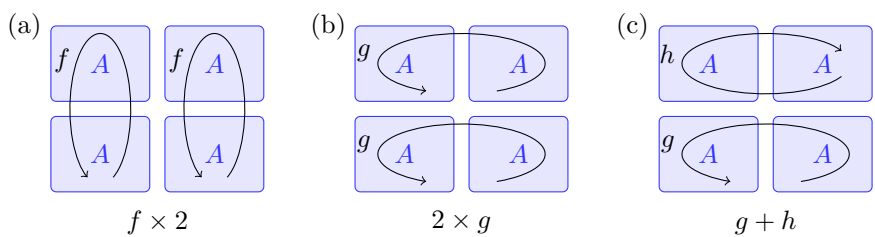

Although it will not be strictly necessary for the proofs that follow
(which are combinatorial), it may sometimes be helpful to visualize
sets of the form $2\times A\times 2$, and permutations thereon, as
follows.  We visualize the set $2\times A$ as two copies of $A$
stacked vertically, with elements of the form $(0,a)$ and $(1,a)$
belonging to the lower and upper copy, respectively. Similarly, we
visualize the set $A\times 2$ as two copies of $A$ side by side, with
elements of the form $(a,0)$ and $(a,1)$ belonging to the left and
right copy, respectively. In the same vein, we visualize the set
$2\times A\times 2$ as four copies of $A$ arranged in two rows and
columns. The effect of a permutations of the form $f\times 2$ is to
apply $f$ separately to the left and right column, as shown in
Figure~\ref{fig-square}(a). Similarly, the effect of $2\times g$ is to
apply $g$ separately to the top and bottom rows, and the effect of
$g+h$ is to apply $g$ to the bottom row and $h$ to the top row, as
shown in Figure~\ref{fig-square}(b) and~(c).

\subsection{Decomposition into balanced permutations}

\begin{definition}
  A permutation $\sigma$ is {\em balanced} if the number of $\kay$-cycles
  in its cycle decomposition is even for all $\kay\geq 2$. Moreover,
  $\sigma$ is {\em nearly balanced} if the number of $\kay$-cycles in its
  cycle decomposition is even for all $\kay\geq 3$.
\end{definition}

For example, the permutation $(1\;2)(3\;4)(5\;6\;7)(8\;9\;10)$ is
balanced, the permutation $(1\;2)(3\;4\;5)(6\;7\;8)$ is nearly
balanced, and $(1\;2)(3\;4)(5\;6\;7)$ is neither balanced nor nearly
balanced.

\begin{remark}\label{rem-balanced}
  The disjoint product of any number of (nearly) balanced permutations
  is (nearly) balanced. Moreover, a nearly balanced permutation is
  balanced if and only if it is even.
\end{remark}

The purpose of this subsection is to prove that every even permutation on
a set of 5 or more elements can be decomposed into a product of two
balanced permutations. This will be Proposition~\ref{pro-balanced} below.
The proof requires a sequence of lemmas.

\begin{lemma}\label{lem-decomposition-1}
  Let $\sigma$ be a $\kay$-cycle, where $\kay\geq 2$ and $\kay\neq 3$. Then
  there exists a balanced permutation $\rho$ and a nearly balanced
  permutation $\tau$ such that $\sigma=\tau\rho$. Moreover,
  $\supp\tau\cup\supp\rho\seq\supp\sigma$.
\end{lemma}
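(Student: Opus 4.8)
The plan is to realize the $\kay$-cycle $\sigma$ as a rotation in a dihedral group and exploit the fact that products of disjoint transpositions are automatically nearly balanced. The first observation is that any permutation whose cycle decomposition consists solely of disjoint transpositions and fixed points has no cycles of length $\geq 3$, so it is nearly balanced no matter what; and it is moreover balanced precisely when the number of its transpositions is even. So it suffices to produce $\sigma=\tau\rho$ in which $\rho$ is balanced and $\tau$ is a product of disjoint transpositions. To this end I would regard $\supp\sigma$ as the vertices of a regular $\kay$-gon on which $\sigma$ acts as the basic rotation, and let $D$ be the dihedral group it generates. The key point is that for any reflection $\rho\in D$ the quotient $\tau=\sigma\rho\inv$ is again a reflection of the $\kay$-gon (a rotation times a reflection is a reflection); since the reflections of the $\kay$-gon are exactly the products of disjoint transpositions in $D$, such a $\tau$ is automatically nearly balanced and supported inside $\supp\sigma$. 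The whole problem thus reduces to choosing a reflection $\rho\in D$ having an \emph{even} number of transpositions.

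The second step is the elementary bookkeeping of transposition counts. When $\kay$ is even, $D$ has reflections of two kinds: those fixing two opposite vertices, with $(\kay-2)/2$ transpositions, and those fixing none, with $\kay/2$ transpositions; these counts are consecutive integers, so one of them is even and a balanced reflection $\rho$ exists. When $\kay$ is odd every reflection fixes one vertex and has $(\kay-1)/2$ transpositions, which is even exactly when $\kay\equiv 1\pmod 4$. In each of these cases I take $\rho$ to be a balanced reflection and $\tau=\sigma\rho\inv$ the complementary reflection, settling the lemma for every $\kay$ except $\kay\equiv 3\pmod 4$; note that $\kay=3$ is the smallest such value, which is presumably why it is excluded from the statement.

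The remaining case $\kay\equiv 3\pmod 4$ is the crux, and here the dihedral construction genuinely fails: every reflection of the $\kay$-gon has $(\kay-1)/2$ transpositions, an odd number, so no reflection is balanced. Since $\sigma$ is now even, any balanced $\rho$ forces $\tau=\sigma\rho\inv$ to be even and hence, by Remark~\ref{rem-balanced}, balanced as well; so both factors must be balanced, and at least one of them must leave the reflection framework entirely. My plan is to absorb the unit parity defect into a pair of $3$-cycles. For the base case $\kay=7$ one verifies directly that $(0\,1\,2\,3\,4\,5\,6)=(0\,2\,6)(3\,4\,5)\cdot(0\,1)(2\,5)$, where $\tau=(0\,2\,6)(3\,4\,5)$ and $\rho=(0\,1)(2\,5)$ are both balanced and supported in $\supp\sigma$. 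For larger $\kay\equiv 3\pmod 4$ I would combine such a $3$-cycle gadget with the balanced-reflection construction on the remaining vertices---for instance peeling off a $5$-cycle to write $\sigma$ as the product of a $5$-cycle and a $(\kay-4)$-cycle (again $\equiv 3\pmod 4$) sharing a single vertex, and recursing down to $\kay=7$.

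The step I expect to be the genuine obstacle is this last assembly: arranging the $3$-cycle gadget and the reflection (or recursive) pieces so that their product is a single $\kay$-cycle with \emph{both} factors simultaneously balanced and supported within $\supp\sigma$. The sharing of a vertex between the peeled pieces means the four balanced sub-factors cannot simply be regrouped, so the decomposition has to be chosen with the overlap in mind, and verifying that the spliced factors retain their cycle-type parities is where the real care lies. By contrast, the even and $\kay\equiv 1\pmod 4$ cases are routine once the reflection observation is in hand.
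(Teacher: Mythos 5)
Your dihedral-reflection argument handles the even case and the case $\kay\equiv 1\pmod 4$ correctly (a balanced reflection $\rho$ exists, and $\tau=\sigma\rho\inv$ is a reflection, hence a disjoint product of transpositions, hence nearly balanced, with support inside $\supp\sigma$), and your explicit factorization for $\kay=7$ checks out. But the case $\kay\equiv 3\pmod 4$ with $\kay\geq 11$ is a genuine gap, and you say so yourself: the plan of peeling off a $5$-cycle sharing a vertex with a $(\kay-4)$-cycle and recursing leaves exactly the hard part unresolved, namely how to regroup the four balanced sub-factors into \emph{two} balanced factors when their supports overlap at the shared vertex. As written, the proof covers only $\kay$ even, $\kay\equiv 1\pmod 4$, and $\kay=7$.

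The gap is real but fillable, and the paper's proof shows the dihedral detour (and the mod-$4$ case split it forces) is unnecessary. For odd $\kay=2t+1$ with $t\geq 2$ the paper takes
$\rho=(a_{1}\,a_{2}\,\ldots\,a_{t})(a_{t+1}\,a_{t+3}\,a_{t+4}\,\ldots\,a_{2t+1})$, a disjoint product of two $t$-cycles, and $\tau=(a_{1}\,a_{t+1})(a_{t+2}\,a_{t+3})$, a disjoint product of two transpositions; both are balanced, $\sigma=\tau\rho$, and the supports are contained in $\supp\sigma$. This single construction works uniformly for all odd $\kay\geq 5$, so no distinction between $1$ and $3$ modulo $4$ is needed. (For even $\kay=2t$ the paper similarly splits $\sigma$ into two $t$-cycles corrected by a single transposition, which differs from your reflection pair but proves the same thing.) The lesson is that $\rho$ need not be an involution: allowing a pair of equal-length longer cycles absorbs the parity obstruction that kills the reflection approach when $(\kay-1)/2$ is odd.
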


\begin{proof}
  Let $\sigma=(a_{1}\; a_{2}\; \ldots\; a_{\kay})$. If $\kay=2t$ is even,
  let
  \[ \begin{array}{r@{~}c@{~}l}
    \rho &=& (a_{1}\; a_{2}\; \ldots\; a_{t})(a_{t+1}\; a_{t+2}\;
    \ldots\; a_{2t}),
    \\
    \tau &=& (a_{1}\;a_{t+1}).
  \end{array}
  \]
  If $\kay=2t+1$ is odd (and therefore, by assumption, $t\geq 2$), let
  \[ \begin{array}{r@{~}c@{~}l}
    \rho &=& (a_{1}\; a_{2}\; \ldots\; a_{t})(a_{t+1}\; a_{t+3}\;
    a_{t+4}\; \ldots\; a_{2t+1}),
    \\
    \tau &=& (a_{1}\;a_{t+1})(a_{t+2}\; a_{t+3}).
  \end{array}
  \]
  In both cases, the conclusion of the lemma is satisfied.
\end{proof}

\begin{lemma}\label{lem-decomposition-2}
  Let $\sigma$ be the disjoint product of a 3-cycle and a $\kay$-cycle,
  where $\kay\geq 2$. Then there exists a balanced permutation $\rho$ and
  a nearly balanced permutation $\tau$ such that
  $\sigma=\tau\rho$. Moreover, $\supp\tau\cup\supp\rho\seq\supp\sigma$.
\end{lemma}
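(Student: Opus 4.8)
The plan is to handle the disjoint product $\sigma = \gamma\kappa$ of a 3-cycle $\gamma$ and a $\kay$-cycle $\kappa$ by merging them into a single cycle whose length is neither 3 nor (I hope) even, so that Lemma~\ref{lem-decomposition-1} can be invoked. Concretely, if $\gamma = (a_1\;a_2\;a_3)$ and $\kappa = (b_1\;b_2\;\ldots\;b_\kay)$ are disjoint, then their product is \emph{similar} to a single $(\kay+3)$-cycle: one can write $\gamma\kappa$ as a product with a transposition that splices the two cycles together, using the standard identity $(a_1\;a_2\;a_3)(b_1\;\ldots\;b_\kay) = (a_1\;a_2\;a_3\;b_1\;\ldots\;b_\kay)\,(a_3\;b_1)$ (or the analogous merge), so that $\sigma$ equals a $(\kay+3)$-cycle times a single transposition. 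Since $\kay+3 \geq 5$ and in particular $\kay+3\neq 3$, I would apply Lemma~\ref{lem-decomposition-1} to that long cycle to obtain $\sigma' = \tau'\rho'$ with $\rho'$ balanced and $\tau'$ nearly balanced.

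The subtlety is that this introduces an extra transposition, which is a $2$-cycle and therefore \emph{not} nearly balanced on its own, so I cannot simply absorb it into $\tau'$ without checking parity. The cleanest route, I expect, is to case-split on the parity of $\kay$. If $\kay$ is even, then $\kappa$ is an odd permutation and $\sigma = \gamma\kappa$ is odd; if $\kay$ is odd, $\sigma$ is even. In the even-$\kay$ case the merged cycle has length $\kay+3$, which is odd, hence is itself an even permutation of odd length, and the splicing transposition accounts for the parity. First I would write $\sigma$ explicitly as a $(\kay+3)$-cycle, apply the odd-length construction from Lemma~\ref{lem-decomposition-1}, and then fold the single remaining transposition into $\tau$, verifying that the resulting $\tau$ remains nearly balanced (it has an even number of $\kay$-cycles for every $\kay\geq 3$, since transpositions only affect the count of $2$-cycles) and that $\rho$ is unchanged and still balanced.

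Rather than rely on the merging trick uniformly, however, I suspect the most robust approach — and the one I would actually carry out — is to give the decomposition directly, mirroring the explicit formulas in Lemma~\ref{lem-decomposition-1}. I would set $\gamma = (c_1\;c_2\;c_3)$ and $\kappa = (a_1\;\ldots\;a_\kay)$ and write down candidate $\rho$ and $\tau$ tailored to each parity of $\kay$: for the balanced factor $\rho$, I would split $\kappa$ into two cycles whose lengths give an even count in each size class (pairing up with the $3$-cycle $\gamma$ when $\kay$ is itself close to $3$), and take $\tau$ to be a product of transpositions (an even number of them, to keep the $2$-cycle count even and hence $\tau$ nearly balanced) that reconnects the pieces into $\sigma$. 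The support condition $\supp\tau\cup\supp\rho\seq\supp\sigma$ is immediate from the construction, since every point named appears in $\gamma$ or $\kappa$.

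The main obstacle I anticipate is the bookkeeping of cycle-type parity: I must guarantee that $\rho$ has an \emph{even} number of cycles of each length $\geq 2$ (including length $2$) and that $\tau$ has an even number of cycles of each length $\geq 3$, and these constraints interact differently depending on whether $\kay$ is even or odd and on whether $\kay$ equals $3$ or $2$. The small cases $\kay = 2$ and $\kay = 3$ deserve separate, explicit verification, since there the $3$-cycle $\gamma$ can be paired directly with $\kappa$ (when $\kay = 3$, the two $3$-cycles already form a balanced permutation, so one may even take $\tau = \id$). Once the formulas are fixed, confirming the three required properties is a routine, if fiddly, check.
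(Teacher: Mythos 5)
Your final plan---an explicit case analysis on the parity of $\kay$, with the small cases treated separately---is exactly the route the paper takes, and your observation for $\kay=3$ is correct: two disjoint $3$-cycles are already balanced, so one may take $\rho=\sigma$ and $\tau=\id$. But the proposal as written has genuine gaps. The most basic one is that you misapply the definition of \emph{nearly balanced} in two places. Near-balance only constrains the counts of cycles of length $\geq 3$, so a single transposition \emph{is} nearly balanced, and there is no requirement that $\tau$ contain an even number of $2$-cycles. Worse, your plan to take $\tau$ to be ``a product of an even number of transpositions'' is not merely unnecessary but impossible when $\kay$ is even: there $\sigma$ is an odd permutation, every balanced $\rho$ is even, hence $\tau=\sigma\rho\inv$ must be odd. (The paper's $\tau$ for even $\kay=2t$ with $t\geq 3$ is $(b_1\,b_2\,b_3)(a_1\,a_2\,a_3)(a_{t+1}\,a_{2t})$---two $3$-cycles and one transposition---and for odd $\kay=2t+1$ it is a pair of disjoint $t$-cycles, not transpositions at all.) So the shape you prescribe for $\tau$ would have to be abandoned in at least half the cases.

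The fallback merging argument also contains a false step: after writing $\sigma$ as a $(\kay+3)$-cycle times a splicing transposition and decomposing the long cycle via Lemma~\ref{lem-decomposition-1}, the leftover transposition is \emph{not} disjoint from the factors you want to absorb it into, and a non-disjoint product with a transposition merges or splits cycles---it can change cycle lengths arbitrarily, not ``only the count of $2$-cycles.'' Finally, and most importantly, the explicit decompositions for $\kay=2$, $\kay=4$, odd $\kay\geq 5$, and even $\kay\geq 6$ are never exhibited; those formulas are the entire content of the lemma, and asserting that suitable ones can be found and routinely checked is not a proof.
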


\begin{proof}
  Let $\sigma=(b_{1}\;b_{2}\;b_{3})(a_{1}\;a_{2}\;\ldots\;a_{\kay})$.
  If $\kay=2$, let
  \[ \begin{array}{r@{~}c@{~}l}
    \rho &=& (b_{1}\;b_{2})(a_{1}\;a_{2}),
    \\
    \tau &=& (b_{1}\;b_{3}).
  \end{array}
  \]
  If $\kay=3$, let
  \[ \begin{array}{r@{~}c@{~}l}
    \rho &=& (b_{1}\;b_{2})(a_{1}\;a_{2}),
    \\
    \tau &=& (b_{1}\;b_{3})(a_{1}\;a_{3}),
  \end{array}
  \]
  If $\kay=4$, let
  \[ \begin{array}{r@{~}c@{~}l}
    \rho &=& (b_{1}\;b_{2}\;b_{3})(a_{1}\;a_{2}\;a_{3}),
    \\
    \tau &=& (a_{1}\;a_{4}).
  \end{array}
  \]
  If $\kay=2t+1$ is odd, with $t\geq 2$, let
  \[ \begin{array}{r@{~}c@{~}l}
    \rho &=& (b_{1}\;b_{2}\;b_{3})(a_{t}\; a_{t+1}\; a_{2t+1}),
    \\
    \tau &=& (a_{1}\;a_{2}\;\ldots\;a_{t})(a_{t+2}\;a_{t+3}\;\ldots\;a_{2t+1}).
  \end{array}
  \]
  If $\kay=2t$ is even, with $t\geq 3$, let
  \[ \begin{array}{r@{~}c@{~}l}
    \rho &=&
    (a_{3}\;a_{4}\;\ldots\;a_{t}\;a_{2t})(a_{t+1}\;a_{t+2}\;\ldots\;a_{2t-1}),
    \\
    \tau &=& (b_{1}\;b_{2}\;b_{3})(a_{1}\;a_{2}\;a_{3})(a_{t+1}\;a_{2t}).
  \end{array}
  \]
  In all cases, the conclusion of the lemma is satisfied.
\end{proof}

\begin{lemma}\label{lem-decomposition-3}
  Let $\sigma$ be a disjoint product of two or more 3-cycles. Then
  there exist balanced permutations $\rho,\tau$ such that
  $\sigma=\tau\rho$. Moreover, $\supp\tau\cup\supp\rho\seq\supp\sigma$.
\end{lemma}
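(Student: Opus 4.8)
The plan is to split into cases according to the parity of the number $m$ of 3-cycles (recall $m\geq 2$). First I would dispose of the easy case: when $m$ is even, $\sigma$ is \emph{itself} balanced, since it consists of an even number of 3-cycles and no other nontrivial cycles. So I simply take $\tau=\sigma$ and $\rho=\id$, both balanced with support inside $\supp\sigma$.

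The case of odd $m$ (hence $m\geq 3$) I would reduce to the single base case $m=3$. Write $\sigma=\sigma'\sigma''$, where $\sigma''$ is the disjoint product of three of the 3-cycles and $\sigma'$ is the disjoint product of the remaining $m-3$ of them. Since $m-3$ is even, $\sigma'$ is balanced, and it is disjoint from $\sigma''$. Given a factorization $\sigma''=\tau''\rho''$ into balanced permutations with $\supp\tau''\cup\supp\rho''\seq\supp\sigma''$ (the base case), I would set $\tau=\sigma'\tau''$ and $\rho=\rho''$. As $\sigma'$ and $\tau''$ have disjoint supports, $\tau$ is a disjoint product of two balanced permutations, hence balanced by Remark~\ref{rem-balanced}; and $\tau\rho=\sigma'\tau''\rho''=\sigma'\sigma''=\sigma$, with all supports inside $\supp\sigma$.

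It remains to treat three disjoint 3-cycles $\sigma=(a_{1}\;a_{2}\;a_{3})(b_{1}\;b_{2}\;b_{3})(c_{1}\;c_{2}\;c_{3})$, which I expect to be the crux. Here I would look for a factorization into two balanced \emph{involutions}. The naive attempt of reversing each 3-cycle in place, say $(a_{2}\;a_{3})(b_{2}\;b_{3})(c_{2}\;c_{3})$, fails, because each factor is then a product of three transpositions and so is not balanced; and no block-respecting factorization can do better, since writing a single 3-cycle as a product of two involutions forces each involution to contribute exactly one transposition on that block. The key idea is therefore to break the block structure by using an involution that \emph{swaps} two of the blocks while reversing them, and reverses the third block in place. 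Concretely I would take
\[
\tau=(a_{1}\;b_{1})(a_{2}\;b_{3})(a_{3}\;b_{2})(c_{2}\;c_{3}),
\]
which conjugates $\sigma$ to $\sigma^{-1}$: the swap of the $a$- and $b$-blocks contributes three transpositions and the reversal of the $c$-block contributes one, for a total of four, so $\tau$ is balanced.

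Finally, because $\tau$ conjugates $\sigma$ to $\sigma^{-1}$, the permutation $\rho=\tau^{-1}\sigma=\tau\sigma$ is automatically an involution, and a short computation gives $\rho=(a_{1}\;b_{3})(a_{2}\;b_{2})(a_{3}\;b_{1})(c_{1}\;c_{3})$, again a product of four transpositions and hence balanced, with $\tau\rho=\sigma$ and all supports inside $\supp\sigma$. The only genuine obstacle is this parity bookkeeping for $m=3$: one must arrange the reversing involution to have an \emph{even} number of transpositions, which the two-block swap achieves (four transpositions) precisely where the in-place reversal fails (three transpositions); everything else is a routine case split and the disjoint-product argument of Remark~\ref{rem-balanced}.
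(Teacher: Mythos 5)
Your proof is correct, but it takes a genuinely different route from the paper's. The paper handles all cases uniformly by squaring cycles: writing $\sigma=\gamma_1\cdots\gamma_\ell$, it takes $\rho=\tau=\gamma_1^2\cdots\gamma_\ell^2$ when $\ell$ is even (so $\tau\rho=\gamma_1^4\cdots\gamma_\ell^4=\sigma$), and $\rho=\gamma_1\gamma_2^2\cdots\gamma_{\ell-1}^2$, $\tau=\gamma_2^2\cdots\gamma_{\ell-1}^2\gamma_\ell$ when $\ell$ is odd, so that each factor is a disjoint product of exactly $\ell-1$ (an even number of) $3$-cycles; no base case or element-level computation is needed. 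You instead settle the even case with $\tau=\sigma$, $\rho=\id$ (valid, since the identity is vacuously balanced), and reduce the odd case to three disjoint $3$-cycles, which you factor into two balanced \emph{involutions} by choosing $\tau$ with $\tau\sigma\tau^{-1}=\sigma^{-1}$ that swaps-and-reverses two blocks while reversing the third, and setting $\rho=\tau\sigma$. I verified the computation: $\tau$ and $\rho=(a_1\,b_3)(a_2\,b_2)(a_3\,b_1)(c_1\,c_3)$ are each products of four disjoint transpositions, $\tau\rho=\sigma$, and $\supp\tau\cup\supp\rho\seq\supp\sigma$, so the reduction via Remark~\ref{rem-balanced} goes through. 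The paper's construction buys brevity and uniformity; yours buys involutive factors in the crucial case and a clean explanation of the parity obstruction (an in-place reversal costs three transpositions, a block swap costs four), at the price of a case split and an explicit verification. Both arguments are complete.
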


\begin{proof}
  By assumption, $\sigma$ can be factored as
  $\sigma=\gamma_1\gamma_2\cdots\gamma_\ell$, where
  $\gamma_1,\ldots,\gamma_\ell$ are pairwise disjoint 3-cycles and
  $\ell\geq 2$. Note that $\gamma_i^2$ is also a 3-cycle, and
  $\gamma_i^4=\gamma_i$, for all $i$.

  If $\ell$ is even, let
  $\rho=\tau=\gamma_1^2\gamma_2^2\cdots\gamma_{\ell}^2$.  If $\ell$ is
  odd, let $\rho=\gamma_1\gamma_2^2\gamma_3^2\cdots\gamma_{\ell-1}^2$
  and $\tau=\gamma_2^2\gamma_3^2\cdots\gamma_{\ell-1}^2\gamma_{\ell}$.
  In both cases, the conclusion of the lemma is satisfied.
\end{proof}

\begin{lemma}\label{lem-decomposition-3b}
  Let $\sigma$ be an even permutation, other than a 3-cycle. Then
  $\sigma$ can be written as $\sigma=\tau\rho$, where $\rho,\tau$ are
  balanced.
\end{lemma}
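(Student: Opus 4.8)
The plan is to reduce to the three preceding lemmas by splitting $\sigma$ into disjoint cycles and then recombining. First I would write $\sigma$ as a disjoint product of its cycles of length $\geq 2$ (ignoring fixed points) and sort these cycles into $3$-cycles and non-$3$-cycles. The aim is to partition all of them into groups, where each group has one of the three shapes handled by Lemmas~\ref{lem-decomposition-1}--\ref{lem-decomposition-3}: a single $\kay$-cycle with $\kay\geq 2$ and $\kay\neq 3$, a $3$-cycle together with one other cycle, or a disjoint product of two or more $3$-cycles.

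The partition is obtained by a short case analysis on the number $m$ of $3$-cycles. Every non-$3$-cycle can be left on its own and treated by Lemma~\ref{lem-decomposition-1}, so if $m=0$ we are done (and the degenerate case $\sigma=\id$ is handled by $\tau=\rho=\id$). If $m\geq 2$, all the $3$-cycles can be collected into one group and treated by Lemma~\ref{lem-decomposition-3}. The only delicate case is $m=1$: a lone $3$-cycle is covered by none of the three lemmas on its own, so I would pair it with some other cycle and invoke Lemma~\ref{lem-decomposition-2}. Such a partner exists precisely because $\sigma$ is assumed not to be a $3$-cycle, so when $m=1$ there must be at least one further non-trivial cycle. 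This is the single place where the hypothesis is used, and making sure it rules out exactly the one uncoverable configuration is the main bookkeeping obstacle.

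Each group then yields, via the relevant lemma, a balanced $\rho_i$ and a nearly balanced $\tau_i$ with the group's restriction of $\sigma$ equal to $\tau_i\rho_i$, and with $\supp\tau_i\cup\supp\rho_i$ contained in that group's support. Since the groups have pairwise disjoint supports, all the $\rho_i$ and $\tau_i$ are mutually disjoint, the factors across distinct groups commute, and hence $\sigma=\bigl(\prod_i\tau_i\bigr)\bigl(\prod_i\rho_i\bigr)=\tau\rho$. By Remark~\ref{rem-balanced}, the disjoint product $\rho=\prod_i\rho_i$ is balanced and $\tau=\prod_i\tau_i$ is nearly balanced.

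It remains to upgrade $\tau$ from nearly balanced to balanced, which is the one genuinely clever step. A balanced permutation has an even number of cycles of each length, so in particular an even number of even-length (odd) cycles, hence is always even; thus $\rho$ is even. Since $\sigma$ is even by hypothesis, $\tau=\sigma\rho\inv$ is even as well, and by Remark~\ref{rem-balanced} a nearly balanced even permutation is balanced. Therefore both $\rho$ and $\tau$ are balanced, completing the proof.
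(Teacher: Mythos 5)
Your proposal is correct and follows essentially the same route as the paper's proof: factor $\sigma$ into disjoint pieces each covered by Lemmas~\ref{lem-decomposition-1}--\ref{lem-decomposition-3}, combine the resulting balanced $\rho_i$ and nearly balanced $\tau_i$ using disjointness and Remark~\ref{rem-balanced}, and then upgrade $\tau$ to balanced via the evenness of $\sigma$ and $\rho$. The only difference is that you spell out the case analysis on the number of $3$-cycles, which the paper dismisses as ``easy to see''; your bookkeeping there is accurate.
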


\begin{proof}
  By considering the cycle decomposition of $\sigma$, it is easy to
  see that $\sigma$ can be factored into disjoint factors such that
  each factor satisfies the premise of one of
  Lemmas~\ref{lem-decomposition-1}, {\ref{lem-decomposition-2}}, or
  {\ref{lem-decomposition-3}}. Let $\sigma=\sigma_1\cdots\sigma_\ell$ be
  such a factorization. Using the lemmas, each $\sigma_i$ can be
  written as $\sigma_i=\tau_i\rho_i$, where $\rho_i$ is balanced and
  $\tau_i$ is nearly balanced. Moreover, since the support of each
  $\rho_i$ and $\tau_i$ is contained in that of $\sigma_i$, the
  $\rho_i$ are pairwise disjoint, the $\tau_i$ are pairwise disjoint,
  and $\rho_i\tau_j = \tau_j\rho_i$ whenever $i\neq j$. Let
  $\rho = \rho_1\cdots\rho_\ell$ and $\tau=\tau_1\cdots\tau_\ell$. Then we
  have $\sigma=\tau\rho$. Moreover, by Remark~\ref{rem-balanced},
  $\rho$ is balanced and $\tau$ is nearly balanced. Finally, since
  $\sigma$ and $\rho$ are even permutations, so is $\tau$, and it
  follows, again by Remark~\ref{rem-balanced}, that $\tau$ is
  balanced.
\end{proof}

\begin{lemma}\label{lem-decomposition-4}
  Let $\sigma$ be a 3-cycle in $\S(X)$, where $\card{X}\geq 5$.  Then there
  exist balanced permutations $\rho,\tau$ such that $\sigma=\tau\rho$.
\end{lemma}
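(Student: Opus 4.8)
The plan is to dispose of the single case that Lemma~\ref{lem-decomposition-3b} explicitly excluded, namely a lone 3-cycle, by borrowing two auxiliary points from outside its support. The first thing I would do is record why the hypothesis $\card{X}\geq 5$ cannot be relaxed, since this is really the whole content of the lemma. Writing $\sigma=(a_1\;a_2\;a_3)$, note that the only balanced permutation supported on a set of at most three elements is the identity: every nonidentity permutation of at most three points is a single 2-cycle or a single 3-cycle, and in either case it contains an odd number of cycles of its length. Consequently $\sigma$ admits no factorization $\sigma=\tau\rho$ with $\tau,\rho$ balanced and $\supp\tau\cup\supp\rho\seq\supp\sigma$, so extra points are genuinely unavoidable. (This is precisely why a 3-cycle had to be separated from the inductive argument of Lemma~\ref{lem-decomposition-3b}, and why the threshold sits at $5$ rather than $4$.)

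Since $\card{X}\geq 5$ and $\card{\supp\sigma}=3$, there exist at least two elements $b_1,b_2\in X\setminus\supp\sigma$. I would then exhibit the explicit factors
\[
  \rho=(a_1\;a_2)(b_1\;b_2), \qquad \tau=(a_1\;a_3)(b_1\;b_2),
\]
each of which is a disjoint product of exactly two 2-cycles and hence balanced. The underlying idea is that the two copies of the transposition $(b_1\;b_2)$ cancel in the product $\tau\rho$, while the transpositions among the $a_i$ compose to the required 3-cycle; the auxiliary pair is present solely to supply each factor with an even number of 2-cycles.

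It then remains only to verify $\tau\rho=\sigma$, which is a routine point-by-point check: tracing the five relevant elements gives $a_1\mapsto a_2\mapsto a_3\mapsto a_1$ while $b_1$ and $b_2$ are returned to themselves. I do not expect any genuine obstacle in this lemma; the single substantive point is the observation above that a lone 3-cycle cannot be balanced-factored inside its own support, which both motivates the introduction of $b_1,b_2$ and explains the hypothesis $\card{X}\geq 5$. Everything after that is the guessing of the two explicit involutions and a mechanical confirmation of their product.
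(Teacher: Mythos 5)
Your construction is literally the paper's own: the paper also picks two auxiliary elements $a_4,a_5$ outside $\supp\sigma$ and sets $\rho=(a_1\;a_2)(a_4\;a_5)$, $\tau=(a_1\;a_3)(a_4\;a_5)$, so the proposal is correct and takes essentially the same approach. (Your side remark that this explains the threshold $5$ rather than $4$ is not fully established by your argument, which only shows the factors cannot live inside $\supp\sigma$; but this does not affect the proof of the lemma as stated.)
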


\begin{proof}
  Let $\sigma=(a_{1}\;a_{2}\;a_{3})$. Since $\card{X}\geq 5$, there exist
  elements $a_{4},a_{5}$ of $X$ that are different from
  each other and from $a_{1},\ldots,a_{3}$. Let
  \[ \begin{array}{r@{~}c@{~}l}
    \rho &=& (a_{1}\;a_{2})(a_{4}\;a_{5}),
    \\
    \tau &=& (a_{1}\;a_{3})(a_{4}\;a_{5}).
  \end{array}
  \]
  Then the conclusion of the lemma is satisfied.
\end{proof}

\begin{remark}
  Unlike the situation in
  Lemmas~\ref{lem-decomposition-1}--\ref{lem-decomposition-3b}, it is
  not possible to choose $\rho$ and $\tau$ in
  Lemma~\ref{lem-decomposition-4} so that their support is contained
  in that of $\sigma$. An easy case distinction shows that
  Lemma~\ref{lem-decomposition-4} is false when $\card{X}\leq 4$. 
\end{remark}

\begin{proposition}\label{pro-balanced}
  Let $\sigma$ be an even permutation in $\S(X)$, where $\card{X}\geq 5$.
  Then there exist balanced permutations $\rho,\tau$ such that
  $\sigma=\tau\rho$.
\end{proposition}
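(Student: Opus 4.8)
The plan is to reduce the proposition to the two immediately preceding lemmas by a single case distinction on whether $\sigma$ is a 3-cycle. The key observation is that Lemma~\ref{lem-decomposition-3b} already establishes exactly the desired conclusion for every even permutation \emph{except} a 3-cycle, while Lemma~\ref{lem-decomposition-4} handles precisely the remaining case of a 3-cycle. Since a $\kay$-cycle is even if and only if $\kay$ is odd, and $3$ is odd, a 3-cycle is genuinely an even permutation, so the split is exhaustive and both branches are nonvacuous.

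First I would dispose of the generic case: if $\sigma$ is not a 3-cycle, then by Lemma~\ref{lem-decomposition-3b} we immediately obtain balanced permutations $\rho,\tau$ with $\sigma=\tau\rho$, with no appeal whatsoever to the cardinality of $X$. Then I would treat the exceptional case: if $\sigma$ is a 3-cycle, I would invoke Lemma~\ref{lem-decomposition-4}, which produces balanced $\rho,\tau$ with $\sigma=\tau\rho$. Taking the two cases together proves the proposition.

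The only place where the hypothesis $\card{X}\geq 5$ enters is the 3-cycle case, where Lemma~\ref{lem-decomposition-4} needs two extra points $a_{4},a_{5}$ lying outside $\supp\sigma$ in order to form the balanced factors $(a_{1}\;a_{2})(a_{4}\;a_{5})$ and $(a_{1}\;a_{3})(a_{4}\;a_{5})$. This is consistent with the remark following that lemma: the 3-cycle is the sole case whose factorization cannot be kept within $\supp\sigma$, and it genuinely fails once $\card{X}\leq 4$. So I do not expect a real obstacle here; the substantive combinatorial work has already been absorbed into Lemmas~\ref{lem-decomposition-1}--\ref{lem-decomposition-4}. The main subtlety is merely to recognize that the 3-cycle is the one even permutation not covered by Lemma~\ref{lem-decomposition-3b}, and that it is exactly this case which forces the global cardinality bound $\card{X}\geq 5$ into the statement; every other even $\sigma$ is handled by direct citation.
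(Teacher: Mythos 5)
Your proposal is correct and coincides with the paper's own proof, which is exactly the one-line case split: Lemma~\ref{lem-decomposition-4} if $\sigma$ is a 3-cycle, and Lemma~\ref{lem-decomposition-3b} otherwise. Your additional observation that the hypothesis $\card{X}\geq 5$ is needed only in the 3-cycle case matches the paper's remark following Lemma~\ref{lem-decomposition-4}.
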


\begin{proof}
  By Lemma~\ref{lem-decomposition-4} if $\sigma$ is a 3-cycle, and by
  Lemma~\ref{lem-decomposition-3b} otherwise.
\end{proof}

\subsection{Alternation depth of permutations of the form \texorpdfstring{$\id+\tau$}{id+t}}

Using balanced permutations, we prove that every even permutation of
the form $\id+\tau$ has alternation depth 5.

\begin{proposition}\label{pro-construction-1+h}
  Let $A$ be a finite set of 3 or more elements, and let
  $\tau\in\S(A\times 2)$ be an even permutation. Let
  $\sigma = \id_{A\times 2}+\tau \in \S(2\times A\times 2)$. Then
  $\sigma$ has an alternating decomposition of depth 5, starting and
  ending with a factor of the form $2\times g$.
\end{proposition}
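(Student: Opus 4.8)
The plan is to combine Proposition~\ref{pro-balanced} with the observation that $\id+\rho$ has a very short alternating decomposition whenever $\rho$ is \emph{balanced}. First I would note that $\tau$ is even and $\card{A\times 2}=2\card{A}\geq 6\geq 5$, so Proposition~\ref{pro-balanced} applies and lets me write $\tau=\tau_2\tau_1$ with $\tau_1,\tau_2\in\S(A\times 2)$ balanced. By property~\eqref{eqn-3} this gives $\sigma=\id+\tau_2\tau_1=(\id+\tau_2)(\id+\tau_1)$, so it suffices to decompose each factor $\id+\tau_i$ separately and then splice the two decompositions together.

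The heart of the argument is the claim that for any balanced $\rho\in\S(A\times 2)$, the permutation $\id+\rho$ admits an alternating decomposition of depth $3$ that begins and ends with a factor of the form $2\times g$. To establish it I would first show that a balanced permutation is similar to one of the form $g\times 2$ with $g\in\S(A)$. Indeed, $g\times 2$ has exactly twice as many $\kay$-cycles as $g$ for every $\kay$, so its cycle type has only even multiplicities; conversely, if $\rho$ is balanced then all its $\kay$-cycle multiplicities for $\kay\geq 2$ are even, and a parity count against $\card{A\times 2}=2\card{A}$ forces the number of fixed points to be even as well, so the cycle type of $\rho$ is exactly double that of some $g\in\S(A)$. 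By the similarity criterion recalled in Section~\ref{ssec-prelim}, there is $\pi\in\S(A\times 2)$ with $\rho=\pi(g\times 2)\pi\inv$. The second ingredient is the identity $\id_{A\times 2}+(g\times 2)=(\id_A+g)\times 2$, which is immediate from~\eqref{eqn-2}; setting $f=\id_A+g\in\S(2\times A)$, this exhibits $\id+(g\times 2)$ as a single factor $f\times 2$. Finally, rewriting $2\times\pi$ as $\pi+\pi$ via~\eqref{eqn-1} and multiplying disjoint sums via~\eqref{eqn-3} gives $\id+\rho=(2\times\pi)(\id+(g\times 2))(2\times\pi\inv)=(2\times\pi)(f\times 2)(2\times\pi\inv)$, the desired depth-$3$ decomposition.

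Applying the claim to $\tau_1$ and $\tau_2$ produces $\id+\tau_i=(2\times\pi_i)(f_i\times 2)(2\times\pi_i\inv)$, so that $\sigma=(2\times\pi_2)(f_2\times 2)(2\times\pi_2\inv)(2\times\pi_1)(f_1\times 2)(2\times\pi_1\inv)$. The two adjacent middle factors merge (again by~\eqref{eqn-1} and~\eqref{eqn-3}) into the single factor $2\times(\pi_2\inv\pi_1)$, leaving the five alternating factors $2\times\pi_2$, $f_2\times 2$, $2\times(\pi_2\inv\pi_1)$, $f_1\times 2$, $2\times\pi_1\inv$, which begin and end with a factor of the form $2\times g$ as required. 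The hard part will be the structural claim that balancedness is precisely the right hypothesis: its content is that the even number of $\kay$-cycles ($\kay\geq 2$) supplied by Proposition~\ref{pro-balanced} lets one pair cycles across the two copies of $A$ and thereby realize $\rho$ as a conjugate of a permutation $g\times 2$ acting only on the $2\times A$ part. Once that is in hand, the conjugation-through-$2\times\pi$ trick and the merging of the middle factors (which brings the count from $6$ down to $5$) are routine.
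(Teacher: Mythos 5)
Your proposal is correct and follows essentially the same route as the paper: decompose $\tau$ into two balanced factors via Proposition~\ref{pro-balanced}, show each $\id+\tau_i$ is a conjugate $(2\times\pi)(f\times 2)(2\times\pi\inv)$ by realizing a balanced permutation as similar to one of the form $h\times 2$ (the paper's Lemmas~\ref{lem-balanced} and~\ref{lem-balanced2}), and merge the two adjacent middle factors to reach depth 5. Your cycle-type-doubling argument, including the parity count on fixed points, is exactly the content of the paper's Lemma~\ref{lem-balanced}.
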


The proof requires two lemmas.

\begin{lemma}\label{lem-balanced}
  Let $\tau\in\S(A\times 2)$ be a balanced permutation. Then there
  exist permutations $g\in\S(A\times 2)$ and $h\in\S(A)$ such that
  \[ \tau = g\inv(h\times 2)g.
  \]
\end{lemma}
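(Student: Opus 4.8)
The plan is to recognize the permutations of the form $h\times 2$ by their cycle type, and then to appeal to the characterization of similarity recalled in Section~\ref{ssec-prelim}. First I would determine the cycle structure of $h\times 2$ for an arbitrary $h\in\S(A)$. Since $h\times 2$ leaves each of the two ``layers'' $A\times\s{0}$ and $A\times\s{1}$ invariant and acts on each of them as a copy of $h$, every $\kay$-cycle of $h$ produces exactly two $\kay$-cycles of $h\times 2$, and every fixed point of $h$ produces two fixed points of $h\times 2$. In particular $h\times 2$ is always balanced, and for every $\kay$ the number of $\kay$-cycles of $h\times 2$ is exactly twice the number of $\kay$-cycles of $h$.

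Next I would use the balancedness of $\tau$ to build a matching $h$. Because $\tau$ is balanced, for each $\kay\geq 2$ its cycle decomposition contains an even number of $\kay$-cycles, say $2m_\kay$. I would let $h\in\S(A)$ be any permutation with exactly $m_\kay$ cycles of length $\kay$ for each $\kay\geq 2$, with all remaining points fixed. Such an $h$ exists: the number of points moved by $\tau$ is $\sum_{\kay\geq 2}2m_\kay\kay\leq\card{A\times 2}=2\card{A}$, so $\sum_{\kay\geq 2}m_\kay\kay\leq\card{A}$, leaving enough room in $A$ for these cycles.

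By the first step, $h\times 2$ then has exactly $2m_\kay$ cycles of length $\kay$ for every $\kay\geq 2$, matching $\tau$. A count of fixed points shows the $1$-cycles agree as well, since both $\tau$ and $h\times 2$ fix precisely $2\card{A}-\sum_{\kay\geq 2}2m_\kay\kay$ points. Hence $\tau$ and $h\times 2$ have the same cycle type, and by the similarity criterion of Section~\ref{ssec-prelim} they are conjugate in $\S(A\times 2)$; that is, there exists $g\in\S(A\times 2)$ with $\tau=g\inv(h\times 2)g$, as required.

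This argument is almost entirely bookkeeping, and the only point needing care is the existence of $h$, i.e.\ the inequality $\sum_{\kay\geq 2}m_\kay\kay\leq\card{A}$ together with the automatic matching of the fixed-point counts. It is worth noting that the conjugator $g$ is allowed to be an arbitrary permutation of $A\times 2$ and need not respect the layer structure; this is exactly why no compatibility between the paired cycles of $\tau$ is needed, and why balancedness alone suffices to identify $\tau$ as a conjugate of some $h\times 2$.
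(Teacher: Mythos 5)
Your proof is correct and takes essentially the same route as the paper: choose $h$ with half as many $\kay$-cycles as $\tau$ for each $\kay$ (possible by the counting inequality $\sum_\kay m_\kay\kay\leq\card{A}$), observe that $h\times 2$ then has the same cycle type as $\tau$, and conclude by the similarity criterion. Your extra remarks on fixed-point counts and on the conjugator not respecting the layer structure are correct elaborations of what the paper leaves implicit.
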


\begin{proof}
  For all $\kay\geq 2$, let $\yy{\kay}$ be the number of $\kay$-cycles
  in the cycle decomposition of $\tau$. Since the cycles are disjoint,
  we have $\sum_\kay \kay \yy{\kay} \leq 2\card{A}$. Since $\tau$ is
  balanced, all $\yy{\kay}$ are even. We can therefore find a
  permutation $h\in\S(A)$ whose number of $\kay$-cycles is exactly
  $\yy{\kay}/2$, for all $\kay$. Since $h\times 2$ and $\tau$ have, by
  construction, the same number of $\kay$-cycles for all $\kay$, we
  have $h\times 2\sim\tau$.  By definition of similarity, it follows
  that there exists some $g$ with $\tau = g\inv(h\times 2)g$, as
  claimed.
\end{proof}

\begin{lemma}\label{lem-balanced2}
  Let $\tau\in\S(A\times 2)$ be a balanced permutation, and let
  $\sigma=\id_{A\times 2}+\tau \in \S(2\times A\times 2)$. Then there
  exist permutations $g\in\S(A\times 2)$ and $f\in\S(2\times A)$ such
  that
  \[ \sigma = (2\times g\inv)(f\times 2)(2\times g).
  \]
\end{lemma}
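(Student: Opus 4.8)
The plan is to reduce everything to Lemma~\ref{lem-balanced} together with the algebraic identities~\eqref{eqn-1}--\eqref{eqn-3} for disjoint sums. First I would invoke Lemma~\ref{lem-balanced}: since $\tau$ is balanced, it supplies permutations $g\in\S(A\times 2)$ and $h\in\S(A)$ with $\tau = g\inv(h\times 2)g$. This $g$ is exactly the one I will use in the statement, and the $f$ I will choose is $f = \id_A + h \in \S(2\times A)$, where the disjoint sum is taken over the first (boolean) factor of $2\times A$.

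The heart of the verification is to rewrite the target conjugate entirely in terms of disjoint sums. Using~\eqref{eqn-1} I would write $2\times g = g+g$ and $2\times g\inv = g\inv + g\inv$, and using~\eqref{eqn-2} I would write
\[
  f\times 2 = (\id_A + h)\times 2 = (\id_A\times 2) + (h\times 2) = \id_{A\times 2} + (h\times 2).
\]
Applying the product rule~\eqref{eqn-3} twice then collapses the conjugate:
\[
  (2\times g\inv)(f\times 2)(2\times g)
  = (g\inv + g\inv)\bigl(\id_{A\times 2} + (h\times 2)\bigr)(g+g)
  = \id_{A\times 2} + g\inv(h\times 2)g
  = \id_{A\times 2}+\tau = \sigma,
\]
where the final step uses the factorization of $\tau$ from Lemma~\ref{lem-balanced}.

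There is no substantial obstacle here: all the genuine work is hidden inside Lemma~\ref{lem-balanced}, which produces the conjugating permutation $g$ and the ``diagonal'' factor $h\times 2$; the present lemma is a short bridge that merely transports that conjugation through a disjoint sum. The only points requiring care are (a) guessing the correct $f$, namely $f=\id_A+h$ rather than something involving $g$, and (b) keeping the composition order straight so that~\eqref{eqn-3} is applied in the right direction; with the convention that $\sigma\sigma'$ means ``$\sigma'$ first, then $\sigma$'', the chain above is exactly right. As a sanity check one can instead verify the identity pointwise on $2\times A\times 2$: for first coordinate $0$ the outer $g\inv$ cancels the inner $g$ while $f$ acts trivially, giving the identity on the bottom copy, whereas for first coordinate $1$ the factor $f$ contributes $h$ and one recovers $g\inv(h\times 2)g = \tau$ on the top copy.
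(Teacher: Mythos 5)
Your proposal is correct and is essentially identical to the paper's proof: both invoke Lemma~\ref{lem-balanced} to get $\tau=g\inv(h\times 2)g$, set $f=\id_A+h$, and collapse the conjugate using the disjoint-sum identities \eqref{eqn-1}--\eqref{eqn-3}. No further comment is needed.
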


\begin{proof}
  By Lemma~\ref{lem-balanced}, we can find $g\in\S(A\times 2)$ and
  $h\in\S(A)$ such that $\tau = g\inv(h\times 2)g$. Let
  $f=\id_A+h \in \S(2\times A)$.  Then
  \[ 
  \begin{array}{rcl}
    \multicolumn{3}{l}{(2\times g\inv)(f\times 2)(2\times g)} \\
    &=& (2\times g\inv)((\id_A+h)\times 2)(2\times g) \\
    &=& (g\inv + g\inv)(\id_A\times 2+h\times 2)(g + g) \\
    &=& (g\inv\, \id_{A\times 2}\, g) + (g\inv(h\times 2)g) \\
    &=& \id_{A\times 2} + \tau \\
    &=& \sigma.
  \end{array}
  \]
  Here, in addition to the defining properties of $f$, $g$, $h$, and
  $\sigma$, we have also used {\eqref{eqn-1}} and {\eqref{eqn-2}} in
  the second step and {\eqref{eqn-3}} in the third step. 
\end{proof}

\begin{proof}[Proof of Proposition~\ref{pro-construction-1+h}]
  By assumption, $\tau\in\S(A\times 2)$ is even. By
  Proposition~\ref{pro-balanced}, there exist balanced permutations
  $\tau_1,\tau_2\in\S(A\times 2)$ such that $\tau=\tau_2\tau_1$. By
  Lemma~\ref{lem-balanced2}, there exist $g_1,g_2\in\S(A\times 2)$ and
  $f_1,f_2\in\S(2\times A)$ such that
  $\id+\tau_i = (2\times g_i\inv)(f_i\times 2)(2\times
  g_i)$, for $i=1,2$. Then we have:
  \[ 
  \begin{array}{rcl}
    \id+\tau &=& (\id+\tau_2)(\id+\tau_1) \\
           &=& (2\times g_2\inv)(f_2\times 2)(2\times g_2)
               (2\times g_1\inv)(f_1\times 2)(2\times g_1) \\
           &=& (2\times g_2\inv)(f_2\times 2)(2\times g_2g_1\inv)
               (f_1\times 2)(2\times g_1),
  \end{array}
  \]
  which is the desired decomposition of alternation depth 5.
\end{proof}

\subsection{Alternation depth of permutations of the form $g+h$}

We now come to the main result of Section~\ref{sec-first}, which is
that every even permutation of the form $g+h$ has alternation depth 5.

\begin{proposition}\label{pro-construction-1}
  Let $A$ be a finite set of 3 or more elements, and let
  $g,h\in\S(A\times 2)$ be permutations such that $\sigma=g+h$ is
  even. Then $\sigma$ has alternation depth 5.
\end{proposition}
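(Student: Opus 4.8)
The plan is to reduce the general form $g+h$ to the special form $\id+\tau$ already treated in Proposition~\ref{pro-construction-1+h}. The key algebraic step is to factor out a single $2\times$-type permutation on the left. Using~\eqref{eqn-3} with the first summand taken to be $g+g$, together with~\eqref{eqn-1}, I compute $(g+g)(\id+g\inv h) = g\,\id + g(g\inv h) = g+h$, and since $g+g = 2\times g$ this gives
\[
\sigma = g+h = (2\times g)\,(\id_{A\times 2}+g\inv h).
\]
Thus $\sigma$ is a single factor of the form $2\times g$ followed by a permutation of the form $\id+\tau$ with $\tau = g\inv h \in \S(A\times 2)$.

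To invoke Proposition~\ref{pro-construction-1+h} I first need $\tau = g\inv h$ to be even. Because $g+h$ restricts to a copy of $g$ on $\s{0}\times A\times 2$ and to a copy of $h$ on $\s{1}\times A\times 2$, and these two sets are disjoint, the sign of $\sigma$ is the product of the signs of $g$ and $h$; hence $\sigma$ even forces $g$ and $h$ to have equal parity. Since the sign of $g\inv h$ equals that same product, $\tau$ is even, and $A$ has $3$ or more elements, so the hypotheses of Proposition~\ref{pro-construction-1+h} are met.

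Proposition~\ref{pro-construction-1+h} then supplies a depth-$5$ decomposition of $\id_{A\times 2}+\tau$ whose leftmost factor is of the form $2\times g'$. Substituting this into the factorization above produces a product of six factors whose first two, namely $2\times g$ and the leading $2\times g'$, are both of $2\times$-type; by~\eqref{eqn-3} their product is again a single factor of the form $2\times g''$. Merging them lowers the depth to $5$, and the resulting product strictly alternates between the two types, giving the claimed decomposition of alternation depth $5$.

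I expect the only delicate point to be the parity computation in the second step. Everything else is a direct manipulation using properties~\eqref{eqn-1}--\eqref{eqn-3}, combined with the crucial feature of Proposition~\ref{pro-construction-1+h} that its decomposition \emph{begins} with a $2\times$-type factor, which is exactly what permits the merge that keeps the depth at $5$ rather than $6$.
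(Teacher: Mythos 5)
Your proposal is correct and is essentially the paper's own argument in mirror image: the paper sets $\tau=hg\inv$ and writes $\sigma=(\id+\tau)(2\times g)$, merging $2\times g$ into the \emph{rightmost} factor of the depth-5 decomposition from Proposition~\ref{pro-construction-1+h}, whereas you set $\tau=g\inv h$ and merge on the left; both merges are licensed because that decomposition starts and ends with a $2\times{}$-type factor. Your explicit parity check that $\tau$ is even is a detail the paper leaves implicit, and it is correct.
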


\begin{proof}
  This is an easy corollary of Proposition~\ref{pro-construction-1+h}.
  Let $\tau=hg\inv\in\S(A\times 2)$, and note that $\tau$ is even.  By
  Proposition~\ref{pro-construction-1+h}, $\id_{A\times 2}+\tau$ can
  be written in the form $\id+\tau = (2\times g_5)(f_4\times 2)
  (2\times g_3)(f_2\times 2)(2\times g_1)$.  We then have
  \[
  \begin{array}{rcl}
  \sigma &=& g+h \\
    &=& (\id+\tau)(g+g) \\
    &=& (\id+\tau)(2\times g) \\
    &=& (2\times g_5)(f_4\times 2)(2\times g_3)(f_2\times 2)(2\times g_1g),
  \end{array}
  \]
  which proves that $\sigma$ has alternation depth 5.
\end{proof}

\section{Second construction: colorings}\label{sec-second}

\subsection{Colorings}

As before, let $2=\s{0,1}$. If $X$ is any finite set, a {\em coloring}
of $X$ is a map $c : X \to 2$. Here, we think of the binary digits $0$
and $1$ as {\em colors}, i.e., $x\in X$ has color $c(x)$. We say that
the coloring $c$ is {\em fair} if there is an equal number of elements
of each color, i.e., $\card{c\inv\s{0}} = \card{c\inv\s{1}}$.

The group $\S(X)$ acts in a natural way on the colorings of $X$ as
follows: we define $\sigma\bu c = c'$, where
$c'(x) = c(\sigma\inv(x))$.  Note that
$(\sigma\tau)\bu c = \sigma \bu (\tau\bu c)$. Also, $\sigma\bu c$ is
fair if and only if $c$ is fair.

On a set of the form $2\times X$, the {\em standard coloring} is the
one given by $\cstan(0,x) = 0$ and $\cstan(1,x) = 1$, for all $x$.

\begin{remark}\label{rem-fair}
  The standard coloring is fair. Conversely, if $c$ is a fair coloring
  of $2\times X$, there exists a permutation $f\in\S(2\times X)$ such
  that $f\bu c = \cstan$.
\end{remark}

The following lemma relates colorings to permutations of the form
$g+h$ considered in the previous section.

\begin{lemma}\label{lem-glue}
  A permutation $\sigma\in\S(2\times X)$ is of the form $\sigma=g+h$,
  for some $g,h\in\S(X)$, if and only if $\sigma\bu\cstan=\cstan$.
\end{lemma}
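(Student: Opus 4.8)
The plan is to prove Lemma~\ref{lem-glue} by directly unwinding the definition of the action of $\S(2\times X)$ on colorings and comparing it against the definition of the disjoint sum $g+h$. The statement is a biconditional, so I would prove the two directions separately, and I expect both directions to be essentially computational once the definitions are spelled out.

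For the forward direction, suppose $\sigma = g+h$ for some $g,h\in\S(X)$. I want to show $\sigma\bu\cstan = \cstan$, i.e., that $\cstan(\sigma\inv(x)) = \cstan(x)$ for every $x\in 2\times X$. The key observation is that $\sigma = g+h$ preserves the first coordinate: by the defining equations $(g+h)(0,x)=(0,g(x))$ and $(g+h)(1,x)=(1,h(x))$, the permutation $\sigma$ maps elements of the form $(0,x)$ to elements of the form $(0,\cdot)$ and elements of the form $(1,x)$ to elements of the form $(1,\cdot)$. The same is therefore true of $\sigma\inv$. Since the standard coloring $\cstan$ reads off exactly the first coordinate ($\cstan(i,x)=i$), applying $\sigma\inv$ does not change the color, which gives $\sigma\bu\cstan=\cstan$.

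For the converse, suppose $\sigma\bu\cstan=\cstan$, i.e., $\cstan(\sigma\inv(x)) = \cstan(x)$ for all $x$. Because $\cstan$ records the first coordinate, this says that $\sigma\inv$ (and hence $\sigma$) preserves the first coordinate: $\sigma(i,x)$ has first coordinate $i$ for each $i\in 2$ and $x\in X$. Thus I can write $\sigma(0,x) = (0, g(x))$ and $\sigma(1,x) = (1, h(x))$, which defines two maps $g,h:X\to X$. The remaining task is to check that $g$ and $h$ are genuine permutations of $X$; this follows because $\sigma$ is a bijection of $2\times X$ that restricts to a bijection on each of the two disjoint ``slices'' $\{0\}\times X$ and $\{1\}\times X$, so $g$ and $h$ are bijections. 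Then by construction $\sigma = g+h$.

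The main obstacle, if there is one, is purely a matter of bookkeeping: making sure the first-coordinate-preservation argument is stated cleanly in both directions, and being careful in the converse that $g$ and $h$ are well-defined as functions $X\to X$ (rather than $X\to 2\times X$) and are bijective. Neither direction requires anything beyond the definitions of $\bu$, $\cstan$, and $g+h$; in particular no use of evenness, balancedness, or the earlier propositions is needed, so this lemma stands on its own as a short structural fact.
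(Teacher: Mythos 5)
Your proof is correct and follows essentially the same route as the paper's: both unwind the definitions to show that $\sigma\bu\cstan=\cstan$ is equivalent to $\sigma$ preserving the first coordinate, which is equivalent to $\sigma$ having the form $g+h$. Your version is slightly more careful about checking that $g$ and $h$ are bijections, a detail the paper leaves implicit.
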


\begin{proof}
  This is elementary. We have $\sigma\bu\cstan=\cstan$ if and only if
  for all $x$, $\sigma(0,x)$ is of the form $(0,y)$, and $\sigma(1,x)$
  is of the form $(1,z)$. By setting $g(x)=y$ and $h(x)=z$, this is
  equivalent to $\sigma$ being of the form $g+h$.
\end{proof}

We are now ready to state the main result of Section~\ref{sec-second},
which is that every fair coloring of $2\times A\times 2$ can be
converted to the standard coloring by the action of a permutation of
alternation depth 4.

\begin{proposition}\label{pro-construction-2}
  Let $A$ be a finite set of 3 or more elements, and let $c$ be a fair
  coloring of $2\times A\times 2$. Then there exists a permutation
  $\sigma\in\S(2\times A\times 2)$ such that $\sigma\bu c = \cstan$ and
  $\sigma$ has alternation depth 4.
\end{proposition}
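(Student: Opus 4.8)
The plan is to exploit the shape of the standard coloring. Because $\cstan(z,a,y)=z$ depends only on the first coordinate, $\cstan$ is constant in the $y$-direction and places exactly $\card{A}$ zeros and $\card{A}$ ones in each of the two columns $y=0$ and $y=1$. I will first move $c$ into this shape and then straighten it out. Say that a coloring is \emph{column-constant} if $c(z,a,0)=c(z,a,1)$ for all $z,a$, and assign to each index $(a,y)\in A\times 2$ its \emph{type} $\bigl(c(0,a,y),c(1,a,y)\bigr)\in 2\times 2$, the colours of the two cells lying over it in the $z$-direction. The $\sigma$ I aim to build has alternation depth $4$: applied to $c$ it acts first by a factor $2\times g_1$, then $f_2\times 2$, then $2\times g_3$, then $f_4\times 2$, where the first three factors turn $c$ into a column-constant coloring and the last one sorts that coloring into $\cstan$.

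Two of the four factors are easy to justify. If $d$ is column-constant, then fairness forces each column to contain exactly $\card{A}$ zeros, so its common column-coloring has $\card{A}$ zeros; a single factor $f_4\times 2$ rearranging this coloring into the bottom-$0$, top-$1$ order then produces $\cstan$. Conversely, if a coloring has the property that each of its four types occurs an even number of times, then its $2\card{A}$ indices split into $\card{A}$ pairs of equal type, and a factor $2\times g_3$ sending the two indices of each pair to the two cells $(a',0),(a',1)$ over a common $a'$ makes the coloring column-constant. Hence everything reduces to the following claim: using only the factors $2\times g_1$ and $f_2\times 2$, every fair coloring can be brought to one in which each type occurs an even number of times.

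This claim is where the difficulty concentrates, and I expect it to be the main obstacle. A factor of the form $2\times g$ only permutes types among indices and so cannot change how often each type occurs; thus the type-counts must be corrected by the single factor $f_2\times 2$, which pairs up the positions $(z,a)\in 2\times A$ and reads off each new type from the colours of a pair. Fairness already forces the types $(0,0)$ and $(1,1)$ to occur with equal parity, and likewise $(0,1)$ and $(1,0)$, so only two parities must be killed. The obstruction is that when one column is the pointwise complement of the other---equivalently, when no position $(z,a)$ has its two cells $(z,a,0),(z,a,1)$ coloured alike---the factor $f_2\times 2$ alone is powerless to adjust these parities, and for an odd count it cannot reach all-even type-counts. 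This is exactly the purpose of the seemingly idle first factor $2\times g_1$: since the $2\card{A}\geq 6$ zeros occupy two rows, some row holds at least $\card{A}\geq 3$ of them, so $2\times g_1$ can place two like-coloured cells over a common position, and symmetrically create a position whose two cells are both $1$. With such monochromatic positions present, $f_2\times 2$ can pair one of them against an offending position to flip the troublesome parity while pairing the rest within equal types, the room being supplied by $\card{A}\geq 3$. Turning this into a proof that the correction always succeeds will require a short case analysis of the type-counts modulo $2$ (and, in one sub-case, modulo $4$); I regard establishing this parity bookkeeping as the crux of the whole argument.
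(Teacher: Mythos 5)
Your overall architecture is the same as the paper's: both realize $\sigma$ as a product $(f\times 2)(2\times g)(f\times 2)(2\times g)$ whose intermediate colorings are, in the paper's terminology, \emph{regular} after the second factor (your ``each type occurs an even number of times'') and \emph{symmetric} after the third (your ``column-constant''), and your treatment of the last two factors is correct --- it is exactly Lemmas~\ref{lem-sym} and~\ref{lem-standard}. The gap is that the one claim carrying all of the combinatorial content --- that a single $2\times g_1$ followed by a single $f_2\times 2$ turns an arbitrary fair coloring into a regular one --- is asserted, not proved. You explicitly defer it to ``a short case analysis of the type-counts modulo $2$'' and yourself call it the crux; without it the proposition has not been established. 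It is not a routine omission: the paper needs three lemmas to close precisely this gap. Lemma~\ref{lem-nearly-sym} chooses $g_1$ so that the image is \emph{nearly symmetric} (all indices matched up in pairs of equal color pair except for at most one $(0,0)/(1,1)$ defect and one $(0,1)/(1,0)$ defect, which is exactly what the constraints $n_c(0,0)=n_c(1,1)$ and $n_c(0,1)\equiv n_c(1,0)\pmod 2$ allow), and Lemmas~\ref{lem-nearly-standard} and~\ref{lem-regular} construct the single $f_2$, the latter via an explicit four-case analysis on three designated elements of $A$ (Lemma~\ref{lem-regular-3}).

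Moreover, the sketch you give of how the deferred step would go is too loose to be confident it closes. The pairing of $2\times A$ induced by $f_2$ acts simultaneously on both rows $y=0$ and $y=1$: a pair $\{u_1,u_2\}$ sent to column $a'$ produces the two new color pairs $(c(u_1,0),c(u_2,0))$ and $(c(u_1,1),c(u_2,1))$, so its net effect on the four counts is governed by the row-profiles $(c(u_i,0),c(u_i,1))$ of \emph{both} members. ``Pairing the rest within equal types'' does not automatically contribute evenly --- pairing two elements that each have row-profile $(0,1)$ against one another adds one occurrence of $(0,0)$ and one of $(1,1)$, flipping both of those parities --- and since $n_c(0,0)$ and $n_c(0,1)$ can be odd simultaneously, the single $f_2$ must repair two independent parities at once while keeping all other contributions even. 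Controlling these interactions is exactly what the paper's intermediate notions of ``nearly symmetric'' and ``nearly standard'' are for; until you supply the corresponding bookkeeping, the proof is incomplete.
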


The proof of Proposition~\ref{pro-construction-2} will take up the
remainder of Section~\ref{sec-second}.

\subsection{Visualizing colorings}\label{ssec-vis-col}

Colorings on $2\times A\times 2$ can be visualized in the same
row-and-column format we used in Figure~\ref{fig-square}. An example
of a coloring, where $A=\s{p,q,r}$, is shown in
Figure~\ref{fig-vis-col}(a). The figure indicates, for example, that
$c(1,p,0)=0$, $c(1,q,0)=1$, and so on. When the names of the elements
of $A$ are not important, we omit them. Additionally, we sometimes
represent the colors 0 and 1 by black and white squares, respectively,
as in Figure~\ref{fig-vis-col}(b).

\begin{figure}
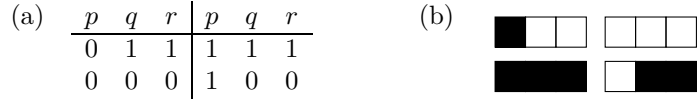

  \[
  \text{(a)}\quad
  \begin{array}[t]{ccc|ccc}
    p & q & r & p & q & r \\
    \hline
    0 & 1 & 1 & 1 & 1 & 1 \\
    0 & 0 & 0 & 1 & 0 & 0 \\
  \end{array}
  \qquad\qquad
  \text{(b)}\quad
  \begin{array}[t]{c}
    \\[-1.5ex]
    \lgsquares{0.4}{011}~~\lgsquares{0.4}{111}\\[.5mm]
    \lgsquares{0.4}{000}~~\lgsquares{0.4}{100}
  \end{array}
  \]
  \caption{Visualizing colorings of $2\times A\times 2$}\label{fig-vis-col}
\end{figure}

\subsection{Color pairs}\label{ssec-colorpair}

We begin by characterizing when two colorings $c,c'$ of $2\times X$
are related by the action of a permutation of the form $2\times g$ for
$g\in\S(X)$. This is the case if and only if $c$ and $c'$ have the
same {\em color pair distribution}.

\begin{definition}
  Let $X$ be a set, and consider a coloring $c$ of $2\times X$. We
  define a function $\dd{c} : X \to 2\times 2$ by
  $\dd{c}(x) = (c(0,x), c(1,x))$.  We call $\dd{c}(x)$ the {\em color
    pair} of $x$.
\end{definition}

Informally, a color pair corresponds to a single column of digits in
Figure~\ref{fig-vis-col}(a).  We note that the action of permutations
$g\in\S(X)$ respects color pairs in the following sense: let
$c' = (2\times g)\bu c$. Then
\begin{equation}\label{eqn-d-c}
  \dd{c'}(g(x)) = (c'(0,g(x)), c'(1,g(x))) = (c(0,x), c(1,x)) =
  \dd{c}(x).
\end{equation}
In particular, the action of $2\times g$ on colorings does not change
the {\em number} of elements of $X$ with each color pair. Conversely,
whenever two colorings $c,c'$ have this property, then they are
related by the action of $2\times g$, for some $g$. The following
definition helps us state this more precisely.

\begin{definition}
  Let $X$ be a set, and $c$ a coloring of $2\times X$. For any
  $i,j\in 2$, define $N_c(i,j)\seq X$ to be the set of elements with
  color pair $(i,j)$, i.e.,
  \[ N_c(i,j) = \s{x\in X\mid \dd{c}(x) = (i,j)}.
  \]
  Note that $X$ is the disjoint union of the $N_c(i,j)$, for $i,j\in2$.
  Let $n_c(i,j) = \card{N_c(i,j)}$ be the number of elements with
  color pair $(i,j)$. Then the {\em color pair distribution} of $c$ is
  the 4-tuple
  \[ (n_c(0,0), n_c(0,1), n_c(1,0), n_c(1,1)).
  \]
\end{definition}

For example, the coloring from Figure~\ref{fig-vis-col} has color pair
distribution $(1,4,0,1)$, because the color pair $(0,0)$ occurs once,
the color pair $(0,1)$ occurs four times, and so on. The following
lemma is then obvious.

\begin{lemma}\label{lem-color-pair-dist}
  Let $c,c'$ be colorings of $2\times X$. Then $c,c'$ have the same
  color pair distribution if and only if there exists a permutation
  $g\in\S(X)$ such that $c' = (2\times g)\bu c$. \qed
\end{lemma}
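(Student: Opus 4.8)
The plan is to prove both implications directly from the definitions, treating the statement as the bookkeeping exercise it is.

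First I would dispatch the forward direction using the relation \eqref{eqn-d-c} already established above. If $c' = (2\times g)\bu c$, then \eqref{eqn-d-c} gives $\dd{c'}(g(x)) = \dd{c}(x)$ for all $x\in X$. Reading this off, $g$ sends every element of color pair $(i,j)$ under $c$ to an element of color pair $(i,j)$ under $c'$; that is, $g$ restricts to a map $N_c(i,j) \to N_{c'}(i,j)$ for each $i,j\in 2$. Since $g$ is a bijection and $X$ is the disjoint union of the $N_c(i,j)$ (and likewise of the $N_{c'}(i,j)$), these restrictions are themselves bijections, whence $n_c(i,j) = n_{c'}(i,j)$ for all $i,j$. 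Thus $c$ and $c'$ have the same color pair distribution.

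For the converse, suppose $n_c(i,j) = n_{c'}(i,j)$ for all $i,j$. Then for each color pair I can choose a bijection $g_{i,j} : N_c(i,j) \to N_{c'}(i,j)$. Because $X$ is partitioned by the four sets $N_c(i,j)$ on the one side and by the four sets $N_{c'}(i,j)$ on the other, these bijections assemble into a single permutation $g\in\S(X)$. It then remains to verify $(2\times g)\bu c = c'$, which is a direct computation: for $y\in X$ with $g\inv(y)=x\in N_c(i,j)$, the definition of the action gives $((2\times g)\bu c)(k,y) = c(k, g\inv(y)) = c(k,x)$, which equals $i$ for $k=0$ and $j$ for $k=1$; and by the choice of $g$ we have $y = g(x)\in N_{c'}(i,j)$, so $c'(k,y)$ takes the same two values. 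Hence the two colorings agree everywhere.

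The only point to watch is keeping the direction of the action straight, since $\sigma\bu c$ is defined via $\sigma\inv$; this is why $g$ must be built to carry $N_c(i,j)$ \emph{onto} $N_{c'}(i,j)$ rather than the reverse. Beyond this there is no real obstacle, and indeed the whole content of the lemma is that the orbit of a coloring under the subgroup $\s{2\times g \mid g\in\S(X)}$ is determined exactly by the multiset of its color pairs.
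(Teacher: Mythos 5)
Your proof is correct; both directions are handled carefully, including the subtlety that the action $\sigma\bu c$ involves $\sigma\inv$, so $g$ must carry $N_c(i,j)$ onto $N_{c'}(i,j)$. The paper itself gives no proof (it declares the lemma ``obvious'' and marks it \qed in the statement), and your argument is exactly the routine verification the author had in mind.
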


\subsection{Color standardization}

\begin{definition}
  Let $A$ be a set, and let $c$ be a coloring of $2\times A\times 2$.
  We say that $c$ is
  \begin{itemize}
  \item {\em standard} if $c=\cstan$, i.e., if $\dd{c}(a,0)=\dd{c}(a,1)=(0,1)$
    for all $a\in A$;
  \item {\em symmetric} if $\dd{c}(a,0) = \dd{c}(a,1)$ for all $a\in A$;
  \item {\em regular} if each color pair occurs an even number of
    times, i.e., if $n_c(0,0)$, $n_c(0,1)$, $n_c(1,0)$, and $n_c(1,1)$
    are even;
  \item {\em nearly standard} if $\dd{c}(a,0) = \dd{c}(a,1) = (0,1)$ for
    almost all $a\in A$, except that there is at most one $a_1\in A$
    such that $\dd{c}(a_1,0)=(0,0)$ and $\dd{c}(a_1,1)=(1,1)$, and at most
    one $a_2\in A$ such that $\dd{c}(a_2,0)=(0,1)$ and
    $\dd{c}(a_2,1)=(1,0)$;
  \item {\em nearly symmetric} if $\dd{c}(a,0) = \dd{c}(a,1)$ for almost all
    $a\in A$, except that there is at most one $a_1\in A$ such that
    $\dd{c}(a_1,0)=(0,0)$ and $\dd{c}(a_1,1)=(1,1)$, and at most one $a_2\in
    A$ such that $\dd{c}(a_2,0)=(0,1)$ and $\dd{c}(a_2,1)=(1,0)$.
  \end{itemize}
\end{definition}

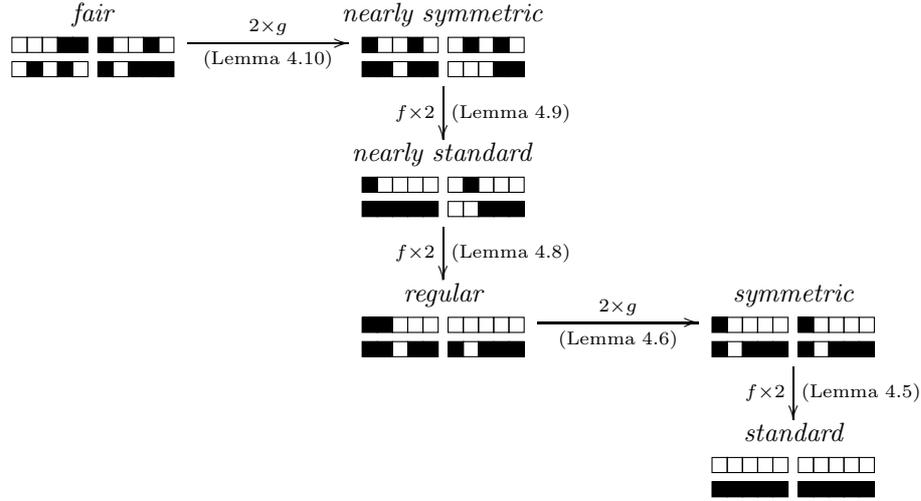
\begin{figure}
\[ 
\xymatrix@C+3.6em@R-1ex{
  *{
    \begin{array}{c}
      \textit{fair}\\
      \squares{11100}~\squares{01101}\\[-1mm]
      \squares{10101}~\squares{01000}
    \end{array}
  }
  \ar[r]^<>(.5){2\times g}_<>(.5){\text{(Lemma~\ref{lem-nearly-sym})}}
  & *{
    \begin{array}{c}
      \makebox[0in]{\textit{nearly symmetric}}\\
      \squares{01101}~\squares{10101}\\[-1mm]
      \squares{00100}~\squares{11100}
    \end{array}
  }
  \ar[d]_<>(.5){f\times 2}^<>(.5){\text{(Lemma~\ref{lem-nearly-standard})}}\\
  & *{
    \begin{array}{c}
      \makebox[0in]{\textit{nearly standard}}\\
      \squares{01111}~\squares{10111}\\[-1mm]
      \squares{00000}~\squares{11000}
    \end{array}
  }
  \ar[d]_<>(.5){f\times 2}^<>(.5){\text{(Lemma~\ref{lem-regular})}}\\
  & *{
    \begin{array}{c}
      \textit{regular}\\
      \squares{00111}~\squares{11111}\\[-1mm]
      \squares{00100}~\squares{01000}
    \end{array}
  }
  \ar[r]^<>(.5){2\times g}_<>(.5){\text{(Lemma~\ref{lem-sym})}}
  & *{
    \begin{array}{c}
      \textit{symmetric}\\
      \squares{01111}~\squares{01111}\\[-1mm]
      \squares{01000}~\squares{01000}
    \end{array}
  }
  \ar[d]_<>(.5){f\times 2}^<>(.5){\text{(Lemma~\ref{lem-standard})}}\\
  && *{
    \begin{array}{c}
      \textit{standard}\\
      \squares{11111}~\squares{11111}\\[-1mm]
      \squares{00000}~\squares{00000}
    \end{array}
  }
}
\]
\caption{Standardizing a fair permutation}\label{fig-standardizing}
\end{figure}

An example of each of these properties is shown in
Figure~\ref{fig-standardizing}.  Our strategy for proving
Proposition~\ref{pro-construction-2} is to use the action of
permutations of the forms $2\times g$ and $f\times 2$ to successively
improve the properties of a coloring until it is standard.  This
procedure is also outlined in Figure~\ref{fig-standardizing}, along
with the number of the lemma that will be used in each step. The
remainder of this section is devoted to the statements and proofs of
these lemmas, culminating in the proof of
Proposition~\ref{pro-construction-2} in Section~\ref{ssec-proof}.

\begin{lemma}\label{lem-standard}\label{lem-4.5}
  Let $c$ be a symmetric fair coloring of $2\times A\times 2$.  Then
  there exists $f\in\S(2\times A)$ such that $(f\times 2)\bu c$ is
  standard.
\end{lemma}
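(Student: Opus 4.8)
The plan is to use the symmetry of $c$ to collapse the problem from $2\times A\times 2$ down to $2\times A$, where it is already solved by Remark~\ref{rem-fair}. Since $c$ is symmetric, we have $c(i,a,0)=c(i,a,1)$ for all $i\in 2$ and $a\in A$, so $c$ is completely determined by the coloring $\bar c$ of $2\times A$ defined by $\bar c(i,a)=c(i,a,0)$. Geometrically, the left and right columns in the picture of $c$ are identical, and a permutation of the form $f\times 2$ acts on both columns by the very same rule $f$ (it permutes the $(i,a)$-coordinate and fixes the last coordinate). It should therefore suffice to produce an $f$ that standardizes the single coloring $\bar c$.

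First I would check that $\bar c$ is fair. Splitting $c\inv\s{0}\seq 2\times A\times 2$ according to the value of the last coordinate and using $c(i,a,0)=c(i,a,1)$, one sees that $\card{c\inv\s{0}}=2\,\card{\bar c\inv\s{0}}$, and likewise $\card{c\inv\s{1}}=2\,\card{\bar c\inv\s{1}}$; hence fairness of $c$ gives fairness of $\bar c$. Then Remark~\ref{rem-fair}, applied to the fair coloring $\bar c$ of $2\times A$, produces a permutation $f\in\S(2\times A)$ with $f\bu\bar c=\cstan$, where $\cstan$ now denotes the standard coloring of $2\times A$; that is, $(f\bu\bar c)(i,a)=i$ for all $i,a$.

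Finally I would verify that $(f\times 2)\bu c$ is standard. Unwinding the definition of the action, and using $(f\times 2)\inv=f\inv\times 2$, gives $\big((f\times 2)\bu c\big)(i,a,j)=c\big(f\inv(i,a),j\big)$; by symmetry of $c$ this is independent of $j$ and equals $\bar c\big(f\inv(i,a)\big)=(f\bu\bar c)(i,a)=i$. Thus $\big((f\times 2)\bu c\big)(i,a,j)=i$ for all $i,a,j$, which is exactly the condition that $(f\times 2)\bu c$ be the standard coloring $\cstan$ of $2\times A\times 2$. I expect no serious obstacle here: the whole content is the observation that symmetry lets one ignore the last coordinate, after which the conclusion is immediate from Remark~\ref{rem-fair}. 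The only point requiring care is keeping the action bookkeeping straight — in particular that $f\times 2$ applies the same $f$ to both slices $j=0$ and $j=1$, which is precisely why symmetry of $c$ is essential (and why fairness alone would not suffice).
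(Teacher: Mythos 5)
Your proof is correct and follows essentially the same route as the paper: both arguments collapse the symmetric coloring $c$ to the coloring $p(i,a)=c(i,a,0)$ of $2\times A$, observe that it is fair, apply Remark~\ref{rem-fair} to standardize it with some $f\in\S(2\times A)$, and conclude that $f\times 2$ standardizes $c$. Your version merely spells out in more detail the fairness count and the final unwinding of the action, which the paper leaves implicit.
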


\begin{proof}
  Since $c$ is symmetric, we have $c(i,a,0) = c(i,a,1)$ for all
  $(i,a)\in 2\times A$; write $p(i,a) = c(i,a,0)$. Since $c$ is fair,
  $p : 2\times A \to 2$ is also fair. By Remark~\ref{rem-fair}, there
  exists a permutation $f \in \S(2\times A)$ such that $f\bu p$ is the
  standard coloring of $2\times A$. It follows that $(f\times 2)\bu c$
  is the standard coloring of $2\times A\times 2$.
\end{proof}

\begin{lemma}\label{lem-sym}
  Let $c$ be a regular coloring of $2\times A\times 2$. Then there
  exists $g\in\S(A\times 2)$ such that $(2\times g)\bu c$ is symmetric.
\end{lemma}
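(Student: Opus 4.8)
The plan is to recast the lemma as a purely combinatorial statement about the color-pair function $\dd{c} : A\times 2 \to 2\times 2$ and then solve it by an elementary pairing argument. The key observation is equation~\eqref{eqn-d-c}: for $g\in\S(A\times 2)$ and $c' = (2\times g)\bu c$, we have $\dd{c'}(g(x)) = \dd{c}(x)$ for all $x\in A\times 2$. Thus the action of $2\times g$ does nothing but relabel the domain $A\times 2$ of $\dd{c}$ by $g$. Under this dictionary, the requirement that $c'$ be symmetric, namely $\dd{c'}(a,0) = \dd{c'}(a,1)$ for all $a\in A$, becomes the condition that after relabeling, the two elements landing in each standard pair $\{(a,0),(a,1)\}$ carry the same color pair under $\dd{c}$.

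So the goal reduces to partitioning $A\times 2$ into $|A|$ \emph{monochromatic} pairs, meaning pairs whose two elements share a common color pair. First I would use regularity: since $c$ is regular, each fiber $N_c(i,j)$ has even cardinality $n_c(i,j)$, so I can choose an arbitrary perfect matching within each fiber, splitting it into monochromatic pairs. Taking the union over the four fibers produces a partition of $A\times 2$ into $\tfrac12\sum_{i,j}n_c(i,j) = |A|$ monochromatic pairs.

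Next I would convert this matching into the desired permutation. Indexing the $|A|$ monochromatic pairs by the elements of $A$, say as $\{x_a,y_a\}$, I define $g\in\S(A\times 2)$ by $g(x_a)=(a,0)$ and $g(y_a)=(a,1)$; this is a bijection because the $x_a,y_a$ exhaust $A\times 2$ and so do the targets. Then \eqref{eqn-d-c} gives $\dd{c'}(a,0)=\dd{c}(x_a)$ and $\dd{c'}(a,1)=\dd{c}(y_a)$, which agree since $\{x_a,y_a\}$ is monochromatic. Hence $\dd{c'}(a,0)=\dd{c'}(a,1)$ for every $a$, i.e.\ $c'=(2\times g)\bu c$ is symmetric, as required.

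I do not expect a serious obstacle: the entire content of the lemma is the equivalence between regularity (all fibers even) and the existence of a monochromatic pairing, and this becomes transparent once the problem is phrased through $\dd{c}$. The only point needing care is the bookkeeping that shuttles between the three equivalent viewpoints --- the coloring $c'$, its color-pair function $\dd{c'}$, and the permutation $g$ related to them by \eqref{eqn-d-c} --- together with the observation that the pairs can legitimately be indexed by $A$ because their number is exactly $|A|$.
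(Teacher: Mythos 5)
Your proof is correct and matches the paper's argument in substance: both hinge on the observation that regularity makes every fiber $N_c(i,j)$ of even size, so its elements can be paired up and the pairs distributed one per column $\s{(a,0),(a,1)}$. The only cosmetic difference is that the paper builds the target symmetric coloring $c'$ with the same color pair distribution and invokes Lemma~\ref{lem-color-pair-dist} to obtain $g$, whereas you inline that lemma by constructing $g$ explicitly from the matching.
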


\begin{proof}
  Since $c$ is regular, we can find integers $p,q,r,s$ such that
  $n_c(0,0)=2p$, $n_c(1,1)=2q$, $n_c(0,1)=2r$, and $n_c(1,0)=2s$.
  Note that $n_c(0,0)+n_c(0,1)+n_c(1,0)+n_c(1,1)=2\card{A}$, and
  therefore $p+q+r+s = \card{A}$. Write $A$ as a disjoint union of
  sets $P\cup Q\cup R\cup S$, where $\card{P}=p$, $\card{Q}=q$,
  $\card{R}=r$, and $\card{S}=s$. Define a coloring $c'$ by
  \[
  \begin{array}{cl}
    \dd{c'}(a,0) = \dd{c'}(a,1) = (0,0), & \mbox{if $a\in P$,} \\
    \dd{c'}(a,0) = \dd{c'}(a,1) = (1,1), & \mbox{if $a\in Q$,} \\
    \dd{c'}(a,0) = \dd{c'}(a,1) = (0,1), & \mbox{if $a\in R$,} \\
    \dd{c'}(a,0) = \dd{c'}(a,1) = (1,0), & \mbox{if $a\in S$.} \\
  \end{array}
  \]
  Then by construction, $c'$ is symmetric and has the same color pair
  distribution as $c$. Hence by Lemma~\ref{lem-color-pair-dist},
  there exists $g\in\S(A\times 2)$ such that $c'=(2\times g)\bu c$,
  which was to be shown.
\end{proof}

\begin{lemma}\label{lem-regular-3}
  Suppose $\card{A}=3$ and $c$ is a nearly standard coloring of
  $2\times A\times 2$. Then there exists $f\in\S(2\times A)$ such that
  $(f\times 2)\bu c$ is regular.
\end{lemma}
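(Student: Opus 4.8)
My plan is to convert the action of $f\times 2$ into a rearrangement problem that can then be settled by inspection. For a coloring $c$ of $2\times A\times 2$ and an element $(i,a)\in 2\times A$, define its \emph{data pair} $D_c(i,a)=(c(i,a,0),c(i,a,1))\in 2\times 2$. A direct computation shows that if $c'=(f\times 2)\bu c$ then $c'(i,a,j)=c(f\inv(i,a),j)$, so $D_{c'}=D_c\circ f\inv$. Hence, as $f$ ranges over $\S(2\times A)$, the image $c'$ ranges exactly over the colorings obtained by permuting the six data pairs of $c$ among the six positions of the grid $2\times A$ (given any target arrangement with the same multiset of data pairs, choose $f$ to match positions carrying equal data pairs). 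This simply mirrors the color-pair analysis of Section~\ref{ssec-colorpair}, with the two factors of $2$ interchanged. Moreover, the color pairs of any such $c'$ are read off column by column: the color pair of $(a,j)$ in $c'$ is the ordered pair formed from the $j$-th components of $D_{c'}(0,a)$ and $D_{c'}(1,a)$. It therefore suffices to arrange the six data pairs of $c$ in the $2\times 3$ grid so that the resulting color pair distribution is even.

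Next I would read off the multiset of data pairs of a nearly standard coloring. A \emph{normal} element $a$ (one with $\dd{c}(a,0)=\dd{c}(a,1)=(0,1)$) contributes the two data pairs $(0,0)$ and $(1,1)$; the exceptional element $a_1$, if present, contributes $(0,1)$ and $(0,1)$; and $a_2$, if present, contributes $(0,1)$ and $(1,0)$. Since $\card{A}=3$, this leaves exactly four cases, according to which of $a_1,a_2$ occur, with data-pair multisets $\s{(0,0),(0,0),(0,0),(1,1),(1,1),(1,1)}$, $\s{(0,0),(0,0),(1,1),(1,1),(0,1),(0,1)}$, $\s{(0,0),(0,0),(1,1),(1,1),(0,1),(1,0)}$, and $\s{(0,0),(1,1),(0,1),(0,1),(0,1),(1,0)}$ respectively.

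Finally I would exhibit a suitable arrangement in each case. If neither exception occurs then $c$ is already the standard coloring, hence regular, and $f=\id$ works. If only $a_1$ occurs, I place both copies of $(0,0)$ as the single column $((0,0),(0,0))$ and the remaining data pairs as two columns $((0,1),(1,1))$ (listing bottom then top); the resulting color pairs are $(0,0),(0,0),(0,1),(1,1),(0,1),(1,1)$, with distribution $(2,2,0,2)$. The case of only $a_2$ is identical except that one bottom $(0,1)$ is replaced by $(1,0)$, which merely exchanges a $(0,1)$ and a $(1,1)$ among the color pairs and so leaves the distribution $(2,2,0,2)$. The delicate case --- which I expect to be the main obstacle --- is when both $a_1$ and $a_2$ occur: here the ``obvious'' arrangements simply reproduce the non-regular nearly standard distribution $(1,3,1,1)$, so one must deliberately break the vertical pairing of the three copies of $(0,1)$. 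Taking the columns $((0,0),(1,0))$, $((0,1),(1,1))$, and $((0,1),(0,1))$ produces the color pairs $(0,1),(0,0),(0,1),(1,1),(0,0),(1,1)$, again of distribution $(2,2,0,2)$. In every case the displayed coloring is regular and, carrying the same multiset of data pairs as $c$, equals $(f\times 2)\bu c$ for an appropriate $f$, which completes the proof.
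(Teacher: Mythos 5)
Your proof is correct: I checked the data-pair multiset in each of the four cases and each of the three displayed arrangements, and every one of them does produce the color pair distribution $(2,2,0,2)$, hence a regular coloring. The overall shape of the argument is the same as the paper's --- both proofs reduce to the same four cases according to which of $a_1,a_2$ is present, and both dispose of the case with neither exception by noting the coloring is already standard. Where you genuinely differ is in how the permutation $f$ is produced. The paper writes down an explicit $3$-cycle $f\in\S(2\times A)$ in each nontrivial case and displays the resulting coloring as a grid for the reader to inspect. You instead introduce the data pair $D_c(i,a)=(c(i,a,0),c(i,a,1))$, which is the mirror image of the paper's color pair $\dd{c}(a,j)$ with the two factors of $2$ interchanged, and observe that the orbit of $c$ under the action of the permutations $f\times 2$ consists exactly of the colorings carrying the same multiset of data pairs. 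This is precisely the transpose of Lemma~\ref{lem-color-pair-dist}, and it converts the problem into arranging six symbols in a $2\times 3$ grid so that the columns read off an even color pair distribution. What your route buys is that no explicit permutation need be written or verified, and the reachable colorings are characterized once and for all; what the paper's route buys is concreteness, with an $f$ one can point to. Either version is a complete proof, and your identification of the two-exception case as the only one requiring a non-obvious arrangement matches the paper's Case 4.
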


\begin{proof}
  Write $A$ as the disjoint union $A_1\cup A_2\cup A_3$, where
  $\dd{c}(a_1,0)=(0,0)$ and $\dd{c}(a_1,1)=(1,1)$ for all $a_1\in A_1$,
  $\dd{c}(a_2,0)=(0,1)$ and $\dd{c}(a_2,1)=(1,0)$ for all $a_2\in A_2$, and
  $\dd{c}(a,0) = \dd{c}(a,1) = (0,1)$ for all $a\in A_3$. By the definition
  of nearly standard, we know that $A_1$ and $A_2$ have at most one
  element each. Since we assumed $\card{A}=3$, this leaves us with
  four cases. 
  \begin{itemize}
  \item Case 1. Assume $\card{A_1}=\card{A_2}=0$. Say
    $A_3=\s{p,q,r}$. Since $\dd{c}(a,0) = \dd{c}(a,1) = (0,1)$ for all
    $a\in A$, $c$ is the following coloring (using the notation
    of Section~\ref{ssec-vis-col}):
    \[
    \begin{array}{ccc|ccl}
      p & q & r & p & q & r \\
      \hline
      1 & 1 & 1 & 1 & 1 & 1 \\
      0 & 0 & 0 & 0 & 0 & 0. \\
    \end{array}
    \]
    Since $c$ is already regular (in fact, standard), we can take $f$
    to be the identity permutation.
  \item Case 2. Assume $\card{A_1}=1$ and $\card{A_2}=0$. Say
    $A_1=\s{a_1}$ and $A_3=\s{p,q}$. Then $c$ is the
    coloring
    \[
    \begin{array}{ccc|ccl}
      a_1 & p & q & a_1 & p & q \\
      \hline
      0 & 1 & 1 & 1 & 1 & 1 \\
      0 & 0 & 0 & 1 & 0 & 0. \\
    \end{array}
    \]
    Define $f : 2\times A \to 2\times A$ by $f(0,a_1)=(1,p)$,
    $f(1,p)=(0,q)$, $f(0,q)=(0,a_1)$, and the identity elsewhere.
    Then $(f\times 2)\bu c$ is the coloring
    \[
    \begin{array}{ccc|ccl}
      a_1 & p & q & a_1 & p & q \\
      \hline
      0 & 0 & 1 & 1 & 1 & 1 \\
      0 & 0 & 1 & 0 & 0 & 1. \\
    \end{array}
    \]
  \item Case 3. Assume $\card{A_1}=0$ and $\card{A_2}=1$. Say
    $A_2=\s{a_2}$ and $A_3=\s{p,q}$. Then $c$ is the coloring
    \[
    \begin{array}{ccc|ccl}
      a_2 & p & q & a_2 & p & q \\
      \hline
      1 & 1 & 1 & 0 & 1 & 1 \\
      0 & 0 & 0 & 1 & 0 & 0. \\
    \end{array}
    \]
    Define $f : 2\times A \to 2\times A$ by $f(0,a_2)=(1,p)$,
    $f(1,p)=(0,q)$, $f(0,q)=(0,a_2)$, and the identity elsewhere.
    Then $(f\times 2)\bu c$ is the coloring 
    \[
    \begin{array}{ccc|ccl}
      a_2 & p & q & a_2 & p & q \\
      \hline
      1 & 0 & 1 & 0 & 1 & 1 \\
      0 & 0 & 1 & 0 & 0 & 1. \\
    \end{array}
    \]
  \item Case 4. Assume $\card{A_1}=\card{A_2}=1$. Say $A_1=\s{a_1}$,
    $A_2=\s{a_2}$, and $A_3=\s{p}$. Then $c$ is the coloring 
    \[
    \begin{array}{ccc|ccl}
      a_1 & a_2 & p & a_1 & a_2 & p \\
      \hline
      0 & 1 & 1 & 1 & 0 & 1 \\
      0 & 0 & 0 & 1 & 1 & 0. \\
    \end{array}
    \]
    Define $f : 2\times A \to 2\times A$ by $f(0,a_1)=(1,a_2)$,
    $f(1,a_2)=(0,p)$, $f(0,p)=(0,a_1)$, and the identity elsewhere.
    Then $(f\times 2)\bu c$ is the coloring
    \[
    \begin{array}{ccc|ccl}
      a_1 & a_2 & p & a_1 & a_2 & p \\
      \hline
      0 & 0 & 1 & 1 & 1 & 1 \\
      0 & 0 & 1 & 0 & 1 & 0. \\
    \end{array}
    \]
  \end{itemize}
  In all four cases, $(f\times 2)\bu c$ is regular, as desired.
\end{proof}

\begin{lemma}\label{lem-regular}
  Suppose $\card{A}\geq 3$ and $c$ is a nearly standard coloring of
  $2\times A\times 2$. Then there exists $f\in\S(2\times A)$ such that
  $(f\times 2)\bu c$ is regular.
\end{lemma}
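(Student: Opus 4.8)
The plan is to reduce to the case $\card{A}=3$ already settled in Lemma~\ref{lem-regular-3}. The guiding observation is that regularity depends only on the number of columns of each color pair, and that any column $a$ with $\dd{c}(a,0)=\dd{c}(a,1)=(0,1)$ contributes an even amount (namely $2$ to the count of the pair $(0,1)$, and $0$ elsewhere). Hence the only columns that can obstruct regularity are the at-most-one special column $a_1$ with $\dd{c}(a_1,0)=(0,0)$, $\dd{c}(a_1,1)=(1,1)$, and the at-most-one special column $a_2$ with $\dd{c}(a_2,0)=(0,1)$, $\dd{c}(a_2,1)=(1,0)$. I would isolate these into a $3$-element subset, fix them up there using the base case, and leave every other column untouched.

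First I would write $A=A_1\cup A_2\cup A_3$ exactly as in the proof of Lemma~\ref{lem-regular-3}, where $A_1=\s{a_1}$ or $\emptyset$, $A_2=\s{a_2}$ or $\emptyset$, and $A_3$ consists of the columns $a$ with $\dd{c}(a,0)=\dd{c}(a,1)=(0,1)$. Since $\card{A_1}\leq 1$, $\card{A_2}\leq 1$, and $\card{A}\geq 3$, we have $\card{A_3}=\card{A}-\card{A_1}-\card{A_2}\geq 3-\card{A_1}-\card{A_2}$. I would then choose a subset $B\seq A$ with $\card{B}=3$, $A_1\cup A_2\seq B$, and $B\setminus(A_1\cup A_2)\seq A_3$; the displayed inequality guarantees that $A_3$ has enough elements to fill $B$ up to size $3$.

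Restricting $c$ to the columns whose middle coordinate lies in $B$ yields a coloring of $2\times B\times 2$ that is again nearly standard, since it retains the same $a_1$ and $a_2$ and all other columns of $B$ belong to $A_3$. As $\card{B}=3$, Lemma~\ref{lem-regular-3} supplies $f_B\in\S(2\times B)$ such that $(f_B\times 2)\bu(c|_{2\times B\times 2})$ is regular. I would extend $f_B$ to $f\in\S(2\times A)$ by letting it act as $f_B$ on $2\times B$ and as the identity on $2\times(A\setminus B)$; this is well-defined because $2\times A$ is the disjoint union of $2\times B$ and $2\times(A\setminus B)$.

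Finally, because $f$ preserves this partition, the action of $f\times 2$ splits: on columns $a\in B$ it agrees with $(f_B\times 2)\bu(c|_{2\times B\times 2})$, while on columns $a\in A\setminus B$ it leaves $c$ unchanged. Every column in $A\setminus B$ lies in $A_3$, so it contributes only to the pair $(0,1)$, and the total $2\card{A\setminus B}$ such contributions are even. Thus for each color pair $(i,j)$ the count $n_{(f\times 2)\bu c}(i,j)$ is the sum of an even term coming from $B$ and an even term coming from $A\setminus B$, so $(f\times 2)\bu c$ is regular. The only real point requiring care is the counting that justifies the choice of $B$; the splitting of the action over the partition and the parity bookkeeping are routine.
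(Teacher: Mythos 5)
Your proposal is correct and takes essentially the same route as the paper, which likewise reduces to Lemma~\ref{lem-regular-3} by observing that a nearly standard coloring is already standard (hence contributes evenly to every color pair count) outside the at most two exceptional columns, so one can restrict attention to a $3$-element subset containing them. You merely spell out the details (the choice of $B$, the extension of $f_B$ by the identity, and the parity bookkeeping) that the paper leaves implicit.
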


\begin{proof}
  The only difference to Lemma~\ref{lem-regular-3} is that $A$ may
  have more than 3 elements. However, by the definition of nearly
  standard, $c$ is already standard (hence regular) on the excess
  elements. Therefore, we can ignore all but 3 elements of $A$ and
  proceed as in Lemma~\ref{lem-regular-3}.
\end{proof}

\begin{lemma}\label{lem-nearly-standard}
  Let $c$ be a nearly symmetric fair coloring of $2\times A\times 2$.
  Then there exists $f\in\S(2\times A)$ such that $(f\times 2)\bu c$
  is nearly standard.
\end{lemma}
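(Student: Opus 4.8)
The plan is to exploit the fact that the exceptional elements permitted by ``nearly symmetric'' and ``nearly standard'' are of \emph{exactly} the same two types: one element $a_1$ with $\dd{c}(a_1,0)=(0,0)$, $\dd{c}(a_1,1)=(1,1)$, and one element $a_2$ with $\dd{c}(a_2,0)=(0,1)$, $\dd{c}(a_2,1)=(1,0)$. Hence the two notions differ only on the ``symmetric backbone'': on the set $B = A\setminus\{a_1,a_2\}$ of non-exceptional elements, nearly symmetric demands only $\dd{c}(a,0)=\dd{c}(a,1)$, whereas nearly standard demands that this common value be $(0,1)$. So it suffices to find $f$ that fixes the exceptional elements and standardizes the backbone. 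First I would fix an element $a_1$ (resp.\ $a_2$) realizing each exception if it occurs, and let $B$ be the rest; on $B$ the two columns coincide, i.e.\ $c(i,a,0)=c(i,a,1)$ for all $i\in 2$ and $a\in B$, so they are jointly described by a single coloring $p$ of $2\times B$, given by $p(i,a)=c(i,a,0)$.

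The crucial step is to show that $p$ is a \emph{fair} coloring of $2\times B$. One might first hope to standardize a single column of $c$ directly via Remark~\ref{rem-fair}, but this fails: an $a_1$-exception puts two extra $0$'s in column $0$, so the individual columns need not be fair. Instead I would count the total number $N$ of $0$'s in $c$. Each element of $B$ contributes the same number of $0$'s to each column, for a total of $2Z$, where $Z$ is the number of $0$'s of $p$. Crucially, each exception---whether of type $a_1$ (color pairs $(0,0),(1,1)$) or type $a_2$ (color pairs $(0,1),(1,0)$)---contributes exactly two $0$'s in total. Hence $N = 2Z + 2e$, where $e\in\{0,1,2\}$ is the number of exceptions present. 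Since $c$ is fair, $N = 2\card{A}$, and since $\card{B} = \card{A} - e$, we get $Z = \card{A} - e = \card{B}$. As $2\times B$ has $2\card{B}$ elements, this says precisely that $p$ is fair.

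Given fairness of $p$, Remark~\ref{rem-fair} yields $f_0\in\S(2\times B)$ with $f_0\bu p = \cstan$. I would then define $f\in\S(2\times A)$ to equal $f_0$ on $2\times B$ and the identity on $2\times\{a_1,a_2\}$; since $f_0$ permutes within $2\times B$, this $f$ preserves the partition of $A$ into $B$ and the exceptional elements. Applying $f\times 2$ leaves the columns at $a_1,a_2$ untouched---so these retain their correctly typed exceptional color pairs---while on $B$ the same $f_0$ is applied to both columns; as both equal $p$ there, both become $\cstan$, giving color pair $(0,1)$ at every $a\in B$ in both columns. Hence $(f\times 2)\bu c$ is nearly standard. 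The main obstacle is the fairness bookkeeping of the previous paragraph: one must verify $Z=\card{B}$ uniformly across the four cases of which exceptions are present, which is exactly where the specific color-pair types of the two exceptions, together with the global fairness of $c$, are essential.
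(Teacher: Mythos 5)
Your proof is correct and takes essentially the same route as the paper's: restrict to the non-exceptional (symmetric) part of $A$, observe that each exceptional element contributes equally many $0$'s and $1$'s so the restriction remains fair, standardize it, and extend the permutation by the identity on the exceptional elements. The only cosmetic difference is that the paper delegates the standardization step to Lemma~\ref{lem-standard}, whereas you inline that lemma's argument (via Remark~\ref{rem-fair}) and make the fairness bookkeeping explicit.
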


\begin{proof}
  Let $A'=\s{a\in A\mid \dd{c}(a,0)=\dd{c}(a,1)}$, and let $c'$ be the
  coloring of $2\times A'\times 2$ obtained by restricting $c$ to the
  domain $2\times A'\times 2$. Then $c'$ is symmetric. By definition
  of ``nearly symmetric'', there exists at most two elements
  $a_1,a_2\in A\setminus A'$; moreover, the element $a_1$, if any,
  satisfies $\dd{c}(a_1,0)=(0,0)$ and $\dd{c}(a_1,1)=(1,1)$ and the
  element $a_2$, if any, satisfies $\dd{c}(a_2,0)=(0,1)$ and
  $\dd{c}(a_2,1)=(1,0)$.  By assumption, $c$ is fair. Since $c$
  restricted to $2\times (A\setminus A')\times 2$ is evidently fair as
  well, it follows that $c'$ is also fair.  We will choose the
  permutation $f$ so that its support does not touch the elements
  $a_1$ and $a_2$; it therefore suffices to find some permutation
  $f'\in\S(2\times A')$ such that $(f'\times 2)\bu c'$ is standard.
  But such an $f'$ exists by Lemma~\ref{lem-standard}.
\end{proof}

\begin{lemma}\label{lem-nearly-sym}\label{lem-4.10}
  Let $c$ be a fair coloring of $2\times A\times 2$. Then there
  exists $g\in\S(A\times 2)$ such that $(2\times g)\bu c$ is nearly
  symmetric.
\end{lemma}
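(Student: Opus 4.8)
The plan is to reduce the statement to a purely combinatorial construction via Lemma~\ref{lem-color-pair-dist}. Taking $X = A\times 2$, that lemma says two colorings of $2\times X$ are related by the action of some $2\times g$ exactly when they have the same color pair distribution. Hence it suffices to build a single nearly symmetric coloring $c'$ of $2\times A\times 2$ with $n_{c'}(i,j)=n_c(i,j)$ for all $i,j$; the $g\in\S(A\times 2)$ produced by Lemma~\ref{lem-color-pair-dist} then satisfies $(2\times g)\bu c = c'$, which is nearly symmetric, as required.

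First I would translate fairness into the language of color pairs. Abbreviating $n_{ij}=n_c(i,j)$, an element of $A\times 2$ with color pair $(0,0)$ contributes two $0$'s, one with pair $(1,1)$ contributes two $1$'s, and one with pair $(0,1)$ or $(1,0)$ contributes one of each. Thus $2\times A\times 2$ has $2n_{00}+n_{01}+n_{10}$ zeros and $2n_{11}+n_{01}+n_{10}$ ones, so $c$ is fair if and only if $n_{00}=n_{11}$. Moreover $n_{00}+n_{01}+n_{10}+n_{11}=2\card{A}$ is even and $n_{00}+n_{11}=2n_{00}$ is even, whence $n_{01}+n_{10}$ is even, i.e.\ $n_{01}$ and $n_{10}$ have the same parity.

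Next I would construct $c'$ by a parity case analysis, picturing the $\card{A}$ values of $a$ as columns, the $a$-th column carrying the color pairs $\dd{c'}(a,0)$ and $\dd{c'}(a,1)$; a column is symmetric when these agree. The recipe is: if $n_{00}=n_{11}$ is odd, reserve one column $a_1$ with $\dd{c'}(a_1,0)=(0,0)$ and $\dd{c'}(a_1,1)=(1,1)$; if $n_{01},n_{10}$ are odd, reserve one column $a_2$ with $\dd{c'}(a_2,0)=(0,1)$ and $\dd{c'}(a_2,1)=(1,0)$. Removing these exceptional columns deletes one copy of each odd count while keeping $n_{00}=n_{11}$, so every remaining count is even; the leftover color pairs then fill the remaining columns symmetrically by pairing equal pairs. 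Since the (at most two) exceptional columns have precisely the two forms permitted by the definition, $c'$ is nearly symmetric with the same color pair distribution as $c$, finishing the reduction.

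The only real subtlety is the alignment between the constraints forced on the counts and the two exceptions allowed by the definition of nearly symmetric: fairness yields $n_{00}=n_{11}$, so a lone surplus $(0,0)$ is automatically matched by a lone surplus $(1,1)$, absorbable into the single $a_1$-column, while the even total yields $n_{01}\equiv n_{10}\pmod 2$, so a surplus $(0,1)$ is matched by a surplus $(1,0)$, absorbable into the single $a_2$-column. Beyond this bookkeeping there is nothing to compute; one only needs enough columns to host the exceptions, which is automatic, since using both exceptions forces all four counts to be odd, hence nonzero, hence $\card{A}\geq 2$.
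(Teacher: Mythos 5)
Your proposal is correct and follows essentially the same route as the paper: derive the parity constraints $n_c(0,0)=n_c(1,1)$ and $n_c(0,1)\equiv n_c(1,0)\pmod 2$ from fairness, build a nearly symmetric coloring with the same color pair distribution (symmetric columns plus at most one exceptional column of each permitted type), and invoke Lemma~\ref{lem-color-pair-dist}. The only difference is presentational: you spell out the fairness-to-$n_{00}=n_{11}$ counting and the column-capacity check, which the paper leaves implicit.
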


\begin{proof}
  The proof is very similar to that of Lemma~\ref{lem-sym}.  Start by
  considering the color pair distribution
  $(n_c(0,0), n_c(0,1), n_c(1,0), n_c(1,1))$ of the given coloring
  $c$, and note that
  \begin{equation}\label{eqn-sum}
    n_c(0,0)+n_c(0,1)+n_c(1,0)+n_c(1,1)=2\card{A}.
  \end{equation}
  Because $c$ is fair, we must have $n_c(0,0)=n_c(1,1)$, and in
  particular, $n_c(0,0)$ and $n_c(1,1)$ have the same parity (even or
  odd). From {\eqref{eqn-sum}}, it follows that $n_c(0,1)$ and
  $n_c(1,0)$ have the same parity. Therefore, there exist natural
  numbers $p,q,r,s,t,u$, with $t,u\in\s{0,1}$, such that
  \[ n_c(0,0) = 2p+t,\quad
  n_c(1,1) = 2q+t,\quad
  n_c(0,1) = 2r+u,\quad
  n_c(1,0) = 2s+u.
  \]
  (As a matter of fact, $p=q$, but we will not make further use of
  this fact).  From {\eqref{eqn-sum}}, we have that
  $p+q+r+s+t+u = \card{A}$.  Write $A$ as a disjoint union
  $P\cup Q\cup R\cup S\cup T\cup U$, where $\card{P}=p$, $\card{Q}=q$,
  $\card{R}=r$, $\card{S}=s$, $\card{T}=t$, and $\card{U}=u$.  Define
  a coloring $c'$ by
  \[ 
  \begin{array}{cl}
    \dd{c'}(a,0) = \dd{c'}(a,1) = (0,0), & \mbox{if $a\in P$,} \\
    \dd{c'}(a,0) = \dd{c'}(a,1) = (1,1), & \mbox{if $a\in Q$,} \\
    \dd{c'}(a,0) = \dd{c'}(a,1) = (0,1), & \mbox{if $a\in R$,} \\
    \dd{c'}(a,0) = \dd{c'}(a,1) = (1,0), & \mbox{if $a\in S$,} \\
    \dd{c'}(a,0) = (0,0) \quad\mbox{and}\quad \dd{c'}(a,1) = (1,1), & \mbox{if $a\in T$,} \\
    \dd{c'}(a,0) = (0,1) \quad\mbox{and}\quad \dd{c'}(a,1) = (1,0), & \mbox{if $a\in U$.} \\
  \end{array}                                            
  \]
  By construction, $c'$ has the same color pair distribution as $c$.
  Hence by Lemma~\ref{lem-color-pair-dist}, there exists
  $g\in\S(A\times 2)$ such that $c'=(2\times g)\bu c$.
  On the other hand, by construction, $c'$ is nearly symmetric (with
  $a_1$ being the unique element of $T$, if any, and $a_2$ being the
  unique element of $U$, if any). 
\end{proof}

\subsection{Proof of Proposition~\ref{pro-construction-2}}\label{ssec-proof}

Proposition~\ref{pro-construction-2} is now an easy consequence of
Lemmas~\ref{lem-standard}--\ref{lem-nearly-sym}. Figure~\ref{fig-standardizing}
contains a proof ``without words''. For readers who prefer a proof
``with words'', we give it here:

Assume $\card{A}\geq 3$ and let $c$ be a fair coloring of
$2\times A\times 2$. By Lemma~\ref{lem-nearly-sym}, there exists
$g_1\in\S(A\times 2)$ such that $c_1 = (2\times g_1)\bu c$ is nearly
symmetric. By Lemma~\ref{lem-nearly-standard}, there exists
$f_2\in\S(2\times A)$ such that $c_2=(f_2\times 2)\bu c_1$ is nearly
standard. By Lemma~\ref{lem-regular}, there exists
$g_3\in\S(A\times 2)$ such that $c_3=(2\times g_3)\bu c_2$ is regular.
By Lemma~\ref{lem-sym}, there exists $g_4\in\S(A\times 2)$ such that
$c_4=(2\times g_4)\bu c_3$ is symmetric. By Lemma~\ref{lem-standard},
there exists $f_5\in\S(2\times A)$ such that
$c_5=(f_5\times 2)\bu c_4$ is standard.  Let
\[ 
\begin{array}{rcl}
  \sigma &=& (f_5\times 2)(2\times g_4)(2\times g_3)(f_2\times
           2)(2\times g_1) \\
       &=& (f_5\times 2)(2\times g_4g_3)(f_2\times 2)(2\times g_1).
\end{array}
\]
Then $\sigma\bu c = \cstan$, and $\sigma$ has alternation depth 4, as
claimed. \qed

\section{Proof of the main theorem}\label{sec-proof-main}

Our main result, Theorem~\ref{thm-main}, follows from
Propositions~\ref{pro-construction-1} and {\ref{pro-construction-2}}.
Specifically, let $\sigma\in \S(2\times A\times 2)$ be an even
permutation, and let $c = \sigma\inv \bu \cstan$. By
Proposition~\ref{pro-construction-2}, we can find
$\tau\in\S(2\times A\times 2)$ of alternation depth 4 such that
$\tau\bu c = \cstan$. Note that $\tau$ is even by
Remark~\ref{rem-even}. Let $\rho = \sigma\tau\inv$. Then $\rho$ is
also even, and
$\rho\bu\cstan = \sigma\bu(\tau\inv \bu \cstan) = \sigma\bu c =
\cstan$.
Therefore, by Lemma~\ref{lem-glue}, $\rho$ is of the form $g+h$, for
$g,h\in\S(A\times 2)$. By Proposition~\ref{pro-construction-1}, $\rho$
has alternation depth 5, and it follows that $\sigma=\rho\tau$ has
alternation depth 9, as claimed. \qed

\section{Even alternation depth}\label{sec-even}

As promised in Remark~\ref{rem-even-main}, we now refine
Theorem~\ref{thm-main} to the case where in each factor of the form
$f\times 2$ and $2\times g$, the permutations $f$ and $g$ are required
to be even.

\begin{definition}
  We say that $\sigma\in\S(2\times A\times 2)$ has {\em even
    alternation depth} $\dee$ if it can be written as a product of
  $\dee$ factors of the forms $f\times 2$ or $2\times g$, where each
  such $f\in\S(2\times A)$ and $g\in\S(A\times 2)$ is an even
  permutation.
\end{definition}

We have the following analogue of Theorem~\ref{thm-main} for even
alternation depth:

\begin{theorem}\label{thm-even-main}
  Let $A$ be a finite set of 3 or more elements. Then every even
  permutation $\sigma \in \S(2\times A\times 2)$ has even alternation
  depth 13.
\end{theorem}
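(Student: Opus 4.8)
We already know (Theorem~\ref{thm-main}) that every even $\sigma$ has a decomposition into $9$ factors of the forms $f\times 2$ and $2\times g$, without any evenness constraint on $f$ and $g$. The plan is to take such a $9$-factor decomposition and repair the parities of the individual factors, spending a bounded number of extra factors to do so. Because each factor $f\times 2$ or $2\times g$ is automatically an even permutation of $2\times A\times 2$ (this is Remark~\ref{rem-even}), the \emph{product} of all factors is even regardless of the parities of the individual $f$ and $g$; what can fail is only that a particular $f\in\S(2\times A)$ or $g\in\S(A\times 2)$ is odd. So the obstruction is purely a bookkeeping problem about the signs of the nine inner permutations.

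**Reducing an odd factor to an even one.** The key local move is this: if $f\in\S(2\times A)$ is odd, I want to absorb its ``oddness'' into a neighboring factor of the opposite type. Fix any transposition-like correction that is simultaneously expressible both as $(\text{something})\times 2$ and as $2\times(\text{something})$ --- for instance a permutation $w$ supported on a single fiber that swaps two elements on the top $n-1$ bits and simultaneously two on the bottom $n-1$ bits, chosen so that $w$ is odd as a member of $\S(2\times A)$ and also odd as a member of $\S(A\times 2)$ after regrouping. Concretely I would look for a fixed $w\in\S(2\times A\times 2)$ that can be written both as $u\times 2$ with $u$ odd and as $2\times v$ with $v$ odd. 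Then inserting $w\,w\inv=\id$ between an $(f\times 2)$-factor and an adjacent $(2\times g)$-factor lets me multiply $f$ by $u$ (fixing its parity) while multiplying $g$ by $v$ (also changing its parity in a controlled way). Since $\card{A}\geq 3$, there is enough room to build such a $w$: pick three distinct $a,b,c\in A$ and let $w=((0,a)\;(0,b))\times 2$ viewed one way, matched against a product of two transpositions in $A\times 2$ viewed the other way, tuning the second transposition so both gradings come out odd.

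**Propagating the corrections through the nine factors.** The alternating structure means the nine factors come in a fixed pattern of types $g,f,g,f,g,f,g,f,g$ (or the transpose). Walking left to right, I would maintain the invariant that every factor seen so far is even. Whenever I reach an odd factor, I insert a correcting $w\,w\inv$ at the boundary just before it, using the identity above to flip its parity into even; this pushes a parity change onto the factor to its right of the opposite type, which I handle when I get there. Each correction costs at most the insertion of one extra genuine factor (the other half of the $w\,w\inv$ pair merges into an existing factor by Property~\eqref{eqn-3}), so the total inflation is the number of parity repairs needed. A global parity count shows the repairs are consistent: the sum of the signs around the whole product is determined by $\sigma$ being even, so the corrections cannot leave a single ``dangling'' odd factor at the end. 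Starting from $9$ factors and budgeting the repairs carefully --- bundling adjacent corrections and merging the $w$'s into their neighbors --- yields at most $13$ factors.

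**The main obstacle.** The delicate point is the accounting, not the algebra: I must show the four extra factors genuinely suffice, i.e.\ that no configuration of parities among the nine inner permutations forces more than four repair steps. I expect to argue this by a discrete invariant --- track the cumulative product of signs as a $\pm1$ walk along the nine factors and observe that the evenness of $\sigma$ forces the walk to return to its start, so the odd factors come in a pattern that can be cleared by corrections placed only at the three internal type-boundaries of greatest leverage, plus one global correction. Verifying that the correcting permutation $w$ can simultaneously realize both required gradings when $\card{A}=3$ (the tightest case) is the concrete calculation I would check most carefully, since that is exactly where the hypothesis $\card{A}\geq 3$ is consumed.
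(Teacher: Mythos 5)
Your proof collapses at the key local move. You need a permutation $w\in\S(2\times A\times 2)$ expressible both as $u\times 2$ with $u\in\S(2\times A)$ odd and as $2\times v$ with $v\in\S(A\times 2)$ odd, but no such $w$ exists. If $w=u\times 2=2\times v$, then writing out $w(x,a,y)=(u(x,a),y)=(x,v(a,y))$ forces $u(x,a)=(x,\phi(a))$ and $v(a,y)=(\phi(a),y)$ for a single $\phi\in\S(A)$, i.e.\ $w=2\times\phi\times 2$. But then $u=2\times\phi=\phi+\phi$ and $v=\phi\times 2$ each have sign $\mathrm{sgn}(\phi)^2=+1$: \emph{every} element of the intersection of the two subgroups is even in both gradings, so inserting $ww\inv$ can never transfer parity between adjacent factors of opposite type. (Your concrete candidate $((0,a)\,(0,b))\times 2$ is not of the form $2\times v$ at all, since it acts nontrivially on the $x=0$ fiber but trivially on the $x=1$ fiber.) The same computation shows the parities are locally rigid: if $(f\times 2)(2\times g)=(f'\times 2)(2\times g')$ then $f,f'$ have the same parity, so two adjacent odd factors cannot be repaired in place. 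Your global accounting is also vacuous: since every factor $f\times 2$ or $2\times g$ is automatically an even permutation of $2\times A\times 2$, the evenness of $\sigma$ places no constraint whatsoever on the signs of the nine inner permutations, so there is no ``walk returning to its start.''

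The paper does not patch the 9-factor decomposition; it reruns both halves of the construction with evenness built in. For the coloring half, each standardizing factor can be made even for free (Lemma~\ref{lem-odd-colorpairs}: any coloring is stabilized by some odd $g$ or $f$, so multiply by it), keeping depth 4. For the $g+h$ half, the notion of \emph{balanced} is strengthened to \emph{evenly balanced}, and Proposition~\ref{pro-balanced} is replaced by Lemma~\ref{lem-supplement}, whose proof is a large explicit case analysis; this yields an even-depth-5 decomposition except for one possibly odd final factor $2\times g_1g$. That single odd factor is then absorbed using Lemma~\ref{lem-odd-even}, which writes the identity as one odd factor times five even ones, at a cost of four extra factors --- giving $9+4=13$. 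The correct intuition, which your proposal inverts, is that repairing even one odd factor costs four extra factors, not one; a bounded repair of an arbitrary pattern of odd factors among nine is exactly what is not available.
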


We prove Theorem~\ref{thm-even-main} by appropriately modifying the
proof of Theorem~\ref{thm-main}. We first prove that the permutation
$\sigma$ in Proposition~\ref{pro-construction-2} can be chosen to be
of even alternation depth 4, and then that the permutation $g+h$ in
Proposition~\ref{pro-construction-1} has even alternation depth 9.

\begin{lemma}\label{lem-odd-colorpairs}
  Let $\card{A}\geq 3$, and let $c$ be any coloring of $2\times A\times 2$.
  Then there exists an odd permutation $g\in\S(A\times 2)$ such that
  $(2\times g)\bu c=c$. Similarly, there exists an odd permutation
  $f\in\S(2\times A)$ such that $(f\times 2)\bu c = c$.
\end{lemma}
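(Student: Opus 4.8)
The plan is to translate the fixed-point condition $(2\times g)\bu c = c$ into a statement about color pairs, and then finish with a counting argument. Viewing $c$ as a coloring of $2\times X$ with $X=A\times 2$, recall from \eqref{eqn-d-c} that if $c'=(2\times g)\bu c$ then $\dd{c'}(g(x))=\dd{c}(x)$ for all $x\in A\times 2$. Since $c$ is determined by $\dd{c}$, the equation $(2\times g)\bu c = c$ is equivalent to $\dd{c'}=\dd{c}$, and hence to $\dd{c}(g(x))=\dd{c}(x)$ for every $x$. In other words, $g$ satisfies $(2\times g)\bu c = c$ precisely when it permutes each color-pair class $N_c(i,j)$ within itself.

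With this reformulation in hand, I would simply count. The set $A\times 2$ has $2\card{A}\geq 6$ elements, and these are distributed among only the four classes $N_c(0,0),N_c(0,1),N_c(1,0),N_c(1,1)$. By the pigeonhole principle, some class $N_c(i,j)$ contains two distinct elements $x,x'$; this is the one and only place where the hypothesis $\card{A}\geq 3$ enters, since for $\card{A}=2$ the four classes could each be a singleton. The transposition $g=(x\;x')$ fixes everything outside $N_c(i,j)$ and maps this class to itself, so it preserves all four classes and therefore satisfies $(2\times g)\bu c = c$. Being a single $2$-cycle, $g$ is odd, which establishes the first claim.

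The second claim follows by the symmetric argument obtained by exchanging the roles of the two copies of $2$. Here I would work with the ``horizontal'' color pair of $(x,a)\in 2\times A$, namely $(c(x,a,0),c(x,a,1))\in 2\times 2$; the identical computation (with the last coordinate now playing the role that the first coordinate played above) shows that $(f\times 2)\bu c = c$ if and only if $f$ preserves the partition of $2\times A$ into the four classes cut out by this horizontal color pair. Since $2\times A$ again has $2\card{A}\geq 6$ elements falling into four classes, some class has two distinct elements, and a transposition of them gives the required odd permutation $f$.

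There is no real obstacle in this lemma: the content is entirely the reformulation via color pairs together with pigeonhole. The only thing demanding care is the bookkeeping---keeping straight which copy of $2$ each of $g$ and $f$ acts on, and noting that $\card{A}\geq 3$ is exactly the bound that forces one of the four color-pair classes to have size at least two.
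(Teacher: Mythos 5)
Your proof is correct and follows essentially the same route as the paper: both arguments observe that the $2\card{A}\geq 6$ elements of $A\times 2$ fall into only four color-pair classes, so by pigeonhole two elements share a color pair, and the transposition exchanging them is an odd permutation fixing the coloring, with the second claim handled symmetrically. Your explicit reformulation of $(2\times g)\bu c=c$ as ``$g$ preserves each class $N_c(i,j)$'' is slightly more detailed than the paper's one-line justification via {\eqref{eqn-d-c}}, but it is the same idea.
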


\begin{proof}
  Recall from Section~\ref{ssec-colorpair} that we can associate to
  each element $x\in A\times 2$ a color pair $\dd{c}(x)\in 2\times 2$.
  Since there are more elements in $A\times 2$ than in $2\times 2$,
  there must exist two elements $x,y\in A\times 2$ with the same color
  pair.  Let $g$ be the permutation that exchanges $x$ and $y$. Then
  clearly $g$ is odd and $g\bu c=c$. The second claim is proved
  symmetrically.
\end{proof}

\begin{proposition}\label{pro-even-construction-2}
  The permutation $\sigma$ in Proposition~\ref{pro-construction-2} can
  be chosen to be of even alternation depth 4.
\end{proposition}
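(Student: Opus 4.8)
The plan is to start from the depth-4 decomposition produced in the proof of Proposition~\ref{pro-construction-2}, namely
\[
  \sigma = (f_5\times 2)(2\times g_4g_3)(f_2\times 2)(2\times g_1),
\]
and to replace each of its four constituent permutations by an \emph{even} permutation having exactly the same effect on the coloring at that stage. The idea that makes this possible is Lemma~\ref{lem-odd-colorpairs}: for every coloring it provides an \emph{odd} permutation, of the appropriate shape for either kind of factor, that fixes that coloring. Right-multiplying a factor by such a stabilizer flips its parity while leaving its action on colorings untouched.

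Concretely, I would work with the five-factor form $\sigma = (f_5\times 2)(2\times g_4)(2\times g_3)(f_2\times 2)(2\times g_1)$ and the chain of colorings $c, c_1, c_2, c_3, c_4, c_5 = \cstan$ from that proof, where each factor, read from right to left, carries one coloring to the next. Consider a factor $2\times g$ that sends a coloring $c'$ to $c'' = (2\times g)\bu c'$. If $g$ is even, I keep it. If $g$ is odd, then since $\card{A}\geq 3$, Lemma~\ref{lem-odd-colorpairs} yields an odd $h\in\S(A\times 2)$ with $(2\times h)\bu c' = c'$; setting $g' = gh$ gives an even permutation with
\[
  (2\times g')\bu c' = (2\times g)\bu\bigl((2\times h)\bu c'\bigr) = (2\times g)\bu c' = c'',
\]
so $2\times g'$ realizes the same step $c'\mapsto c''$. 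Factors $f\times 2$ are treated symmetrically, using the second assertion of Lemma~\ref{lem-odd-colorpairs} to find an odd $e\in\S(2\times A)$ fixing the input coloring and setting $f' = fe$.

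Carrying out this adjustment on all five factors makes each of $g_1, f_2, g_3, g_4, f_5$ even without disturbing any of the intermediate colorings, so the composite still sends $c$ to $\cstan$. Merging the two adjacent factors then produces $2\times(g_4g_3)$ with $g_4g_3$ even (being a product of two even permutations), and the resulting
\[
  \sigma' = (f_5\times 2)(2\times g_4g_3)(f_2\times 2)(2\times g_1)
\]
is a decomposition of even alternation depth 4 with $\sigma'\bu c = \cstan$, as required.

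There is no serious obstacle here once Lemma~\ref{lem-odd-colorpairs} is available; the argument is parity bookkeeping. The two points that require a little care are that the odd stabilizer must fix the \emph{input} coloring of its factor, which is why one right-multiplies (so that $h$ acts first), and that the five adjustments are genuinely independent: replacing a permutation by a coloring-stabilizing multiple alters neither the input nor the output coloring of that factor, hence leaves every other step intact.
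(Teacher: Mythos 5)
Your proposal is correct and is essentially the paper's own argument: the paper likewise invokes Lemma~\ref{lem-odd-colorpairs} to multiply each odd factor by an odd stabilizer of the relevant coloring, making every $f$ and $g$ in the proof of Proposition~\ref{pro-construction-2} even. You have merely spelled out the parity bookkeeping (and the direction of multiplication relative to the action convention) that the paper leaves implicit.
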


\begin{proof}
  Using Lemma~\ref{lem-odd-colorpairs}, each of the permutations $f$ or
  $g$ constructed in Lemmas~\ref{lem-4.5}--\ref{lem-4.10} can be
  chosen to be even without affecting the coloring. Hence, all
  relevant permutations in the proof of
  Proposition~\ref{pro-construction-2} can be chosen to be even.
\end{proof}

We now turn to generalizing the results of Section~\ref{sec-first} to
the evenly alternating case. 

\begin{definition}
  A permutation $\sigma$ is {\em evenly balanced} if it can be written
  as the disjoint product of two even permutations $\tau$ and $\tau'$,
  such that $\tau$ and $\tau'$ are similar.
\end{definition}

We note that every evenly balanced permutation is balanced, but not
vice versa. For example, the permutation $(1\;2)(3\;4)$ is balanced,
but not evenly balanced.  The permutation $(1\;2\;3)(4\;5\;6)$ is
evenly balanced. A permutation is evenly balanced if and only if it is
balanced and the total number of odd cycles in its cycle decomposition
is divisible by $4$.

We have the following generalization of
Proposition~\ref{pro-balanced}:

\begin{lemma}\label{lem-supplement}
  Let $\sigma$ be an even permutation in $\S(X)$, where $\card{X}\geq 6$. 
  Then one of the following two conditions holds:
  \begin{enumerate}\alphalabels
  \item $\card{X}=6$ and $\sigma$ is the disjoint product of a
    $2$-cycle and a $4$-cycle; or
  \item there exist evenly balanced permutations $\rho,\tau$ such
    that $\sigma=\tau\rho$.
  \end{enumerate}
\end{lemma}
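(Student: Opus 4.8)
The plan is to upgrade the balanced decomposition of Proposition~\ref{pro-balanced} to an evenly balanced one, the only genuinely new ingredient being the control of a single $\mathbb{Z}/4$-valued parity. Throughout I use the characterization recorded above: a permutation is evenly balanced if and only if it is balanced and the number $e(\cdot)$ of even-length cycles in its decomposition is divisible by $4$. Since a balanced permutation automatically has $e$ even, the obstruction to evenly balancing a balanced permutation $\pi$ is precisely whether $e(\pi)\equiv 0$ or $e(\pi)\equiv 2\pmod 4$; call this residue the \emph{defect} of $\pi$. Two elementary observations drive the argument. First, for any $\delta$ we may rewrite $\sigma=\tau\rho$ as $\sigma=(\tau\delta)(\delta\inv\rho)$; if $\delta$ is a disjoint product of two transpositions whose support avoids both $\supp\tau$ and $\supp\rho$, then $e(\tau)$ and $e(\rho)$ each increase by $2$, flipping both defects at once while preserving balancedness. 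Second, a disjoint product of evenly balanced permutations is again evenly balanced, so defects of disjoint factors add.

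First I would recast the base cases underlying Lemma~\ref{lem-decomposition-3b}. For each cycle-group type treated by Lemmas~\ref{lem-decomposition-1}, \ref{lem-decomposition-2}, and~\ref{lem-decomposition-3} — a single $\kay$-cycle with $\kay\geq 2$, $\kay\neq 3$; a $3$-cycle together with a $\kay$-cycle; and a disjoint product of two or more $3$-cycles — I would keep the existing decomposition into balanced $\tau_i\rho_i$ but additionally tabulate the defects $(e(\tau_i),e(\rho_i))\bmod 4$. The key is to record, for at least some group types, not merely one decomposition but enough flexibility (a second decomposition splitting a cycle differently, or absorbing a transposition pair inside the group's own support) to shift the defect of a single factor by $2$ without disturbing the other factor or the product. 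Summing over the disjoint groups exactly as in Lemma~\ref{lem-decomposition-3b}, the total defects of the assembled $\tau=\prod\tau_i$ and $\rho=\prod\rho_i$ can then be independently steered into $\{0\}$.

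The assembly step is where the work concentrates. With per-group decompositions fixed, both total defects lie in $\{0,2\}$. If both are $0$ we are in case~(b); if they differ or are both $2$, I would invoke the single-factor defect shift from an appropriate group to drive each defect to $0$, or, when spare points outside $\supp\tau\cup\supp\rho\seq\supp\sigma$ are available (i.e.\ $\card X\geq\card{\supp\sigma}+2$), use the transposition-pair flip to fix both simultaneously. The remaining difficulty is the tight regime where neither internal flexibility nor spare points exist — essentially $\supp\sigma=X$ with $X$ small — which I would settle by a finite case analysis on the cycle type of $\sigma$. It is precisely this analysis that produces the exception: when $\card X=6$ and $\sigma$ is a disjoint $2$-cycle and $4$-cycle, the only evenly balanced permutations on six points are the identity and the products of two disjoint $3$-cycles, and one checks directly that no product of two such permutations is a $2$-cycle times a $4$-cycle; this is case~(a).

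The main obstacle is the $\bmod 4$ bookkeeping. In Section~\ref{sec-first} the only correction needed was $\bmod 2$, and evenness of $\sigma$ supplied it for free by upgrading a nearly balanced factor to a balanced one; here evenness gives no control whatsoever over $e\bmod 4$, so the two defects must be matched by explicit construction rather than inferred. Establishing that the base-case decompositions can always be chosen with the requisite flexibility — and that the unique configuration in which this fails is the $2$-cycle-times-$4$-cycle on six points — is the heart of the proof and the portion requiring the careful, if routine, finite verification.
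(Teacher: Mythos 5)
Your overall strategy---start from the balanced factorization of Proposition~\ref{pro-balanced} and steer the residue $e(\cdot)\bmod 4$ of each factor to zero---founders on a concrete obstruction that your bookkeeping does not detect: the per-group profiles are \emph{not} flexible enough. On its own five-point support the only balanced permutations are the identity and the double transpositions, so a $5$-cycle admits exactly one internal defect profile, namely $(2,2)$. A $7$-cycle on its own support admits $(0,0)$, $(2,0)$ and $(0,2)$ but not $(2,2)$: two balanced factors with $e\equiv 2$ on seven points must both have cycle type $2^21^3$, and if $\alpha\beta=c$ is a $7$-cycle with $\alpha^2=\beta^2=1$ then $\alpha c\alpha=c\inv$, which forces $\alpha$ to have cycle type $2^31$. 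Hence for $\sigma$ the disjoint product of a $5$-cycle and a $7$-cycle on a $12$-element set (full support, no spare points), the achievable totals are $(2,2)$, $(2,0)$, $(0,2)$ and never $(0,0)$, so your per-group scheme cannot reach case~(b) even though the lemma asserts it holds there. Your external repair fails twice over as well: a disjoint pair of transpositions needs \emph{four} spare points (not two, as you wrote), and it flips both defects simultaneously, so it can fix $(2,2)$ but never $(2,0)$ or $(0,2)$. Finally, the claim that the residual analysis is confined to small $X$ is wrong---tight full-support configurations such as $5\cup k$ occur for arbitrarily large $X$---so the ``finite verification'' you defer to is not finite as you have set it up.

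The paper's proof gives up on per-cycle-group independence precisely because of this. It classifies even permutations into $14$ atom types (odd-length cycles and disjoint pairs of even-length cycles), made finite by treating cycle lengths modulo $4$ (the $\plus{\kay}$ device), isolates the four special atoms $\underline{3}$, $\underline{5}$, $\underline{2\cup2}$, $\underline{2\cup4}$ that are not decomposable within their own supports, and then exhibits explicit decompositions for every combination $A$, $\uA\cup\uA$, $\uA\cup A$, $\uA\cup\uA\cup\uA$, in which the evenly balanced factors deliberately straddle several atoms' supports: for $\underline{5}\cup\plus{7}$, for instance, one factor is $(3,5,4)(6,11,12)$, a pair of similar $3$-cycles drawn one from each atom, evenly balanced globally but not atom by atom. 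That cross-atom pairing is the missing idea in your proposal, and it is why the proof is an $84$-entry table rather than a parity computation.
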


The proof of Lemma~\ref{lem-supplement} is a long and tedious case
distinction, and in the interest of readability, we give it separately
in Section~\ref{sec-supplement} below.

\begin{lemma}\label{lem-balanced-commutes-odd}
  Let $\card{X}\geq 2$. Every balanced permutation on $X$ commutes
  with some odd permutation.
\end{lemma}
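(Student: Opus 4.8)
The plan is to exhibit, for any balanced $\sigma$, an explicit odd permutation lying in the centralizer of $\sigma$, by a short case analysis on the cycle structure of $\sigma$. The two key observations are: (i) each individual cycle factor of $\sigma$ commutes with $\sigma$, since disjoint cycles commute and every cycle commutes with itself; and (ii) a $k$-cycle is an odd permutation exactly when $k$ is even (equivalently, its sign is $(-1)^{k-1}$). Recall also that balancedness means the number of $k$-cycles is even for every $k\geq 2$.

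First I would dispose of the case where $\sigma$ has at least one cycle of even length. Writing $\sigma = c\,\sigma'$, where $c$ is such a cycle and $\sigma'$ is the disjoint product of the remaining cycles, observation (i) gives $c\sigma=\sigma c$, and observation (ii) makes $c$ an odd permutation. Thus $c$ itself is the desired odd permutation, with no further work; note that this already covers, for instance, any $\sigma$ containing a $2$-cycle or a $4$-cycle.

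The remaining case is that every nontrivial cycle of $\sigma$ has odd length $\geq 3$. If there is at least one such cycle, say of length $k$, then balancedness forces a second cycle of the same length $k$; writing these as $(a_1\;\ldots\;a_k)$ and $(b_1\;\ldots\;b_k)$, I would take $\pi$ to be the permutation exchanging $a_i$ with $b_i$ for each $i$. A direct check shows $\pi$ commutes with $\sigma$: conjugation by $\pi$ interchanges these two cycles of $\sigma$ and fixes all others, hence fixes $\sigma$ itself. Moreover $\pi$ is a product of $k$ transpositions with $k$ odd, so $\pi$ is odd. Finally, if $\sigma$ has no nontrivial cycle at all, then $\sigma=\id$, and here the hypothesis $\card{X}\geq 2$ lets me take $\pi$ to be any transposition, which is odd and trivially commutes with $\id$.

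The individual steps are all routine; the only genuine content is recognizing the right dichotomy, so this is where I would locate the (mild) obstacle. The point is that a single cycle produces an odd element of the centralizer precisely when its length is even, so odd-length cycles must instead be handled in pairs — and it is exactly balancedness that guarantees such pairs exist. The hypothesis $\card{X}\geq 2$ is used only in, and is clearly necessary for, the degenerate case $\sigma=\id$.
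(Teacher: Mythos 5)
Your proof is correct and is essentially the paper's own argument: the paper likewise handles the identity via any transposition, takes an even-length cycle itself as the odd commuting element when one exists, and otherwise pairs up two equal-length odd cycles $(a_1\,\ldots\,a_k)$, $(b_1\,\ldots\,b_k)$ (guaranteed by balancedness) and uses the product of transpositions $(a_1\,b_1)\cdots(a_k\,b_k)$. The only difference is cosmetic organization of the case split.
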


\begin{proof}
  Let $\sigma\in\S(X)$ be a balanced permutation. If $\sigma$ is the
  identity, it commutes with everything, and therefore, since
  $\card{X}\geq 2$, with some odd permutation. Otherwise, by
  definition of being balanced, the cycle decomposition of $\sigma$
  contains two $\kay$-cycles $\gamma_1=(a_1\;a_2\;\ldots\;a_\kay)$ and
  $\gamma_2=(b_1\;b_2\;\ldots\;b_\kay)$, for some $\kay\geq 2$.  If
  $\kay$ is even, let $\tau=\gamma_1$. If $\kay$ is odd, let
  $\tau=(a_1\;b_1)(a_2\;b_2)\cdots(a_\kay\;b_\kay)$. In either case,
  $\tau$ is an odd permutation and commutes with $\sigma$.
\end{proof}

The next two lemmas generalize Lemmas~\ref{lem-balanced} and
{\ref{lem-balanced2}}, respectively.

\begin{lemma}\label{lem-even-balanced}
  Let $\tau\in\S(A\times 2)$ be an evenly balanced permutation. Then
  there exist even permutations $g\in\S(A\times 2)$ and $h\in\S(A)$
  such that $\tau = g\inv(h\times 2)g$.
\end{lemma}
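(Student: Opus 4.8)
The plan is to follow the proof of Lemma~\ref{lem-balanced} almost verbatim, building $h$ so that $h\times 2$ has the same cycle type as $\tau$, and then arranging for both $g$ and $h$ to come out even. Since $\tau$ is evenly balanced, I first invoke the definition to write it as a disjoint product $\tau=\alpha\beta$ of two even permutations $\alpha,\beta\in\S(A\times 2)$ with $\alpha\sim\beta$. Because $\alpha$ and $\beta$ have identical cycle types, the number $\yy{\kay}$ of $\kay$-cycles of $\tau$ equals $2a_\kay$, where $a_\kay$ denotes the number of $\kay$-cycles of $\alpha$; in particular every $\yy{\kay}$ is even, recovering the fact that $\tau$ is balanced.

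Next I construct a permutation $h\in\S(A)$ having exactly $a_\kay$ cycles of length $\kay$ for each $\kay\geq 2$. This is possible because $\sum_\kay \kay\,\yy{\kay}\leq 2\card{A}$ forces $\sum_\kay \kay\,a_\kay\leq\card{A}$, so the required cycles fit inside $A$, with the remaining elements left fixed. By construction $h\times 2$ has $2a_\kay=\yy{\kay}$ cycles of each length $\kay$, hence $h\times 2\sim\tau$, and so there is some $g_0\in\S(A\times 2)$ with $\tau=g_0\inv(h\times 2)g_0$. The first parity issue is then settled for free: $h$ has precisely the cycle type of $\alpha$, and since $\alpha$ is even, so is $h$. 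This is exactly the point at which the evenly balanced hypothesis (as opposed to mere balancedness) is needed; with $\tau$ only balanced, the natural $h$ could be odd.

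It remains to force $g$ to be even, which is the one genuinely new ingredient compared with Lemma~\ref{lem-balanced}, and I expect it to be the main obstacle. The conjugators taking $h\times 2$ to $\tau$ are precisely the permutations of the form $cg_0$ with $c$ commuting with $h\times 2$. Since $h\times 2$ has an even number $2a_\kay$ of cycles of each length $\kay\geq 2$, it is itself balanced, so by Lemma~\ref{lem-balanced-commutes-odd} (applicable since $\card{A\times 2}=2\card{A}\geq 6\geq 2$) there exists an odd permutation $c$ commuting with $h\times 2$. If $g_0$ is already even I take $g=g_0$; if $g_0$ is odd I take $g=cg_0$, which is even and still satisfies $g\inv(h\times 2)g=g_0\inv c\inv(h\times 2)c\,g_0=g_0\inv(h\times 2)g_0=\tau$, using that $c$ commutes with $h\times 2$. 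In either case $g$ and $h$ are both even and $\tau=g\inv(h\times 2)g$, which is the claim. No new case analysis is required: the parity of $h$ is handled by $h\sim\alpha$, and the parity of $g$ is handled entirely by Lemma~\ref{lem-balanced-commutes-odd}.
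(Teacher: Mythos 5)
Your proposal is correct and follows essentially the same route as the paper: the paper simply cites Lemma~\ref{lem-balanced} for the conjugation $\tau = g'^{-1}(h\times 2)g'$ (where you re-derive that construction inline), observes that $h$ must be even because $h\times 2$ inherits the evenly balanced cycle type of $\tau$ (you get the same conclusion via $h\sim\alpha$), and then fixes the parity of the conjugator exactly as you do, by multiplying by an odd permutation commuting with $h\times 2$ supplied by Lemma~\ref{lem-balanced-commutes-odd}.
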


\begin{proof}
  By Lemma~\ref{lem-balanced}, we can find $g'\in\S(A\times 2)$ and
  $h\in\S(A)$ such that $\tau = g'^{-1}(h\times 2)g'$. Since $\tau$ is
  evenly balanced, so is $h\times 2$, and it follows that $h$ is even.
  If $g'$ is even, let $g=g'$, and we are done. If $g'$ is odd, let
  $g''$ be some odd permutation that commutes with $h\times 2$; such a
  $g''$ exists by Lemma~\ref{lem-balanced-commutes-odd}. Letting
  $g=g''g'$, we have
  $g\inv(h\times 2)g = g'^{-1}g''^{-1}(h\times 2)g''g' =
  g'^{-1}(h\times 2)g' = \tau$.
\end{proof}

\begin{lemma}\label{lem-even-balanced2}
  Let $\tau\in\S(A\times 2)$ be an evenly balanced permutation, and
  let $\sigma=\id_{A\times 2}+\tau\in\S(2\times A\times 2)$. Then
  there exist even permutations $g\in\S(A\times 2)$ and
  $f\in\S(2\times A)$ such that
  \[ \sigma = (2\times g\inv)(f\times 2)(2\times g).
  \]
\end{lemma}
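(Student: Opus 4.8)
The plan is to mirror the proof of Lemma~\ref{lem-balanced2} exactly, but invoking the even version of the conjugation lemma at the key step. Since $\tau$ is evenly balanced, I would first apply Lemma~\ref{lem-even-balanced} to obtain \emph{even} permutations $g\in\S(A\times 2)$ and $h\in\S(A)$ with $\tau = g\inv(h\times 2)g$. As before, set $f = \id_A + h \in \S(2\times A)$. The desired equation $\sigma = (2\times g\inv)(f\times 2)(2\times g)$ then follows by the identical chain of calculations as in Lemma~\ref{lem-balanced2}, using the properties \eqref{eqn-1}, \eqref{eqn-2} in one step and \eqref{eqn-3} in another; the algebra is unchanged because it only depends on $f$ having the form $\id_A+h$ and on $\tau$ being the given conjugate.

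The only genuinely new content beyond Lemma~\ref{lem-balanced2} is verifying that $f$ and $g$ are \emph{even}. The permutation $g$ is even by the conclusion of Lemma~\ref{lem-even-balanced}. For $f = \id_A + h$, I would argue that the disjoint sum $\id_A + h$ has the same cycle structure as $h$ itself (the bottom copy contributes only fixed points), so $f$ is even precisely when $h$ is; and $h$ is even, again by Lemma~\ref{lem-even-balanced}. Hence both factors $f\times 2$ and $2\times g$ (and $2\times g\inv$) in the decomposition are built from even permutations, so the decomposition is evenly alternating.

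I do not expect any serious obstacle: the structural equation is verbatim from Lemma~\ref{lem-balanced2}, and the evenness bookkeeping is immediate once Lemma~\ref{lem-even-balanced} is in hand. The one point requiring a line of justification is the evenness of $f=\id_A+h$, which reduces to the observation that a disjoint sum $\id + h$ has the same nontrivial cycles as $h$, hence the same parity.
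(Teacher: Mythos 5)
Your proposal is correct and matches the paper's proof, which is stated in one line as ``exactly like the proof of Lemma~\ref{lem-balanced2}, except using Lemma~\ref{lem-even-balanced} instead of Lemma~\ref{lem-balanced}.'' Your additional remark that $f=\id_A+h$ is even because it has the same nontrivial cycles as $h$ is a correct filling-in of the one detail the paper leaves implicit.
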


\begin{proof}
  Exactly like the proof of Lemma~\ref{lem-balanced2}, except using
  Lemma~\ref{lem-even-balanced} instead of Lemma~\ref{lem-balanced}.
\end{proof}

Next, we deal with two special cases.

\begin{lemma}\label{lem-2+4}
  Let $\card{A}\geq3$ and let $\tau\in\S(A\times 2)$ be the disjoint
  product of a 2-cycle and a 4-cycle. Then
  $\id+\tau \in \S(2\times A\times 2)$ has even alternation depth 5,
  starting and ending with a factor of the form $2\times g$.
\end{lemma}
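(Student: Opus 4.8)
The plan is to argue by cases on $\card A$. Note first that $\tau$, being the disjoint product of a $2$-cycle and a $4$-cycle, is even. Suppose $\card A\ge 4$, so that $\card{A\times 2}=2\card A\ge 8$. Then Lemma~\ref{lem-supplement} applies to $\tau$, and since its alternative~(a) requires $\card{A\times 2}=6$, alternative~(b) must hold: there are evenly balanced permutations $\tau_1,\tau_2\in\S(A\times 2)$ with $\tau=\tau_2\tau_1$. Applying Lemma~\ref{lem-even-balanced2} to each, I obtain even permutations $g_1,g_2\in\S(A\times 2)$ and $f_1,f_2\in\S(2\times A)$ with $\id+\tau_i=(2\times g_i\inv)(f_i\times 2)(2\times g_i)$. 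Since $\id+\tau=(\id+\tau_2)(\id+\tau_1)$ by~\eqref{eqn-3}, I multiply these two depth-$3$ decompositions and merge the adjacent factors via $(2\times g_2)(2\times g_1\inv)=2\times(g_2g_1\inv)$, obtaining
\[ \id+\tau=(2\times g_2\inv)(f_2\times 2)(2\times g_2g_1\inv)(f_1\times 2)(2\times g_1), \]
which is an even decomposition of depth $5$ that begins and ends with a factor of the form $2\times g$ (the entry $g_2g_1\inv$ is even as a product of even permutations). This settles the case $\card A\ge 4$.

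It remains to treat $\card A=3$, which is precisely the situation excluded by alternative~(b): here $\tau$ is \emph{not} a product of two evenly balanced permutations, so the preceding argument fails and the five factors must be produced directly. I would first reduce to a single representative. Any two disjoint products of a $2$-cycle and a $4$-cycle in $\S(A\times 2)$ are conjugate, and the centralizer of such a permutation contains the odd transposition formed by its own $2$-cycle; hence the conjugator may be chosen even, say $\pi\in\S(A\times 2)$ with $\pi\inv\tau_0\pi=\tau$ for a fixed representative $\tau_0$. By~\eqref{eqn-1} and~\eqref{eqn-3}, $\id+\tau=(2\times\pi\inv)(\id+\tau_0)(2\times\pi)$, and absorbing the even factors $2\times\pi\inv$ and $2\times\pi$ into the outermost $2\times g$ factors of a decomposition of $\id+\tau_0$ preserves both its evenness and its shape. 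It therefore suffices to exhibit explicit even permutations $g_1,g_3,g_5\in\S(A\times 2)$ and $f_2,f_4\in\S(2\times A)$ with $\id+\tau_0=(2\times g_5)(f_4\times 2)(2\times g_3)(f_2\times 2)(2\times g_1)$ and to verify this identity on the twelve elements of $2\times A\times 2$.

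This explicit construction is the step I expect to be the main obstacle. The permutation $\id+\tau$ fixes the first coordinate, and so does every factor $2\times g$; consequently the two factors $f_2\times 2$ and $f_4\times 2$ must carry all of the transfer between the first-coordinate values $0$ and $1$, while still being even. It is exactly this evenness requirement, together with the fact that $\tau_0$ is not a product of two evenly balanced permutations, that rules out the ``conjugate a doubling'' pattern of Lemma~\ref{lem-even-balanced2} and forces an ad hoc choice of the five factors. I expect to locate suitable $f_2,f_4,g_1,g_3,g_5$ by a short search and then merely record them, the remaining verification being routine.
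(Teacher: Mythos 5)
Your reduction is sound as far as it goes, but the proof has a genuine gap at exactly the point where the lemma's content lives. For $\card{A}\geq 4$ you correctly observe that $\card{A\times 2}\geq 8$ rules out alternative (a) of Lemma~\ref{lem-supplement}, so $\tau$ factors into two evenly balanced permutations and Lemma~\ref{lem-even-balanced2} finishes the job; this is a valid (and non-circular) shortcut that the paper does not take. Your conjugation step for $\card{A}=3$ is also fine: cycle types determine conjugacy, the $2$-cycle factor of $\tau_0$ is an odd element of its centralizer, so the conjugator can be made even, and the conjugating factors $2\times\pi^{\pm 1}$ absorb into the outer layers. But everything then hinges on exhibiting even $g_1,g_3,g_5\in\S(A\times 2)$ and $f_2,f_4\in\S(2\times A)$ with $\id+\tau_0=(2\times g_5)(f_4\times 2)(2\times g_3)(f_2\times 2)(2\times g_1)$ for one representative $\tau_0$ on a six-element set, and you do not produce them --- you only assert that ``a short search'' will find them. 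That existence claim is precisely what the lemma asserts in the one case your general machinery cannot reach (indeed, it is the only case in which the paper ever needs this lemma, since Proposition~\ref{pro-even-construction-1+h} falls back on it exactly when alternative (a) of Lemma~\ref{lem-supplement} occurs). A proof that defers its irreducible combinatorial core to an unperformed search is not a proof.

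For comparison, the paper's proof consists almost entirely of that missing witness: with $a,b,c\in A$ distinct and $\tau'=\cyc{(a,0),(a,1)}\,\cyc{(b,0),(c,0),(c,1),(b,1)}$, it takes the five explicit even $3$- and $5$-cycles $g_1=\cyc{(b,1),(c,0),(c,1)}$, $f_2=\cyc{(0,c),(1,c),(1,b)}$, $g_3=\cyc{(a,0),(a,1),(b,1)}$, $f_4=\cyc{(0,b),(0,c),(1,b),(1,a),(1,c)}$, $g_5=\cyc{(a,0),(b,1),(c,1)}$, verifies the identity directly, and then applies the same even-conjugation trick you describe --- uniformly for all $\card{A}\geq 3$, with no case split. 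To complete your argument you must either carry out and record such a verification yourself or cite an equivalent explicit decomposition; until then the case $\card{A}=3$ is unproved.
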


\begin{proof}
  Let $a,b,c$ be three distinct elements of $A$. First consider the
  permutation $\tau'\in\S(A\times 2)$ that is given, in cycle
  notation, by
  \[ \tau' = \cyc{(a,0),(a,1)}\;\cyc{(b,0),(c,0),(c,1),(b,1)}.
  \]
  Let 
  \[ 
  \begin{array}{rclll}
    g_1 &=& \cyc{(b,1), (c,0), (c,1)} &\in& \S(A\times 2),\\
    f_2 &=& \cyc{(0,c), (1,c), (1,b)} &\in& \S(2\times A),\\
    g_3 &=& \cyc{(a,0), (a,1), (b,1)} &\in& \S(A\times 2),\\
    f_4 &=& \cyc{(0,b), (0,c), (1,b), (1,a), (1,c)} &\in& \S(2\times A),\\
    g_5 &=& \cyc{(a,0), (b,1), (c,1)} &\in& \S(A\times 2),
  \end{array}
  \]
  all of which are even. By a direct calculation, one can verify that
  \begin{equation}\label{eqn-id+rho}
    \id+\tau' = (2\times g_5)(f_4\times 2)                 
    (2\times g_3)(f_2\times 2)(2\times g_1),
  \end{equation}
  and therefore $\id+\tau'$ has been written in the desired form. Now
  consider some arbitrary $\tau\in\S(A\times 2)$ that is the disjoint
  product of a 2-cycle and a 4-cycle. Since $\tau$ is similar to $\tau'$,
  there is some $g\in\S(A\times 2)$ such that $\tau=g\inv\tau' g$.
  Without loss of generality, we can assume that $g$ is even, because
  if it is not, we can replace $g$ by $\cyc{(a,0),(a,1)}g$, noting
  that $\cyc{(a,0),(a,1)}$ is odd and commutes with $\tau'$. Then
  $\id+\tau = (2\times g\inv)(\id+\tau')(2\times g)$. When combined
  with {\eqref{eqn-id+rho}}, this shows that $\id+\tau$ is of the
  required form.
\end{proof}

\begin{lemma}\label{lem-odd-even}
  Let $\card{A}\geq 2$. There exist permutations
  $g_1,g_3,g_5\in\S(A\times 2)$ and $f_2,f_4,f_6\in\S(2\times A)$ such
  that $g_1$ is odd, $f_2,g_3,f_4,g_5,f_6$ are even, and such that 
  \[ \id_{2\times A\times 2} = (f_6\times 2)(2\times g_5)(f_4\times
  2)(2\times g_3)(f_2\times 2)(2\times g_1).
  \]
\end{lemma}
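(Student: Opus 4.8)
The plan is to read the claim as an assertion about a single odd permutation and then settle it by an explicit construction on the smallest possible set. First I would move the five even factors in the statement to the left-hand side, so that the claim becomes: for a suitable odd $g_1\in\S(A\times 2)$, the permutation $2\times g_1$ equals a product of five even alternating factors $(f_2\inv\times 2)(2\times g_3\inv)(f_4\inv\times 2)(2\times g_5\inv)(f_6\inv\times 2)$. I would then reduce to $\card{A}=2$: choosing distinct $a,b\in A$ and building every factor with support inside $2\times\s{a,b}\times 2$ (identity elsewhere) suffices, since a transposition of $A\times 2$ is odd for every $A$ and a permutation that is even on its support is even in the full group, so an identity that holds on these eight points holds for all $A$. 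Identifying $2\times\s{a,b}\times 2$ with $\s{0,1}^3$ via coordinates $(i,x,j)$, a factor $2\times g$ becomes an arbitrary reversible gate on the bit pair $(x,j)$ applied identically for both values of $i$, and a factor $f\times 2$ becomes an arbitrary gate on $(i,x)$ applied identically for both values of $j$; ``even inner permutation'' means the gate is even as a permutation of its four inputs.

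In these coordinates the target odd permutation $2\times g_1$, with $g_1=\cyc{(a,0),(a,1)}$, is the double transposition $\cyc{(0,a,0),(0,a,1)}\cyc{(1,a,0),(1,a,1)}$, i.e. the inverted-control flip that toggles $j$ exactly when $x=a$. I would exhibit explicit even gates $f_2,g_3,f_4,g_5,f_6$ in cycle notation and verify $\id=(f_6\times 2)(2\times g_5)(f_4\times 2)(2\times g_3)(f_2\times 2)(2\times g_1)$ by direct computation, exactly in the style of the verification in Lemma~\ref{lem-2+4}. A useful sanity check along the way is that every $f\times 2$ and every $2\times g$ is automatically an even permutation of the eight points, so the product is even and may equal $\id$; the only genuine demand is to keep five of the six inner permutations even. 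The idea guiding the search for the gates is a routing argument: a single even gate cannot produce the odd flip, so the missing parity is laundered by temporarily moving the control bit $x$ into the otherwise hidden bit $i$ with an even $(i,x)$-gate, applying an even manipulation of $j$, and undoing the move, so that the net effect on the data is the single odd transposition pair while each gate stays even.

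The main obstacle is exactly this blindness constraint, and it is also what makes the count tight: a $2\times g$ gate cannot see $i$ and an $f\times 2$ gate cannot see $j$ (each acts identically on its two slices), and the only bit visible to both types is $x$. In particular $2\times g_1$ is not a single even factor (it is not $2\times(\text{even})$ since $g_1$ is odd, and it is not any $f\times 2$ since it changes $j$), so more than one even factor is unavoidable and the parity must genuinely be shuttled through $i$; I expect the bulk of the write-up to be the finite but error-prone verification that the chosen gates compose to the identity with no residual action on the scratch bit. Finally I would note that the hypothesis $\card{A}\geq 2$ is both necessary and used: if $\card{A}=1$ then the even subgroups of $\S(A\times 2)$ and $\S(2\times A)$ are trivial, so every even factor is the identity and the product can never equal the nontrivial odd $2\times g_1$, whereas for $\card{A}\geq 2$ the second column $b$ supplies precisely the room the routing needs.
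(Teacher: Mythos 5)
Your overall strategy is exactly the paper's: fix two distinct elements $a,b\in A$, build all six factors with support inside $2\times\s{a,b}\times 2$, take $g_1$ to be an odd transposition, and verify the identity by direct computation. Your framing remarks are all sound --- the reduction to $\card{A}=2$ is legitimate since extending by the identity preserves parity, the observation that every $f\times 2$ and $2\times g$ is even as a permutation of the eight points is a correct consistency check, and your necessity argument for $\card{A}\geq 2$ is right.

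The gap is that you never actually produce the five even gates, and that explicit list \emph{is} the entire content of the paper's proof: it simply writes down $g_1=\cyc{(a,0),(b,1)}$ (an odd transposition) together with five explicit $3$-cycles $f_2,g_3,f_4,g_5,f_6$ and asserts that they compose to the identity. Your ``routing'' heuristic for why five even factors should suffice is plausible motivation but is not a proof of existence; the lemma is an existence statement whose only known justification is the exhibited witness. Two smaller points: your chosen target $g_1=\cyc{(a,0),(a,1)}$ differs from the paper's $\cyc{(a,0),(b,1)}$, which is permissible (any odd $g_1$ admitting such a factorization proves the lemma), but note that you cannot transport the factorization from one choice of odd transposition to another by conjugating with $2\times h$, since that conjugation does not send factors of the form $f\times 2$ to factors of the same form; so the burden of finding the gates is not lightened by this freedom. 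Also, your claim that ``more than one even factor is unavoidable'' is true but does not establish that five suffice --- only the explicit computation does. To complete the proof you would need to write out the six permutations in cycle notation and check the composition pointwise on the eight elements, as the paper does.
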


\begin{proof}
  Let $a,b$ be distinct elements of $A$. The following permutations,
  given in cycle notation, satisfy the conclusion of the lemma:
  \[
  \begin{array}{rclll}
    g_1 &=& \cyc{(a,0), (b,1)} &\in& \S(A\times 2),\\
    f_2 &=& \cyc{(1,b),(0,a),(1,a)} &\in& \S(2\times A),\\
    g_3 &=& \cyc{(b,0), (a,0), (b,1)} &\in& \S(A\times 2),\\
    f_4 &=& \cyc{(1,b),(1,a),(0,a)} &\in& \S(2\times A),\\
    g_5 &=& \cyc{(b,0), (a,0), (b,1)} &\in& \S(A\times 2),\\
    f_6 &=& \cyc{(0,b),(1,b),(1,a)} &\in& \S(2\times A).
  \end{array}
  \]
\end{proof}

The following two propositions generalize
Proposition~\ref{pro-construction-1+h} and {\ref{pro-construction-1}},
respectively, to the evenly alternating case.

\begin{proposition}\label{pro-even-construction-1+h}
  Let $A$ be a finite set of 3 or more elements, and let
  $\tau\in\S(A\times 2)$ be an even permutation. Then
  $\id_{A\times 2}+\tau \in \S(2\times A\times 2)$ has an evenly
  alternating decomposition of depth 5, starting and ending with a
  factor of the form $2\times g$.
\end{proposition}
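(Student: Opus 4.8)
The plan is to mimic the structure of the proof of Proposition~\ref{pro-construction-1+h}, but to route the argument through the evenly alternating machinery we have just assembled (Lemma~\ref{lem-supplement}, Lemma~\ref{lem-even-balanced2}, Lemma~\ref{lem-2+4}, and Lemma~\ref{lem-odd-even}). The given permutation $\tau\in\S(A\times 2)$ is even, and $\card{A\times 2} = 2\card A \geq 6$, so Lemma~\ref{lem-supplement} applies. It tells us that either $\tau$ factors as a product of two evenly balanced permutations, or we are in the exceptional case where $\card{A\times 2}=6$ and $\tau$ is the disjoint product of a $2$-cycle and a $4$-cycle. I would treat these as two cases.

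In the main case, write $\tau=\tau_2\tau_1$ with $\tau_1,\tau_2$ evenly balanced. By the additive-composition identity \eqref{eqn-3} we have $\id+\tau=(\id+\tau_2)(\id+\tau_1)$. Applying Lemma~\ref{lem-even-balanced2} to each evenly balanced factor gives even permutations $g_i\in\S(A\times 2)$ and $f_i\in\S(2\times A)$ with $\id+\tau_i=(2\times g_i\inv)(f_i\times 2)(2\times g_i)$ for $i=1,2$. Substituting and then collapsing the two adjacent $2\times(\cdot)$ factors in the middle via $(2\times g_2)(2\times g_1\inv)=2\times(g_2g_1\inv)$ yields a product of exactly five alternating factors, each of the required even form (the middle factor is even because $g_2$ and $g_1\inv$ are), starting and ending with a factor $2\times g$. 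This is verbatim the structure of the original proof; the only new input is that all the $g_i,f_i$ are now even.

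The exceptional case is where the refinement bites, and I expect it to be the main obstacle. Here $\tau$ is a $2$-cycle times a $4$-cycle, which is exactly the permutation handled by Lemma~\ref{lem-2+4}; that lemma directly produces an evenly alternating decomposition of $\id+\tau$ of depth $5$ starting and ending with $2\times g$, so this case closes immediately. The subtlety I would flag is parity bookkeeping at the interface between the two cases: Lemma~\ref{lem-supplement} is stated for $\card X\geq 6$, and when $\card{A\times 2}=6$ (i.e.\ $\card A=3$) the exceptional configuration really can occur, so I must genuinely invoke Lemma~\ref{lem-2+4} rather than wave it away. I should also double-check that Lemma~\ref{lem-even-balanced2} guarantees \emph{both} $g$ and $f$ even, so that all five factors in the main case are legitimately even; this is exactly what that lemma asserts, so no extra work is needed, but it is the point on which evenness of the whole decomposition rests. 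The roles of Lemma~\ref{lem-odd-even} and Lemma~\ref{lem-balanced-commutes-odd} are auxiliary here—they were the tools used inside Lemma~\ref{lem-even-balanced} to upgrade $g$ to an even conjugator—so I would not need to call them again directly. Assembling the pieces, in either case $\id_{A\times 2}+\tau$ has an evenly alternating decomposition of depth $5$ beginning and ending with $2\times g$, which is the claim.
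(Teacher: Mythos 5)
Your proposal is correct and follows essentially the same route as the paper's proof: dispose of the exceptional $2$-cycle-times-$4$-cycle case via Lemma~\ref{lem-2+4}, and otherwise factor $\tau$ into two evenly balanced permutations via Lemma~\ref{lem-supplement} and apply Lemma~\ref{lem-even-balanced2} to each factor, merging the two middle $2\times(\cdot)$ factors to get depth 5. Your parity remarks and the observation that Lemma~\ref{lem-odd-even} is not needed here are both accurate.
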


\begin{proof}
  If $\tau$ is the disjoint product of a $2$-cycle and a $4$-cycle, the
  claim follows by Lemma~\ref{lem-2+4}. Otherwise, by
  Lemma~\ref{lem-supplement}, there exist evenly balanced permutations
  $\tau_1,\tau_2\in\S(A\times 2)$ such that $\tau=\tau_2\tau_1$. By
  Lemma~\ref{lem-even-balanced2}, there exist even
  $g_1,g_2\in\S(A\times 2)$ and $f_1,f_2\in\S(2\times A)$ such that
  $\id+\tau_i = (2\times g_i\inv)(f_i\times 2)(2\times g_i)$, for
  $i=1,2$. Therefore
  \[ \id+\tau = (2\times g_2\inv)(f_2\times 2)(2\times g_2g_1\inv)               
  (f_1\times 2)(2\times g_1),
  \]
  proving the claim.
\end{proof}

\begin{proposition}\label{pro-even-construction-1}
  Let $A$ be a finite set of 3 or more elements, and let
  $g,h\in\S(A\times 2)$ be permutations such that $\sigma=g+h$ is
  even. Then $\sigma$ has even alternation depth 9.
\end{proposition}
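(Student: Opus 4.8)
The plan is to mimic the proof of Proposition~\ref{pro-construction-1}, which derived the non-even version of this statement from Proposition~\ref{pro-construction-1+h}. Here I would instead invoke the evenly alternating analogue, Proposition~\ref{pro-even-construction-1+h}. Starting from $\sigma=g+h$, set $\tau=hg\inv\in\S(A\times 2)$ and note that $\tau$ is even (since $\sigma=g+h$ is even and $g+h\sim g+g$ would force $h$ and $g$ to have the same parity; more directly, $g+h$ even together with the structure of disjoint sums forces $hg\inv$ to be even). Then $\sigma=(\id_{A\times 2}+\tau)(g+g)=(\id+\tau)(2\times g)$ by {\eqref{eqn-1}}. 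By Proposition~\ref{pro-even-construction-1+h}, the factor $\id+\tau$ has an evenly alternating decomposition of depth 5 starting and ending with a factor of the form $2\times g$, so I can write $\id+\tau=(2\times g_5)(f_4\times 2)(2\times g_3)(f_2\times 2)(2\times g_1)$ with all $g_i,f_i$ even.

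Appending the factor $2\times g$ gives $\sigma=(2\times g_5)(f_4\times 2)(2\times g_3)(f_2\times 2)(2\times g_1g)$, which is a decomposition of alternation depth 5. The obstacle, and the reason this is not yet finished (the target is 9, not 5), is that the final factor $2\times g_1g$ need not be even: although $g_1$ is even, the permutation $g$ coming from the data of $\sigma$ is arbitrary, so $g_1g$ may be odd. Thus the issue is entirely about parity correction of a single trailing $2\times(\cdot)$ factor.

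To repair this I would split into two cases according to the parity of $g_1g$. If $g_1g$ is even, the depth-5 decomposition above already has all factors even, and we are done with even alternation depth $5\leq 9$. If $g_1g$ is odd, I would absorb the parity defect using the odd-to-even identity supplied by Lemma~\ref{lem-odd-even}, at the cost of extra factors. Concretely, Lemma~\ref{lem-odd-even} writes $\id_{2\times A\times 2}=(f_6\times 2)(2\times g_5')(f_4'\times 2)(2\times g_3')(f_2'\times 2)(2\times g_1')$, where $g_1'$ is \emph{odd} and the remaining five factors $f_2',g_3',f_4',g_5',f_6'$ are even. Since $g_1'$ and $g_1g$ are both odd, $g_1'\inv g_1 g$ is even, so I can rewrite the trailing factor as $2\times g_1g=(2\times g_1')(2\times g_1'^{-1}g_1g)$, and then substitute the identity of Lemma~\ref{lem-odd-even} to replace the now-problematic $2\times g_1'$ by the six-factor even product. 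Merging the adjacent like-type factors at the seams keeps the total within the desired bound.

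Carrying this out carefully, the prefix $(2\times g_5)(f_4\times 2)(2\times g_3)(f_2\times 2)$ contributes 4 factors, and the trailing $2\times g_1g$ is expanded; after substituting Lemma~\ref{lem-odd-even} and fusing any two consecutive factors of the same type ($2\times(\cdot)$ with $2\times(\cdot)$, or $f\times 2$ with $f\times 2$), one checks the alternation depth is at most 9 with all factors even. The main thing to verify is this bookkeeping of how many factors survive after fusion; the arithmetic is designed so that the bound is exactly 9 in the worst (odd) case, matching the statement. No genuinely new construction is needed beyond Proposition~\ref{pro-even-construction-1+h} and Lemma~\ref{lem-odd-even}; the content is the parity case split and the merge of adjacent same-type factors.
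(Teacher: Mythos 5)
Your proposal is correct and follows essentially the same route as the paper: reduce to $\id+\tau$ via Proposition~\ref{pro-even-construction-1+h} to get a depth-5 decomposition with a possibly odd trailing factor $2\times g_1g$, then in the odd case splice in the identity from Lemma~\ref{lem-odd-even} and merge adjacent same-type factors to land at exactly 9 even factors. The paper inserts the identity between $f_2\times 2$ and $2\times g_1g$ and absorbs the odd $g_6$ into the last factor, whereas you factor $g_1g$ through the odd $g_1'$ first, but these are the same computation and the bookkeeping checks out either way.
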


\begin{proof}
  Let $\tau=hg\inv\in\S(A\times 2)$, and note that $\tau$ is even. By
  Proposition~\ref{pro-even-construction-1+h}, $\id+\tau$ can be
  written in the form $(2\times g_5)(f_4\times 2)                 
  (2\times g_3)(f_2\times 2)(2\times g_1)$, where all the $f_i$ and
  $g_j$ are even. As in the proof of
  Proposition~\ref{pro-construction-1}, we have
  \begin{equation}\label{eqn-pro-even-construction-1a}
    \sigma = g+h = (2\times g_5)(f_4\times 2)(2\times g_3)(f_2\times
    2)(2\times g_1g).
  \end{equation}
  If $g$ is even, we are done; in fact, in this case, $\sigma$ has
  even alternation depth 5. So assume $g$ is odd. By
  Lemma~\ref{lem-odd-even}, there exists an odd permutation
  $g_6\in\S(A\times 2)$ and even permutations
  $f_7,f_9,f_{11}\in\S(2\times A)$ and $g_8,g_{10}\in\S(A\times 2)$ such
  that
  \begin{equation}\label{eqn-pro-even-construction-1b}
    \id = (f_{11}\times 2)(2\times g_{10})(f_9\times 2)(2\times g_8)(f_7\times
    2)(2\times g_6).
  \end{equation}
  Combining {\eqref{eqn-pro-even-construction-1a}} and
  {\eqref{eqn-pro-even-construction-1b}}, we have
  \[ \sigma = (2\times g_5)(f_4\times 2)(2\times g_3)(f_2f_{11}\times
  2)(2\times g_{10})(f_9\times 2)(2\times g_8)(f_7\times 2)(2\times
  g_6g_1g),
  \]
  and since $g_6g_1g$ is even, this demonstrates that $\sigma$ has
  even alternation depth 9.
\end{proof}

Finally, Theorem~\ref{thm-even-main} follows from
Propositions~\ref{pro-even-construction-2} and
{\ref{pro-even-construction-1}} in exactly the same way as
Theorem~\ref{thm-main} was shown to follow from
Propositions~\ref{pro-construction-2} and {\ref{pro-construction-1}}
in Section~\ref{sec-proof-main}.

\section{Proof of Lemma~\ref{lem-supplement}}\label{sec-supplement}

Recall that a cycle of length $\kay$ is an even permutation if and
only if $\kay$ is odd. We denote a conjugacy class of permutations by
$\kay_1\cup\ldots\cup \kay_n$, where $\kay_1,\ldots,\kay_n$ are the
lengths of the cycles in its cycle decomposition. We further
schematically write $\plus{\kay}$ for any cycle of length $\kay+4i$,
where $i\in\s{0,1,2,\ldots}$.

We say that an even non-identity permutation is {\em atomic} if it
cannot be factored into disjoint smaller even non-identity
permutations. Clearly every even permutation can be factored into
disjoint atoms. The atomic permutations are exactly: (a) odd-length
cycles, and (b) disjoint products of two even-length cycles. Using the
above notations, we can further divide the atomic permutations into 14
cases:
\begin{equation}\label{eqn-atoms}
\underline{3},\,
\underline{5},\,
\plus{7},\,
\plus{9},\,
\underline{2\cup 2},\,
\underline{2\cup 4},\,
2\cup\plus{6},\,
2\cup\plus{8},\,
4\cup 4,\,
4\cup\plus{6},\,
4\cup\plus{8},\,
\plus{6}\cup\plus{6},\,
\plus{6}\cup\plus{8},\,
\plus{8}\cup\plus{8}.
\end{equation}
An atom is called {\em special} if it has been underlined in
{\eqref{eqn-atoms}}. Therefore, the four special atoms are
$\underline{3}$, $\underline{5}$, $\underline{2\cup 2}$, and
$\underline{2\cup 4}$. An even permutation is {\em decomposable} if it
can be written as a product of two evenly balanced permutations.

\begin{proposition}\label{pro-decomposable}
  Every even permutation, other than a special atom, is decomposable.
  Moreover, if the size of the domain is $6$ or greater, the special
  atoms $\underline{3}$, $\underline{5}$, $\underline{2\cup 2}$ are
  decomposable. Moreover, if the size of the domain is $7$ or greater,
  the special atom $\underline{2\cup 4}$ is decomposable.
\end{proposition}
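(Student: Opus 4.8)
The plan is to reduce everything to atoms and then exhibit, for each relevant atom or small group of atoms, an explicit factorization into two evenly balanced permutations. Two elementary observations drive the reduction. First, a disjoint product of evenly balanced permutations is again evenly balanced: disjoint products preserve balancedness (Remark~\ref{rem-balanced}), and the number of even-length cycles is additive, so a sum of multiples of $4$ is a multiple of $4$. Hence if $\sigma=\sigma_1\cdots\sigma_m$ is a disjoint product and each $\sigma_i=\tau_i\rho_i$ with $\tau_i,\rho_i$ evenly balanced and $\supp\tau_i\cup\supp\rho_i\seq\supp\sigma_i$, then $\tau=\tau_1\cdots\tau_m$ and $\rho=\rho_1\cdots\rho_m$ are evenly balanced and $\sigma=\tau\rho$. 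Second, any evenly balanced $\sigma$ is trivially decomposable via $\sigma=\sigma\cdot\id$. Since every even permutation is a disjoint product of atoms, it therefore suffices to decompose each atom---or, for the special atoms, a small group containing it---with support confined to the group, and then recombine.

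For the non-special atoms I would write down a factorization for each of the ten families in \eqref{eqn-atoms} and verify it directly, always keeping $\supp\tau\cup\supp\rho$ inside $\supp\sigma$. For the finite families ($4\cup 4$, and the fixed-length representatives of the $\plus{6}\cup\plus{6}$-type families, etc.) this is a single check; for instance a $7$-cycle $(a_1\ldots a_7)$ is the product of the two type-$3+3$ (hence evenly balanced) permutations $(a_1\,a_4\,a_2)(a_3\,a_5\,a_6)$ and $(a_1\,a_4\,a_3)(a_2\,a_6\,a_7)$. The real difficulty is the infinite families flagged by the $\plus{}$ notation---odd cycles of length $k+4i$ and pair-atoms with a $\plus{6}$ or $\plus{8}$ factor---where one formula must work for all $i$. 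I expect this to be the main obstacle, and I would handle it by a construction uniform in the length: either an induction that absorbs each additional block of $4$ points into both factors while keeping the count of even-length cycles divisible by $4$, or a direct tiling of the cycle by $3$-cycles for the odd-length case (split according to the length modulo $4$, which is exactly why $\plus{7}$ and $\plus{9}$ are listed separately) together with paired $2$-cycles for the even-length factors. The whole purpose of the divisibility-by-$4$ bookkeeping, and of the mod-$4$ split in \eqref{eqn-atoms}, is to make each factor evenly balanced rather than merely balanced.

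The special atoms are precisely the ones too small to be decomposed inside their own support: on a set of at most $6$ elements the only evenly balanced permutations are the identity and the type-$3+3$ ones, and a product of two type-$3+3$ permutations can realize a single $3$-cycle, a $5$-cycle, or a $2+2$, but never a $2+4$. This forces the thresholds. With $\card{X}\geq 6$ there are enough fixed points to borrow, and I would give explicit type-$3+3$ factorizations of $\underline{3}$, $\underline{5}$, and $\underline{2\cup 2}$; the atom $\underline{2\cup 4}$ is the sole exception and only becomes decomposable once $\card{X}\geq 7$ supplies a seventh point, for which I would again give an explicit factorization. Finally, for the first sentence, when $\sigma$ is not a single special atom I would group its atoms: each non-special atom forms its own group, decomposed as above, while every special atom is grouped with a neighbor (pairing special atoms with one another first, and attaching any leftover special atom to an existing group, so that none is isolated). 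Because the smallest atoms are exactly the special ones, any group containing a special atom has combined support at least $6$ (and at least $7$ whenever it contains a $\underline{2\cup 4}$), so each such group admits an explicit joint factorization; when a special atom must be attached to a $\plus{}$-atom, that factorization is itself produced by the length-uniform construction of the previous paragraph. Recombining the groups by the closure principle yields the global factorization. The bulk of the labor---long and tedious rather than conceptually hard---is assembling the explicit factorizations for the $\plus{}$ families and for the finitely many all-special groups.
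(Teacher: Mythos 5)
Your strategy coincides with the paper's: reduce to disjoint atoms, group each special atom with a partner (pairs of special atoms, a leftover special atom attached to another group, or a triple of special atoms when nothing else is available), and exhibit an explicit factorization into two evenly balanced permutations for each resulting case, using a construction uniform in the $+4i$ parameter to tame the infinite families. The only real difference is one of completeness: the paper actually writes out all 84 explicit factorizations (its cases (0)--(4)), whereas you verify only a representative example (the $7$-cycle, which checks out) and defer the rest as ``long and tedious,'' so what you have is a correct plan matching the paper's rather than a finished proof.
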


\begin{proof}
  Let us schematically write $A$ for a non-special atom and $\uA$ for
  a special atom. To prove the first claim of
  Proposition~\ref{pro-decomposable}, it is sufficient to show that
  the following are decomposable:
  \[
  (1)~ A,\quad
  (2)~ \uA\cup\uA,\quad
  (3)~ \uA\cup A,\quad
  (4)~ \uA\cup\uA\cup\uA.
  \]
  This is because every even permutation (except special atoms) can be
  decomposed into disjoint factors of the forms (1)--(4). (This
  follows by an easy induction on the number of special atoms). To
  prove the second and third claims of
  Proposition~\ref{pro-decomposable}, we must moreover show that the
  following is decomposable when the size of the domain is large
  enough:
  \[ (0)~ \uA.
  \]
  Since each atom is of one of the fourteen types shown in
  {\eqref{eqn-atoms}}, each claim (0)--(4) amounts to a finite case
  distinction. We exhibit a particular decomposition into evenly
  balanced permutations for each of the cases. 

  For greater brevity, we use the following convention in the cycle
  notation for permutations. Let $i,j\in\s{0,1,2,\ldots}$. We write
  $\bulA{n}$ as a shorthand for a list of $2i+1$ elements
  $n_1,\ldots,n_{2i+1}$, and we similarly write $\bulB{m}$ as a
  shorthand for a list of $2j+1$ elements $m_1,\ldots,m_{2j+1}$. Thus,
  $(\bulA{1},2,3)$ denotes the $(2i+3)$-cycle
  $(1_1,\ldots,1_{2i+1},2,3)$, and $(\bulA{1},\bulB{2},3)$ denotes the
  $(2i+2j+3)$-cycle
  $(1_1,\ldots,1_{2i+1},2_1,\ldots,2_{2j+1},3)$. Here, it is
  understood that multiple $\bulA{n}$ notations appearing in the same
  equation share the same value of $i$, and multiple $\bulB{m}$
  notations share the same value of $j$.

  \flushleft
  \begin{enumerate}
  \item[(0)] {\bf Decomposition of $\uA$ (4 cases)}

    \sethead{$\underline{2\cup2}$, where $\card{X}> 6$:}

    \begin{itemize}
    \item \head{$\underline{3}$, where $\card{X}\geq 6$:}
      \triple{$(1,2,3)$}{$(1,3,2)(4,5,6)$}{$(1,3,2)(4,6,5)$.}
    \item \head{$\underline{5}$, where $\card{X}\geq 6$:}
      \triple{$(1,2,3,4,5)$}{$(1,4,3)(2,6,5)$}{$(1,5,3)(2,4,6)$.}
    \item \head{$\underline{2\cup2}$, where $\card{X}\geq 6$:}
      \triple{$(1,2)(3,4)$}{$(1,3,5)(2,4,6)$}{$(1,6,4)(2,5,3)$.}
    \item \head{$\underline{2\cup4}$, where $\card{X}\geq 7$:}
      \triple{$(1,2)(3,4,5,6)$}{$(2,4,3)(5,6,7)$}{$(1,3,2)(4,7,6)$.}
    \end{itemize}

  \item[(1)] {\bf Decomposition of $A$ (10 cases)}

    \sethead{$\plus{8}\cup\plus{8}$:}

    \begin{itemize}
    \item \head{$\plus{7}$:}
      \triple{$(\bulA{1},2,3,\bulA{4},5,6,7)$}{$(1,6,5)(2,3,7)$}{$(\bulA{1},7,5)(\bulA{4},6,3)$.}
    \item \head{$\plus{9}$:}
      \triple{$(\bulA{1},2,3,\bulA{4},5,6,7,8,9)$}{$(2,3,8)(5,6,7)$}{$(\bulA{1},8,9)(\bulA{4},7,3)$.}
    \item \head{$2\cup\plus{6}$:}
      \triple{$(1,2)(\bulA{3},\bulA{4},5,6,7,8)$}{$(3,1,4)(5,6,7)$}{$(\bulA{3},1,2)(\bulA{4},7,8)$.}
    \item \head{$2\cup\plus{8}$:}
      \triple{$(1,2)(\bulA{3},4,5,\bulA{6},7,8,9,10)$}{$(3,2,6)(7,9,8)$}{$(\bulA{3},4,5,2,1)(\bulA{6},8,7,9,10)$.}
    \item \head{$4\cup4$:}
      \triple{$(1,2,3,4)(5,6,7,8)$}{$(1,2,4)(3,6,5)$}{$(2,5,3)(6,7,8)$.}
    \item \head{$4\cup\plus{6}$:}
      \triple{$(1,2,3,4)(\bulA{5},\bulA{6},7,8,9,10)$}{$(2,6,5)(7,9,8)$}{$(1,\bulA{5},2,3,4)(\bulA{6},8,7,9,10)$.}
    \item \head{$4\cup\plus{8}$:}
      \triple{$(1,2,3,4)(\bulA{5},\bulA{6},7,8,9,10,11,12)$}{$(2,6,5)(8,9,10)$}{$(1,\bulA{5},2,3,4)(\bulA{6},7,10,11,12)$.}
    \item \head{$\plus{6}\cup\plus{6}$:}
      \triple{$(\bulA{1},\bulA{2},3,4,5,6)(\bulB{7},8,9,\bulB{10},11,12)$}{$(2,5)(7,10)(8,11)(9,12)$}{$(\bulA{1},5,6)(\bulA{2},3,4)(\bulB{7},11,9)(\bulB{10},8,12)$.}
    \item \head{$\plus{6}\cup\plus{8}$:}
      \triple{$(\bulA{1},\bulA{2},3,4,5,6)(7,\bulB{8},\bulB{9},10,11,12,13,14)$}{$(2,5)(7,8)(9,13)(10,11)$}{$(\bulA{1},5,6)(\bulA{2},3,4)(\bulB{8},13,14)(\bulB{9},11,12)$.}
    \item \head{$\plus{8}\cup\plus{8}$:}
      \triple{$(1,2,3,4,5,6,\bulA{7},\bulA{8})(9,\bulB{10},\bulB{11},12,13,14,15,16)$}{$(1,2)(3,7,4)(5,6,8)(9,10)(11,15)(12,13)$}{$(2,4,\bulA{8})(3,\bulA{7},6)(\bulB{10},15,16)(\bulB{11},13,14)$.}
    \end{itemize}

  \item[(2)] {\bf Decomposition of $\uA\cup\uA$ (10 cases)}

    \sethead{$\underline{5}\cup(\underline{2\cup 2})$:}

    \begin{itemize}
    \item \head{$\underline{3}\cup\underline{3}$:}
      \triple{$(1,2,3)(4,5,6)$}{$\id$}{$(1,2,3)(4,5,6)$.}
    \item \head{$\underline{3}\cup\underline{5}$:}
      \triple{$(1,2,3)(4,5,6,7,8)$}{$(1,3,2)(6,7,8)$}{$(1,3,2)(4,5,8)$.}
    \item \head{$\underline{3}\cup(\underline{2\cup2})$:}
      \triple{$(1,2,3)(4,5)(6,7)$}{$(1,3,2)(4,7,6)$}{$(1,3,2)(4,5,6)$.}
    \item \head{$\underline{3}\cup(\underline{2\cup4})$:}
      \triple{$(1,2,3)(4,5)(6,7,8,9)$}{$(1,2,3)(4,7,6)$}{$(4,5,6)(7,8,9)$.}
    \item \head{$\underline{5}\cup\underline{5}$:}
      \triple{$(1,2,3,4,5)(6,7,8,9,10)$}{$\id$}{$(1,2,3,4,5)(6,7,8,9,10)$.}
    \item \head{$\underline{5}\cup(\underline{2\cup2})$:}
      \triple{$(1,2,3,4,5)(6,7)(8,9)$}{$(2,3,4)(6,8,9)$}{$(1,4,5)(6,7,9)$.}
    \item \head{$\underline{5}\cup(\underline{2\cup4})$:}
      \triple{$(1,2,3,4,5)(6,7)(8,9,10,11)$}{$(2,5)(3,4)(8,11)(9,10)$}{$(1,5)(2,4)(6,7)(8,10)$.}
    \item \head{$(\underline{2\cup2})\cup(\underline{2\cup2})$:}
      \triple{$(1,2)(3,4)(5,6)(7,8)$}{$\id$}{$(1,2)(3,4)(5,6)(7,8)$.}
    \item \head{$(\underline{2\cup2})\cup(\underline{2\cup4})$:}
      \triple{$(1,2)(3,4)(5,6)(7,8,9,10)$}{$(1,4)(2,3)(7,10)(8,9)$}{$(1,3)(2,4)(5,6)(7,9)$.}
    \item \head{$(\underline{2\cup4})\cup(\underline{2\cup4})$:}
      \triple{$(1,2)(3,4,5,6)(7,8)(9,10,11,12)$}{$\id$}{$(1,2)(3,4,5,6)(7,8)(9,10,11,12)$.}
    \end{itemize}

  \item[(3)] {\bf Decomposition of $\uA\cup A$ (40 cases)}

    \sethead{$\underline{3}\cup(\underline{2\cup \plus{6}})$:}

    \begin{itemize}
    \item \head{$\underline{3}\cup\plus{7}$:}
      \triple{$(1,2,3)(\bulA{4},5,6,\bulA{7},8,9,10)$}{$(1,2,3)(5,6,9)$}{$(\bulA{4},9,10)(6,\bulA{7},8)$.}
    \item \head{$\underline{3}\cup\plus{9}$:}
      \triple{$(1,2,3)(4,\bulA{5},6,7,8,9,\bulA{10},11,12)$}{$(2,6,4)(8,9,10)$}{$(1,4,\bulA{5},2,3)(6,7,\bulA{10},11,12)$.}
    \item \head{$\underline{3}\cup(2\cup\plus{6})$:}
      \triple{$(1,2,3)(4,5)(\bulA{6},7,8,\bulA{9},10,11)$}{$(1,2)(4,5)(7,11)(8,10)$}{$(2,3)(\bulA{6},11)(7,10)(8,\bulA{9})$.}
    \item \head{$\underline{3}\cup(2\cup\plus{8})$:}
      \triple{$(1,2,3)(4,5)(\bulA{6},7,8,\bulA{9},10,11,12,13)$}{$(1,2,3)(4,5,6)(7,8,13)(10,11,12)$}{$(5,\bulA{6},13)(8,\bulA{9},12)$.}
    \item \head{$\underline{3}\cup(4\cup4)$:}
      \triple{$(1,2,3)(4,5,6,7)(8,9,10,11)$}{$(1,2)(4,5)(6,7)(9,11)$}{$(2,3)(5,7)(8,11)(9,10)$.}
    \item \head{$\underline{3}\cup(4\cup\plus{6})$:}
      \triple{$(1,2,3)(4,5,6,7)(\bulA{8},9,\bulA{10},11,12,13)$}{$(1,2,3)(4,9,8)$}{$(4,5,6,7,\bulA{8})(9,\bulA{10},11,12,13)$.}
    \item \head{$\underline{3}\cup(4\cup\plus{8})$:}
      \triple{$(1,2,3)(4,5,6,7)(\bulA{8},9,\bulA{10},11,12,13,14,15)$}{$(1,2)(4,5)(6,7)(10,14)$}{$(2,3)(5,7)(\bulA{8},9,14,15)(\bulA{10},11,12,13)$.}
    \item \head{$\underline{3}\cup(\plus{6}\cup\plus{6})$:}
      \triple{$(1,2,3)(4,5,\bulA{6},\bulA{7},8,9)(\bulB{10},\bulB{11},12,13,14,15)$}{$(1,2,3)(4,5,6)(7,9)(12,13)(11,14)(10,15)$}{$(\bulA{6},9)(\bulA{7},8)(\bulB{10},14)(\bulB{11},13)$.}
    \item \head{$\underline{3}\cup(\plus{6}\cup\plus{8})$:}
      \triple{$(1,2,3)(\bulA{4},\bulA{5},6,7,8,9)(10,\bulB{11},\bulB{12},13,14,15,16,17)$}{$(1,2,3)(6,9,8)(5,7)(10,11)(12,16)(13,14)$}{$(\bulA{4},7,9)(\bulA{5},8,6)(\bulB{11},16,17)(\bulB{12},14,15)$.}
    \item \head{$\underline{3}\cup(\plus{8}\cup\plus{8})$:}
      \triple{$(1,2,3)(4,\bulA{5},\bulA{6},7,8,9,10,11)(\bulB{12},\bulB{13},14,15,16,17,18,19)$}{$(1,2)(4,5)(6,11)(7,10)(8,9)(13,19)(14,18)(15,17)$}{$(2,3)(\bulA{5},11)(\bulA{6},10)(7,9)(\bulB{12},19)(\bulB{13},18)(14,17)(15,16)$.}
    \item \head{$\underline{5}\cup\plus{7}$:}
      \triple{$(1,2,3,4,5)(6,7,8,9,10,\bulA{11},\bulA{12})$}{$(3,5,4)(6,11,12)$}{$(1,2,4,3,5)(6,7,8,9,10)(\bulA{11})(\bulA{12})$.}
    \item \head{$\underline{5}\cup\plus{9}$:}
      \triple{$(1,2,3,4,5)(\bulA{6},7,8,\bulA{9},10,11,12,13,14)$}{$(1,2,3,4,5)(7,8,11,12,13)$}{$(\bulA{6},13,14)(8,\bulA{9},10)$.}
    \item \head{$\underline{5}\cup(2\cup\plus{6})$:}
      \triple{$(1,2,3,4,5)(6,7)(8,9,\bulA{10},\bulA{11},12,13)$}{$(2,6,7,3,5)(9,10,11,12,13)$}{$(1,5)(2,7)(3,4)(8,13)(\bulA{10})(\bulA{11})$.}
    \item \head{$\underline{5}\cup(2\cup\plus{8})$:}
      \triple{$(1,2,3,4,5)(6,7)(8,9,\bulA{10},11,\bulA{12},13,14,15)$}{$(2,5)(3,4)(6,7)(8,12)$}{$(1,5)(2,4)(8,9,\bulA{10},11)(\bulA{12},13,14,15)$.}
    \item \head{$\underline{5}\cup(4\cup4)$:}
      \triple{$(1,2,3,4,5)(6,7,8,9)(10,11,12,13)$}{$(2,3,5)(11,13,12)$}{$(1,5)(3,4)(6,7,8,9)(10,12,11,13)$.}
    \item \head{$\underline{5}\cup(4\cup\plus{6})$:}
      \triple{$(1,2,3,4,5)(6,7,8,9)(10,\bulA{11},12,13,\bulA{14},15)$}{$(2,5)(3,4)(7,9)(10,13)$}{$(1,5)(2,4)(6,9)(7,8)(10,\bulA{11},12)(13,\bulA{14},15)$.}
    \item \head{$\underline{5}\cup(4\cup\plus{8})$:}
      \triple{$(1,2,3,4,5)(6,7,8,9)(10,\bulA{11},12,13,14,\bulA{15},16,17)$}{$(1,2)(4,5)(6,7,9,8)(10,13,14,17)$}{$(2,3,5)(7,9,8)(10,\bulA{11},12)(14,\bulA{15},16)$.}
    \item \head{$\underline{5}\cup(\plus{6}\cup\plus{6})$:}
      \triple{$(1,2,3,4,5)(\bulA{6},\bulA{7},8,9,10,11)(\bulB{12},\bulB{13},14,15,16,17)$}{$(2,3,5)(6,7,8)(15,17,16)(12,13,14)$}{$(1,5)(3,4)(8,9,10,11)(\bulA{6})(\bulA{7})(14,16,15,17)(\bulB{12})(\bulB{13})$.}
    \item \head{$\underline{5}\cup(\plus{6}\cup\plus{8})$:}
      \triple{$(1,2,3,4,5)(6,\bulA{7},8,9,\bulA{10},11)(\bulB{12},13,14,15,\bulB{16},17,18,19)$}{$(2,5)(3,4)(6,9)(12,16)$}{$(1,5)(2,4)(6,\bulA{7},8)(9,\bulA{10},11)(\bulB{12},13,14,15)(\bulB{16},17,18,19)$.}
    \item \head{$\underline{5}\cup(\plus{8}\cup\plus{8})$:}
      \triple{$(1,2,3,4,5)(\bulA{6},7,8,9,\bulA{10},11,12,13)(\bulB{14},15,16,\bulB{17},18,19,20,21)$}{$(1,2)(4,5)(7,9,10,12,8,13)(15,16,17,19,20,21)$}{$(2,3,5)(\bulA{6},13)(7,12,8)(\bulA{10},11)(\bulB{14},21)(\bulB{17},18)$.}
    \item \head{$(\underline{2\cup2})\cup\plus{7}$:}
      \triple{$(1,2)(3,4)(5,\bulA{6},\bulA{7},8,9,10,11)$}{$(1,2)(3,4)(5,6)(7,10)$}{$(\bulA{6},10,11)(\bulA{7},8,9)$.}
    \item \head{$(\underline{2\cup2})\cup\plus{9}$:}
      \triple{$(1,2)(3,4)(\bulA{5},6,7,\bulA{8},9,10,11,12,13)$}{$(1,4,3)(5,8,11)$}{$(1,2,3)(\bulA{5},6,7)(\bulA{8},9,10)(11,12,13)$.}
    \item \head{$(\underline{2\cup2})\cup(2\cup\plus{6})$:}
      \triple{$(1,2)(3,4)(5,6)(\bulA{7},8,9,\bulA{10},11,12)$}{$(1,2)(3,4)(5,6)(7,10)$}{$(\bulA{7},8,9)(\bulA{10},11,12)$.}
    \item \head{$(\underline{2\cup2})\cup(2\cup\plus{8})$:}
      \triple{$(1,2)(3,4)(5,6)(\bulA{7},8,9,10,\bulA{11},12,13,14)$}{$(1,4)(2,3)(7,10)(11,14)$}{$(1,3)(2,4)(5,6)(\bulA{7},8,9)(10,14)(\bulA{11},12,13)$.}
    \item \head{$(\underline{2\cup2})\cup(4\cup4)$:}
      same as $(\underline{2\cup4})\cup(\underline{2\cup4})$.
    \item \head{$(\underline{2\cup2})\cup(4\cup\plus{6})$:}
      \triple{$(1,2)(3,4)(5,6,7,8)(\bulA{9},10,11,\bulA{12},13,14)$}{$(1,3)(2,4)(6,8)(9,12)$}{$(1,4)(2,3)(5,8)(6,7)(\bulA{9},10,11)(\bulA{12},13,14)$.}
    \item \head{$(\underline{2\cup2})\cup(4\cup\plus{8})$:}
      \triple{$(1,2)(3,4)(5,6,7,8)(\bulA{9},10,11,12,\bulA{13},14,15,16)$}{$(1,2)(3,4)(5,6,7,8)(9,12,13,16)$}{$(\bulA{9},10,11)(\bulA{13},14,15)$.}
    \item \head{$(\underline{2\cup2})\cup(\plus{6}\cup\plus{6})$:}
      \triple{$(1,2)(3,4)(\bulA{5},6,7,\bulA{8},9,10)(\bulB{11},12,13,\bulB{14},15,16)$}{$(1,2)(3,4)(5,8)(11,14)$}{$(\bulA{5},6,7)(\bulA{8},9,10)(\bulB{11},12,13)(\bulB{14},15,16)$.}
    \item \head{$(\underline{2\cup2})\cup(\plus{6}\cup\plus{8})$:}
      \triple{$(1,2)(3,4)(\bulA{5},6,7,\bulA{8},9,10)(11,12,\bulB{13},\bulB{14},15,16,17,18)$}{$(1,4)(2,3)(5,8)(11,12,13)(14,18)(15,16,17)$}{$(1,3)(2,4)(\bulA{5},6,7)(\bulA{8},9,10)(\bulB{13},18)(\bulB{14},17)$.}
    \item \head{$(\underline{2\cup2})\cup(\plus{8}\cup\plus{8})$:}
      \triple{$(1,2)(3,4)(\bulA{5},6,7,8,\bulA{9},10,11,12)(\bulB{13},14,15,16,\bulB{17},18,19,20)$}{$(1,2)(3,4)(5,8,9,12)(13,16,17,20)$}{$(\bulA{5},6,7)(\bulA{9},10,11)(\bulB{13},14,15)(\bulB{17},18,19)$.}
    \item \head{$(\underline{2\cup4})\cup\plus{7}$:}
      \triple{$(1,2)(3,4,5,6)(\bulA{7},8,9,10,\bulA{11},12,13)$}{$(2,3,6)(7,10,11)$}{$(1,6,2)(3,4,5)(\bulA{7},8,9)(\bulA{11},12,13)$.}
    \item \head{$(\underline{2\cup4})\cup\plus{9}$:}
      \triple{$(1,2)(3,4,5,6)(7,\bulA{8},9,10,\bulA{11},12,13,14,15)$}{$(3,4)(5,6)(7,8)(11,15)$}{$(1,2)(4,6)(\bulA{8},9,10,15)(\bulA{11},12,13,14)$.}
    \item \head{$(\underline{2\cup4})\cup(2\cup\plus{6})$:}
      same as $(\underline{2\cup2})\cup(4\cup\plus{6})$.
    \item \head{$(\underline{2\cup4})\cup(2\cup\plus{8})$:}
      same as $(\underline{2\cup2})\cup(4\cup\plus{8})$.
    \item \head{$(\underline{2\cup4})\cup(4\cup4)$:}
      \triple{$(1,2)(3,4,5,6)(7,8,9,10)(11,12,13,14)$}{$(3,6,5)(7,8,9)$}{$(1,2)(3,4,6,5)(9,10)(11,12,13,14)$.}
    \item \head{$(\underline{2\cup4})\cup(4\cup\plus{6})$:}
      \triple{$(1,2)(3,4,5,6)(7,8,9,10)(\bulA{11},12,13,\bulA{14},15,16)$}{$(1,2)(3,4,5,6)(7,8,9,10)(11,14)$}{$(\bulA{11},12,13)(\bulA{14},15,16)$.}
    \item \head{$(\underline{2\cup4})\cup(4\cup\plus{8})$:}
      \triple{$(1,2)(3,4,5,6)(7,8,9,10)(\bulA{11},12,13,14,\bulA{15},16,17,18)$}{$(1,2)(3,5,6,4)(7,8)(11,14,15,18)$}{$(3,6,4)(8,9,10)(\bulA{11},12,13)(\bulA{15},16,17)$.}
    \item \head{$(\underline{2\cup4})\cup(\plus{6}\cup\plus{6})$:}
      \triple{$(1,2)(3,4,5,6)(7,8,9,10,\bulA{11},\bulA{12})(\bulB{13},14,15,\bulB{16},17,18)$}{$(1,2)(3,5,6,4)(7,10,11,12)(13,16)$}{$(3,6,4)(7,8,9)(\bulA{11})(\bulA{12})(\bulB{13},14,15)(\bulB{16},17,18)$.}
    \item \head{$(\underline{2\cup4})\cup(\plus{6}\cup\plus{8})$:}
      \triple{$(1,2)(3,4,5,6)(\bulA{7},8,9,\bulA{10},11,12)(\bulB{13},14,15,16,\bulB{17},18,19,20)$}{$(1,2)(3,4,5,6)(7,10)(13,16,17,20)$}{$(\bulA{7},8,9)(\bulA{10},11,12)(\bulB{13},14,15)(\bulB{17},18,19)$.}
    \item \head{$(\underline{2\cup4})\cup(\plus{8}\cup\plus{8})$:}
      \triple{$(1,2)(3,4,5,6)(7,8,\bulA{9},10,11,\bulA{12},13,14)(\bulB{15},16,17,18,\bulB{19},20,21,22)$}{$(1,2)(3,5,6,4)(7,8)(\bulA{9},10,11)(\bulA{12},13,14)(15,18,19,22)$}{$(3,6,4)(8,11,14)(\bulB{15},16,17)(\bulB{19},20,21)$.}
    \end{itemize}

  \item[(4)] {\bf Decomposition of $\uA\cup\uA\cup\uA$ (20 cases)}

    \sethead{$\underline{3}\cup\underline{3}\cup(\underline{2\cup 2})$:}

    \begin{itemize}
    \item \head{$\underline{3}\cup\underline{3}\cup\underline{3}$:}
      \triple{$(1,2,3)(4,5,6)(7,8,9)$}{$(1,3,2)(4,5,6)$}{$(1,3,2)(7,8,9)$.}
    \item \head{$\underline{3}\cup\underline{3}\cup\underline{5}$:}
      \triple{$(1,2,3)(4,5,6)(7,8,9,10,11)$}{$(1,2,3)(7,8,9)$}{$(4,5,6)(9,10,11)$.}
    \item \head{$\underline{3}\cup\underline{3}\cup(\underline{2\cup2})$:}
      \triple{$(1,2,3)(4,5,6)(7,8)(9,10)$}{$(1,2)(4,5)(7,10)(8,9)$}{$(2,3)(5,6)(7,9)(8,10)$.}
    \item \head{$\underline{3}\cup\underline{3}\cup(\underline{2\cup4})$:}
      \triple{$(1,2,3)(4,5,6)(7,8)(9,10,11,12)$}{$(1,2)(4,5)(7,8)(9,11)$}{$(2,3)(5,6)(9,10)(11,12)$.}
    \item \head{$\underline{3}\cup\underline{5}\cup\underline{5}$:}
      \triple{$(1,2,3)(4,5,6,7,8)(9,10,11,12,13)$}{$(4,5,6,8,7)(9,10,11,12,13)$}{$(1,2,3)(6,8,7)$.}
    \item \head{$\underline{3}\cup\underline{5}\cup(\underline{2\cup2})$:}
      \triple{$(1,2,3)(4,5,6,7,8)(9,10)(11,12)$}{$(1,2)(4,5)(6,8)(9,10)$}{$(2,3)(5,8)(6,7)(11,12)$.}
    \item \head{$\underline{3}\cup\underline{5}\cup(\underline{2\cup4})$:}
      \triple{$(1,2,3)(4,5,6,7,8)(9,10)(11,12,13,14)$}{$(1,5,4)(11,12,13)$}{$(1,2,3,4)(5,6,7,8)(9,10)(13,14)$.}
    \item \head{$\underline{3}\cup(\underline{2\cup2})\cup(\underline{2\cup2})$:}
      \triple{$(1,2,3)(4,5)(6,7)(8,9)(10,11)$}{$(1,2)(4,5)(8,10)(9,11)$}{$(2,3)(6,7)(8,11)(9,10)$.}
    \item \head{$\underline{3}\cup(\underline{2\cup2})\cup(\underline{2\cup4})$:}
      \triple{$(1,2,3)(4,5)(6,7)(8,9)(10,11,12,13)$}{$(1,2)(4,5)(8,9)(10,12)$}{$(2,3)(6,7)(10,11)(12,13)$.}
    \item \head{$\underline{3}\cup(\underline{2\cup4})\cup(\underline{2\cup4})$:}
      \triple{$(1,2,3)(4,5)(6,7,8,9)(10,11)(12,13,14,15)$}{$(1,2,3)(4,10)(5,11)(6,7)(8,9)(12,13,14)$}{$(4,11)(5,10)(7,9)(14,15)$.}
    \item \head{$\underline{5}\cup\underline{5}\cup\underline{5}$:}
      \triple{$(1,2,3,4,5)(6,7,8,9,10)(11,12,13,14,15)$}{$(1,4,2,5,3)(6,7,8,9,10)$}{$(1,4,2,5,3)(11,12,13,14,15)$.}
    \item \head{$\underline{5}\cup\underline{5}\cup(\underline{2\cup2})$:}
      \triple{$(1,2,3,4,5)(6,7,8,9,10)(11,12)(13,14)$}{$(1,2,3,4)(6,7,8,9)(11,13)(12,14)$}{$(4,5)(9,10)(11,14)(12,13)$.}
    \item \head{$\underline{5}\cup\underline{5}\cup(\underline{2\cup4})$:}
      \triple{$(1,2,3,4,5)(6,7,8,9,10)(11,12)(13,14,15,16)$}{$(1,2,3,4)(6,7,8,9)(11,12)(13,15)$}{$(4,5)(9,10)(13,14)(15,16)$.}
    \item \head{$\underline{5}\cup(\underline{2\cup2})\cup(\underline{2\cup2})$:}
      \triple{$(1,2,3,4,5)(6,7)(8,9)(10,11)(12,13)$}{$(1,2)(3,5)(6,7)(10,11)$}{$(2,5)(3,4)(8,9)(12,13)$.}
    \item \head{$\underline{5}\cup(\underline{2\cup2})\cup(\underline{2\cup4})$:}
      \triple{$(1,2,3,4,5)(6,7)(8,9)(10,11)(12,13,14,15)$}{$(1,3,5)(11,12,15)$}{$(1,2)(3,4)(6,7)(8,9)(10,15,11)(12,13,14)$.}
    \item \head{$\underline{5}\cup(\underline{2\cup4})\cup(\underline{2\cup4})$:}
      \triple{$(1,2,3,4,5)(6,7)(8,9,10,11)(12,13)(14,15,16,17)$}{$(1,2)(3,5)(8,9,10,11)(14,15,16,17)$}{$(2,5)(3,4)(6,7)(12,13)$.}
    \item \head{$(\underline{2\cup2})\cup(\underline{2\cup2})\cup(\underline{2\cup2})$:}
      \triple{$(1,2)(3,4)(5,6)(7,8)(9,10)(11,12)$}{$(1,3)(2,4)(5,6)(7,8)$}{$(1,4)(2,3)(9,10)(11,12)$.}
    \item \head{$(\underline{2\cup2})\cup(\underline{2\cup2})\cup(\underline{2\cup4})$:}
      \triple{$(1,2)(3,4)(5,6)(7,8)(9,10)(11,12,13,14)$}{$(1,2)(3,4)(5,6)(11,13)$}{$(7,8)(9,10)(11,12)(13,14)$.}
    \item \head{$(\underline{2\cup2})\cup(\underline{2\cup4})\cup(\underline{2\cup4})$:}
      \triple{$(1,2)(3,4)(5,6)(7,8,9,10)(11,12)(13,14,15,16)$}{$(1,3)(2,4)(5,6)(7,8)(9,10)(11,12)(13,14)(15,16)$}{$(1,4)(2,3)(8,10)(14,16)$.}
    \item \head{$(\underline{2\cup4})\cup(\underline{2\cup4})\cup(\underline{2\cup4})$:}
      \triple{$(1,2)(3,4,5,6)(7,8)(9,10,11,12)(13,14)(15,16,17,18)$}{$(3,4,5,6)(9,10,11,12)(15,16)(17,18)$}{$(1,2)(7,8)(13,14)(16,18)$.}
    \end{itemize}
  \end{enumerate}
\end{proof}

\begin{proof}[Proof of Lemma~\ref{lem-supplement}]
  Proposition~\ref{pro-decomposable} implies that every even
  permutation on a set of $7$ or more elements can be written as a
  product of two evenly balanced permutations; moreover, on a set of
  $6$ elements, the only exception is the special atom
  $\underline{2\cup 4}$. This is exactly the content of
  Lemma~\ref{lem-supplement}.
\end{proof}

\section{Conclusion and further work}

We have shown that every even permutation of $2\times A\times 2$ has
alternation depth 9 and even alternation depth 13. The bounds of 9 and
13 are probably not tight. The constructions of
Sections~\ref{sec-first} and {\ref{sec-second}} have many degrees of
freedom, making it plausible that a tighter bound on alternation depth
can be found. The best lower bound for alternation depth known to the
author is 5. An exhaustive search shows that for $A=\s{a,b,c}$, a
3-cycle with support $\s{0}\times A\times \s{0}$ cannot be written
with alternation depth 4. Of course, this particular permutation can
be realized with alternation depth 5 by
Proposition~\ref{pro-construction-1} (and in fact, its even
alternation depth is also 5).

It is reasonable to conjecture that there is nothing special about the
number $2$ in Theorems~\ref{thm-main} and {\ref{thm-even-main}}.
Specifically, if $N$ and $M$ are finite sets, one may conjecture that
there exists a finite bound on the alternation depth of all
permutations $\sigma\in\S(N\times A\times M)$ (or all even
permutations, when $N$ and $M$ are even), for large enough $A$,
independently of the size of $A$.

\section{Acknowledgments}

This work was supported by the Natural Sciences and Engineering
Research Council of Canada (NSERC) and by the Air Force Office of
Scientific Research, Air Force Material Command, USAF under Award
No. FA9550-15-1-0331. This work was done in part while the author
was visiting the Simons Institute for the Theory of Computing.

\bibliographystyle{abbrv}
\footnotesize
\bibliography{alternation}

\end{document}